\newtheorem{thm}{Theorem}[section]
\newtheorem{cor}[thm]{Corollary}
\newtheorem{lem}[thm]{Lemma}
\newtheorem{prop}[thm]{Proposition}
\theoremstyle{definition}
\newtheorem{defn}[thm]{Definition}
\newtheorem{ass}[thm]{Assumption}
\theoremstyle{remark}
\newtheorem{rem}[thm]{Remark}
\numberwithin{equation}{section}
\newcommand{\Real}{\mathbb R}
\newcommand{\Natural}{\mathbb N}
\newcommand{\Integer}{\mathbb Z}
\newcommand{\cadlag}{c\`adl\`ag}
\newcommand{\F}{\mathcal{F}}
\newcommand{\cL}{\mathcal{L}}
\newcommand{\prob}{\mathbb{P}}
\newcommand{\qprob}{\mathbb{Q}}
\newcommand{\expec}{\mathbb{E}}
\newcommand{\indic}{\mathbb{I}}
\newcommand{\pare}[1]{\left(#1\right)}
\newcommand{\bra}[1]{\left[#1\right]}
\newcommand{\such}{\, | \, }
\newcommand{\tv}{\tilde{v}}
\newcommand{\wc}{\stackrel{\mathcal{L}}{\longrightarrow}}
\newcommand{\dm}{\underline{m}_n}
\newcommand{\um}{\overline{m}_n}
\newcommand{\dmd}{\underline{m}_n^\delta}
\newcommand{\umd}{\overline{m}_n^\delta}
\newcommand{\dId}{\underline{I}^{\delta,n}}
\newcommand{\uId}{\overline{I}^{\delta,n}}
\newcommand{\uLd}{\cL^{\delta, \umd}}
\newcommand{\dLd}{\cL^{\delta, \dmd}}
\title{Asymptotic Glosten Milgrom equilibrium}\thanks{The authors are grateful to Luciano Campi, Umut {\c{C}}etin, as well as the anonymous Associate Editor and two referees for their valuable comments which help us improving this paper.}
\author[]{Cheng Li}
\author[]{Hao Xing}
\address[Cheng Li and Hao Xing]{Department of Statistics,
London School of Economics and Political Science,
10 Houghton st,
London, WC2A 2AE,
UK}
\email{c.li25@lse.ac.uk, h.xing@lse.ac.uk}
\begin{document}

\begin{abstract}
 This paper studies the Glosten Milgrom model whose risky asset value admits an arbitrary discrete distribution. Contrast to existing results on insider's models, the insider's optimal strategy in this model, if exists, is not of feedback type. Therefore a weak formulation of equilibrium is proposed. In this weak formulation, the inconspicuous trade theorem still holds, but the optimality for the insider's strategy is not enforced. However, the insider can employ some feedback strategy whose associated expected profit is close to the optimal value, when the order size is small. Moreover this discrepancy converges to zero when the order size diminishes. The existence of such a weak equilibrium is established, in which the insider's strategy converges to the Kyle optimal strategy when the order size goes to zero.
\end{abstract}

\keywords{Glosten Milgrom model, Kyle model, nonexistence, occupation time, weak convergence}

\maketitle

\section{Introduction} \label{sec: intro}

In the theory of market microstructure, two models, due to Kyle \cite{Kyle} and Glosten and Milgrom \cite{Glosten-Milgrom}, are particularly influential. In the Kyle model, buy and sell orders are batched together by a market maker, who sets a unique price at each auction date. In the Glosten Milgrom model, buy and sell orders are executed by the market maker individually, hence bid and ask prices appear naturally. In both models\footnote{A profit maximizing informed agent is introduced in the Glosten Milgrom model in \cite{Back-Baruch}.}, an informed agent (insider) trades to maximize her expected profit utilizing her private information on the asset fundamental value, while another group of noise traders trade independently of the fundamental value. The cumulative demand of these noise traders is modeled by a Brownian motion in Kyle model, cf. \cite{Back}, and by the difference of two independent Poisson processes, whose jump size is scaled by the order size, in the Glosten Milgrom model.

When the fundamental value, described by a random variable $\tv$, has an arbitrary continuous distribution\footnote{Models with discrete distributed $\tv$ can be studied similarly as in \cite{Back}.}, Back \cite{Back} establishes a unique equilibrium between the insider and the market maker. Moreover, the cumulative demand process in the equilibrium connects elegantly to the theory of filtration enlargement, cf. \cite{Mansuy-Yor}. However much less is known about equilibrium in the Glosten Milgrom model. Back and Baruch \cite{Back-Baruch} consider a Bernoulli distributed $\tv$. In this case, the insider's optimal strategy is constructed in \cite{Cetin-Xing}. Equilibrium with general distribution of $\tv$, as Cho \cite{Cho} puts it, ``will be a great challenge to consider''.

In this paper, we consider the Glosten Milgrom model whose risky asset value $\tv$ has a discrete distribution:
\begin{equation}\label{eq: v dist}
 \prob(\tilde{v}=v_n) =p_n, \quad n=1, \cdots, N,
\end{equation}
where $N\in \Natural \cup \{\infty\}$, $(v_n)_{n=1, \cdots, N}$ is an increasing sequence and $p_n\in (0,1)$ with $\sum_{n=1}^N p_n=1$.  This generalizes the setting in \cite{Back-Baruch} where $N=2$ is considered, i.e., $\tv$ has a Bernoulli distribution.

In models of insider trading, \emph{inconspicuous trade theorem} is commonly observed, cf. e.g., \cite{Kyle}, \cite{Back}, \cite{BP}, \cite{Back-Baruch}, \cite{Campi-Cetin}, and \cite{Campi-Cetin-Danilova} for equilibria of Kyle type, and \cite{Cetin-Xing} for the Glosten Milgrom equilibrium with Bernoulli distributed fundamental value. The inconspicuous trade theorem states, when the insider is trading optimally in equilibrium, the cumulative net orders from both insider and noise traders have the same distribution as the net orders from noise traders, i.e., the insider is able to hide her trades among noise trades. As a consequence, this allows the market maker to set the trading price only considering current cumulative noise trades. Moreover, in all aforementioned studies, the insider's optimal strategy is of \emph{feedback form}, which only depends on the current cumulative total order. This functional form is associated to optimizers of the Hamilton-Jacobi-Bellman (HJB) equation for the insider's optimization problem. However the situation is dramatically different in the Glosten Milgrom model with $N$ in \eqref{eq: v dist} at least $3$. Theorem \ref{thm: nonexistence} below shows that, given aforementioned pricing mechanism, the insider's optimal strategy, if exists, does \emph{not} correspond to optimizers of the HJB equation. This result roots in the difference between bid and ask prices in the Glosten Milgrom model, which is contrast to the unique price in the Kyle model.

Therefore to establish equilibrium in these Glosten Milgrom models, we propose a weak formulation of equilibrium in Definition \ref{def: Asy GM eq}, which is motivated by the convergence of Glosten Milgrom equilibria to the Kyle equilibrium, as the order size diminishing and the trading intensities increasing to infinity, cf. \cite{Back-Baruch} and \cite{Cetin-Xing}. In this weak formulation, the insider still trades to enforce the inconspicuous trading theorem, but the insider's strategy may not be optimal. However, the insider can employ some feedback strategy so that the loss to her expected profit (compared to the optimal value) is small for small order size. Moreover this gap converges to zero when the order size vanishes. We call this weak formulation \emph{asymptotic Glosten Milgrom equilibrium} and establish its existence in Theorem \ref{thm: main}.

In the asymptotic Glosten Milgrom equilibrium, the insider's strategy is constructed explicitly in Section \ref{sec: point process bridge}, using a similar construction as in \cite{Cetin-Xing}. Using this strategy, the insider trades towards a middle level of an interval, driving the total demand process into this interval at the terminal date. This bridge behavior is widely observed in the aforementioned studies on insider trading. On the other hand, the insider's strategy is of feedback form. Hence the insider can determine her trading intensity only using the current cumulative total demand. Moreover, as order size diminishes, the family of suboptimal strategies converge to the optimal strategy in Kyle model, cf. Theorem \ref{thm: str conv}. In such an asymptotic Glosten Milgrom equilibrium, the insider loses some expected profit. The expression of this profit loss is quite interesting mathematically: it is the difference of two stochastic integrals with respect to (scaled) Poisson occupation time. As the order size vanishes, both integrals converge to the same stochastic integral with respect to Brownian local time, hence their difference vanishes.

The paper is organized as follows. Main results are presented in Section \ref{sec: main results}. The mismatch between insider's optimal strategy and optimizers for the HJB equation is proved in Section \ref{sec: nonexistence}. Then a family of suboptimal strategies are characterized and constructed in Sections \ref{sec: suboptimal} and \ref{sec: point process bridge}. Finally the existence of asymptotic equilibrium is established in Section \ref{sec: convergence} and a technical result is proved in Appendix.

\section{Main results} \label{sec: main results}
\subsection{The model}

We consider a continuous time market for a risky and a risk free asset. The risk free interest rate is normalized to $0$, i.e., the risk free asset is regarded as the num\'{e}raire. We assume that the \emph{fundamental value} of the risky asset $\tv$ has a discrete distribution of type \eqref{eq: v dist}. This fundamental value will be revealed to all market participants at a finite time horizon, say $1$, at which point the market will terminate.

The micro-structure of the market and the interaction of market participants are modeled similarly to \cite{Back-Baruch} which we recall below. There are three types of agents: uninformed/noise traders, an informed trader/insider, and a market maker, all of whom are risk neutral. These agents share the same view toward future randomness of the market, but they possess different information. Therefore, the probability space $(\Omega, \prob)$ with different filtration  accommodates the following processes:

\begin{itemize}
\item \emph{Noise traders} trade for liquidity or hedging reasons which are independent of the fundamental value $\tv$. The cumulative demand $Z$ is described by the difference of two independent jump processes $Z^B$ and $Z^S$ which are the cumulative buy and sell orders, respectively. Therefore $Z=Z^B-Z^S$ and it is independent of $\tv$. Noise traders only submit orders of fixed sized $\delta$ every time they trade. As in \cite{Back-Baruch}, $Z^B/\delta$ and $Z^S/\delta$ are assumed to be independent Poisson processes with constant intensity $\beta$. Let $(\F_t^Z)_{t \in [0,1]}$ be the smallest filtration generated by $Z$ and satisfying the usual conditions. Then $(\F^Z_t)_{t \in [0,1]}$ describes the information structure of noise traders.

\item The \emph{insider} knows the fundamental value $\tilde{v}$ at time 0 and observes the market price for the risky asset between time 0 and 1. The insider also submits orders of fixed size $\delta$ in every trade and tries to maximize her expected profit. The cumulative demand from the insider is denoted by $X:= X^B - X^S$ where $X^B$ and $X^S$ are cumulative buy and sell orders respectively. Since the insider observes the market price of the risky asset, she can back out the dynamics of noise orders, cf. discussions after Definition \ref{def: price rule}. Therefore the information structure of the insider $\F^I_t$ includes $\F^Z_t$ and $\sigma(\tilde{v})$, for any $t \in [0,1]$.

\item A competitive \emph{market maker} only observes the aggregation of the informed and noise trades, so he cannot distinguish between informed and noise trades. Given $Y:=X+Z$, the information of the market maker is $(\F^Y_t)_{t \in [0,1]}$ generated by $Y$ and satisfying the usual conditions. As the market maker is risk neutral, the competition will force him to set the market price as $\expec[\tilde{v}|\F^Y_t]$, $t \in [0,1]$.
\end{itemize}

In order to define equilibrium in the market, let us first describe admissible actions for the market maker and the insider. The market maker looks for a Markovian pricing mechanism, in which the price of the risky asset at time $t$ is set using cumulative order $Y_t$ and a pricing rule $p$.

\begin{defn}\label{def: price rule}
A function $p:\delta \Integer \times [0,1] \to \Real $ is a \emph{pricing rule} if
\begin{enumerate}[i)]
 \item $y \mapsto p(y, t)$ is strictly increasing for each $t \in [0,1)$;
 \item $\lim_{y\rightarrow -\infty} p(y,t) =v_1$ and $\lim_{y\rightarrow \infty} p(y,t) =v_N$ for each $t\in [0,1]$;
 \item $t\mapsto p(y,t)$ is continuous for each $y \in \delta\Integer$.
\end{enumerate}
\end{defn}

The monotonicity of $y\mapsto p(y, t)$ in i) is natural. It implies that the market price is higher whenever the demand is higher. Moreover, because of the monotonicity,  the insider fully observes the uninformed orders $Z$ by inverting the price process and subtracting her orders from the total orders. Item ii) means that the range of the pricing rule is wide enough to price in every possibility of fundamental value.

The insider trades to maximize her expected profit. Her admissible strategy is defined as follows:

\begin{defn}\label{def: insider ad}
The strategy $(X^B, X^S; \F^I)$ is \emph{admissible}, if
\begin{enumerate}[i)]
  \item $\F^I$ is a filtration satisfying the usual conditions and generated by $\sigma(\tv)$, $\F^Z$, and $\mathcal{H}$, where $(\mathcal{H}_t)_{t\in [0,1]}$ is a filtration independent of $\tv$ and $\F^Z$;
  \item $X^B$ and $X^S$, with $X^B_0 = X^S_0 =0$, are $\F^I$-adapted and integrable\footnote{That is, $\expec[X^B_1]$ and $\expec[X^S_1]$ are both finite.} increasing point processes with jump size $\delta$;
  \item the $(\F^I, \prob)$-dual predictable projections of $X^B$ and $X^S$ are absolutely continuous with respect to time, hence $X^B$ and $X^S$ admit $\F^I-$intensities $\theta^B$ and $\theta^S$, respectively;
  \item $\expec\bra{\int_0^1 |p(Y_t, t)| \,|dX^i_t-\delta \theta^i_t dt|}<\infty$, for $i\in \{B,S\}$ and the pricing rule $p$ fixed by the market maker. Here $|X^i-\int_0^\cdot \delta \theta^i dt|$ is the variation of the compensated point process.
\end{enumerate}
\end{defn}

This set of admissible strategies is similar to \cite[Definition 2.2]{Cetin-Xing}. Item i) assumes that the insider is allowed to possess additional information $\mathcal{H}$, independent of $\tv$ and $\F^Z$, which she uses to generate her mixed strategy. Item iv) implies $\delta \expec[\int_0^1 |p(Y_t, t)| \, \theta^i_t dt
]<\infty$, hence the expected profit of the insider is finite. Item ii) does not exclude the insider trading at the same time with noise traders. When the insider submits an order at the same time when an uniformed order arrives but in the opposite direction, assuming the market maker only observes the net demand implies that such pair of trades goes unnoticed by the market maker. This pair of opposite orders will be executed without a need for a market maker. Hence the market maker only knows the transaction when there is a need for him. Henceforth, when the insider makes a trade at the same time with an uninformed trader but in an opposite direction, we say the insider \emph{cancels} the noise trades. On the other hand, item ii) also allows the insider to trade at the same time with noise traders in the same direction. We call that the insider \emph{tops up} noise orders in this situation. However, the insider does not submit such orders in equilibrium, even when equilibrium is defined in a weak sense, cf. Remark \ref{rem: no top up} below.
The assumption that the insider is allowed to trade at the same time as noise traders is different from assumptions for Kyle model where insider's strategy is predictable. This additional freedom for insider is not the source for Theorem \ref{thm: nonexistence} below, which states optimizers for the insider's HJB equation do not correspond to the optimal strategy; see Remark \ref{rem: trading same time} below.

As described in the last paragraph, the insider's cumulative buy orders may consist of three components: $X^{B,B}$ arrives at different time than those of $Z^B$, $X^{B,T}$ arrives at the same time as some orders of $Z^B$, and $X^{B,S}$ cancels some orders of $Z^S$. Sell orders $X^S$ are defined analogously. Therefore $X^B= X^{B,B} + X^{B,T}+X^{B,S}$ and $X^S= X^{S,S} + X^{S,T}+ X^{S,B}$.

As mentioned earlier, the insider aims to maximize her expected profit. Given an admissible trading strategy $X=X^B -X^S$ the associated profit at time 1 of the insider is given by
\[
	\int^1_0 X_{t-}dp(Y_t,t) + (\tilde{v}-p(Y_1,1))X_1.
\]
The last term appears due to a potential discrepancy between the market price and the liquidation value. Since X is of finite variation and $X_0=0$, applying integration by parts rewrites the profit as
\begin{align*}
 &\int^1_0(\tilde{v}-p(Y_t,t))\,dX^B_t - \int^1_0(\tilde{v}-p(Y_t,t))\,dX^S_t \\
 &=\quad \int^1_0(\tilde{v}-p(Y_{t-} +\delta,t))\,dX^{B,B}_t + \int_0^1 (\tilde{v} - p(Y_{t-} + 2\delta, t)) \, dX^{B,T}+ \int^1_0(\tilde{v}-p(Y_{t-},t))\,dX^{B,S}_t  \\
 &\quad -\int^1_0(\tilde{v}-p(Y_{t-} -\delta,t))\,dX^{S,S}_t -\int_0^1 (\tilde{v} - p(Y_{t-} - 2\delta, t)) \, dX^{S,T}- \int^1_0(\tilde{v}-p(Y_{t-},t))\,dX^{S,B}_t,
\end{align*}
where $Y$ increases (resp. decreases) $\delta$ when $X^{B,B}$ (resp. $X^{S,S}$) jumps by $\delta$, $Y$ increases (resp. decreases) $2\delta$ when $X^{B,T}$ (resp. $X^{S,T}$) jumps at the same time with $Z^B$ (resp. $Z^S$), and $Y$ is unchanged when $X^{S,B}$ (resp. $X^{B,S}$) jumps at the same time with $Z^B$ (resp. $Z^S$). Define
\[
	a(y,t):=p(y+\delta,t) \quad   \text{ and } \quad b(y,t):=p(y-\delta,t),
\]
which can be viewed as ask and bid prices respectively.
Then the expected profit of the insider conditional on her information can be expressed as
\begin{equation}\label{eq: profit}
\begin{split}
& \expec\left[\int^1_0(\tilde{v}-a(Y_{t-},t))\,dX_t^{B,B} + \int_0^1 (\tilde{v} - a(Y_{t-} + \delta, t)) \, dX^{B,T}_t + \int^1_0(\tilde{v}-p(Y_{t-},t))\,dX_t^{B,S} \right.\\
&\quad \left.-\int^1_0(\tilde{v}-b(Y_{t-},t))\,dX_t^{S,S} -\int_0^1 (\tilde{v}-b(Y_{t-}-\delta, t))\, dX^{S,T}_t - \int^1_0(\tilde{v}-p(Y_{t-},t))\,dX_t^{S,B} \Big| \tilde{v}\right].
\end{split}
\end{equation}

Having described the market structure, an equilibrium between the market maker and the insider is defined as in \cite{Back-Baruch}:

\begin{defn}\label{def: GM eq}
A \emph{Glosten Milgrom equilibrium} is a quadruplet $(p,X^B,X^S, \F^I)$  such that
\begin{enumerate}[i)]
\item given $(X^B, X^S; \F^I)$, $p$ is a rational pricing rule, i.e., $p(Y_t,t)=\expec[\tilde{v}|\F^Y_t]$ for $t \in [0,1]$;
\item given $p$, $(X^B, X^S; \F^I)$ is an admissible strategy maximizing \eqref{eq: profit}.
\end{enumerate}
\end{defn}

When $N=2$, \cite{Cetin-Xing} establishes the existence of Glosten Milgrom equilibria. In equilibrium the pricing rule is
\begin{equation}\label{eq: p func GM}
 p(y,t) = \expec^{\prob^y}\bra{P(Z_{1-t})}, \quad (y,t) \in \delta \Integer \times [0,1].
\end{equation}
Here $\prob^y$ is a probability measure under which $Z$ is the difference of two independent Poisson processes and $\prob^y(Z_0=y)=1$. $P$ is a nondecreasing function such that $P(Z_1)$ has the same distribution as $\tv$. Moreover the optimal strategy of the insider are given by jump processes $X^{i,j}$, $i \in \{B,S\}$ and $j\in \{B,T,S\}$, with intensities $\delta\, \theta^{i,j}(Y_{t-}, t)$, $t\in [0,1]$. These intensities are deterministic functions of the state variable $Y$, hence this control strategy  is a \emph{feedback control} and it corresponds to optimizers of insider's HJB equation. However, when $N\geq 3$, Theorem \ref{thm: nonexistence} below shows that, given the pricing rule \eqref{eq: p func GM}, the optimal strategy \emph{does not} correspond to optimizers in the HJB equation, for some values of $\tv$. This result is surprising, because it is contrast to existing results in the Kyle and Glosten Milgrom equilibrium; cf. \cite{Kyle}, \cite{Back}, \cite{BP}, \cite{Back-Baruch}, \cite{Campi-Cetin}, \cite{Campi-Cetin-Danilova}, and \cite{Cetin-Xing}. This mismatch roots in the discrete state space of the demand process in the Glosten Milgrom model. The discrete state space yields different  bid and ask prices, which is contrast to the unique price in the Kyle model. See Remark \ref{rem: nonexistence} below for more discussion.

\subsection{Nonexistence of a feedback optimal control}\label{subsec: nonexistence}
To state aforementioned result, we introduce several additional notations. For each $\delta>0$, let $\Omega^\delta = \mathbb{D}([0,1], \delta \Integer)$ be the space of $\delta \Integer$-valued \cadlag\, functions on $[0,1]$ with coordinate process $Z^\delta$, $(\F^{Z, \delta}_t)_{t\in [0,1]}$ is the minimal right continuous and complete filtration generated by $Z^\delta$, and $\prob^\delta$ is the probability measure under which $Z^\delta$ is the difference of two independent Poisson processes starting from $0$ with the same jump size $\delta$ and intensity $\beta^\delta$. We denote by $\prob^{\delta, y}$ the probability measure under which $Z^\delta_0=y$ a.s.. Henceforth, the superscript $\delta$ indicates the trading size in the Glosten Milgrom model.

For the fundamental value $\tilde{v}^\delta$, let us first consider the following family of distributions.
\begin{ass}\label{ass: tv}
 Given $\tilde{v}^\delta$ of type \eqref{eq: v dist}, there exists a $\delta \Integer \cup \{-\infty, \infty\}-$valued strictly increasing sequence $(a^\delta_n)_{n=1, \cdots, N+1}$\footnote{When $N=\infty$, $N+1 =\infty$.} with $a_1^\delta = -\infty$, $a_{N+1}^\delta = \infty$, and $\cup_{n=1}^N [a^\delta_n, a^\delta_{n+1}) = \delta \Integer \cup\{-\infty\}$, such that
 \begin{equation}\label{def: v delta}
 \prob(\tv^\delta = v_n) = \prob^\delta \pare{Z^\delta_1 \in [a^\delta_n, a^\delta_{n+1})}, \quad n=1, \cdots, N.
\end{equation}
\end{ass}
For any $\tv$ with discrete distribution \eqref{eq: v dist}, Lemma \ref{lem: v delta}
below shows there exists a sequence $(\tilde{v}^\delta)_{\delta>0}$, each satisfies Assumption \ref{ass: tv} and converges to $\tilde{v}$ in law as $\delta \downarrow 0$. Therefore any $\tilde{v}$ of type \eqref{eq: v dist} can be approximated by a $\tilde{v}^\delta$ satisfying Assumption \ref{ass: tv}. Given $\tilde{v}^\delta$ satisfying Assumption \ref{ass: tv}, define
\begin{equation}\label{eq: h}
 h^\delta_n(y,t) := \prob^{\delta, y}\pare{Z^\delta_{1-t} \in [a_n^\delta, a_{n+1}^\delta)}, \quad y \in \delta \Integer, t\in [0,1], n\in \{1, \cdots, N\},
\end{equation}
and
\begin{equation}\label{eq: p func}
 p^\delta(y,t) := \sum_{n=1}^N v_n h^\delta_n(y,t)= \expec^{\delta, y}\bra{P(Z^\delta_{1-t})},
\end{equation}
where the expectation is taken under $\prob^{\delta, y}$ and
\begin{equation}\label{eq: P}
 P(y) = v_n, \quad \text{ when } y\in [a^\delta_n, a^\delta_{n+1}).
\end{equation}
Then \eqref{def: v delta} implies that $\tilde{v}^\delta$ and $P(Z_1^\delta)$ have the same distribution. If $p^\delta$ is chosen as the pricing rule, it has the same form as in \eqref{eq: p func GM}. Finally we impose a technical condition on $p^\delta$. This assumption is clearly satisfied when $N$ is finite.
\begin{ass}\label{ass: p poly}
 There exist positive constants $C$ and $n$ such that $|p^\delta(y,t)| \leq C(1+|y|^n)$ for any $(y,t)\in \delta \Integer \times [0,1]$.
\end{ass}

Given the pricing rule \eqref{eq: p func}, let us first study the insider's optimization problem and derive the associated HJB equation via a heuristic argument. In this derivation, the superscript $\delta$ is omitted to simplify notation. Definition \ref{def: insider ad} iii) implies that $X^{i,j}-\delta \int_0^\cdot \theta^{i,j}_r dr$ defines an $\F^I$-martingale for $i\in \{B, S\}$ and $j\in\{B,T,S\}$.
On the other hand,  Definition \ref{def: insider ad} iv) and \cite[Chapter I, T6]{Bremaud} combined imply that $\int_0^\cdot (\tv - p(Y_{r-}+\delta, r))(dX^{B,B}_r - \delta \theta^{B,B}_r dr)=\int_0^\cdot (\tv - p(Y_{r}, r)) (dX^{B,B}_r - \delta \theta^{B,B}_r dr)$ is an $\F^I$-martingale. Similar argument applied to other terms allows us to rewrite \eqref{eq: profit} as
\begin{align*}
\delta\expec\Big[&\int^1_0(\tilde{v}-p(Y_{r-}+\delta,r))\theta^{B,B}_r  \,dr+\int_0^1 (\tilde{v}-p(Y_{r-} +2\delta, r)) \theta^{B,T}_r\, dr + \int^1_0(\tilde{v}-p(Y_{r-},r))\theta^{B,S}_r \, dr \\
-&\int^1_0(\tilde{v}-p(Y_{r-}-\delta,r))\theta^{S,S}_r \, dr - \int_0^1 (\tilde{v} - p(Y_{r-}-2\delta, r)) \theta^{S,T}_r \, dr- \int^1_0(\tilde{v}-p(Y_{r-},r))\theta^{S,B}_r \, dr \Big| \tilde{v}\Big].
\end{align*}
This motivates us to define the following value function for the insider:
\begin{equation}\label{eq: value func}
\begin{split}
&V^\delta(\tv,y,t)  := \operatorname*{sup}_{\theta^{i,j};\,i\in\{B,S\}, j\in\{B,T,S\}} \\ \delta\, \expec \Big[&\int^1_t(\tilde{v}-p(Y_{r-}+\delta,r))\theta^{B,B}_r \,dr +\int_t^1 (\tv-p(Y_{r-}+2\delta, r)) \theta^{B,T}_r \, dr + \int^1_t(\tilde{v}-p(Y_{r-},r))\theta^{B,S}_r\, dr \\
-&\int^1_t(\tilde{v}-p(Y_{r-}-\delta,r))\theta^{S,S}_r\, dr -\int_t^1 (\tilde{v} - p(Y_{r-}-2\delta, r)) \theta^{S, T}_r\, dr - \int^1_t(\tilde{v}-p(Y_{r-},r))\theta^{S,B}_r\, dr \Big|Y_t=y,\tilde{v}\Big],
\end{split}
\end{equation}
for $\tv=\{v_1, \cdots, v_N\}$, $y\in \delta \Integer$, $t\in [0,1)$. The terminal value of $V^\delta$ is defined as $V^\delta(\tv, y,1) = \lim_{t\rightarrow 1} V^\delta(\tv, y, t)$ \footnote{Since the set of admissible control is unbounded, the HJB equation associated to \eqref{eq: value func} usually admits a \emph{boundary layer}, i.e., $\lim_{t\rightarrow 1} V^\delta(\tilde{v}, y, t)$ is not identically zero even if there is no terminal profit in \eqref{eq: profit}. Such phenomenon also shows up in Kyle model, see \cite{Back}.}. Lemma \ref{lem: V positive} and Proposition \ref{prop: U>V} below show that the optimization problem in \eqref{eq: value func} is well defined and nontrivial, i.e., $0<V^\delta<\infty$, for each $\delta>0$. Let us now derive the HJB equation which $V^\delta$ satisfies via a heuristic argument. Since positive (resp. negative) part of $Y$ is $Y^B:= X^{B,B} + X^{B,T}+Z^B - X^{S,B}$ (resp. $Y^{S} := X^{S,S} + X^{S,T}+ Z^S - X^{B,S}$). Hence $Y^B - \delta \int_0^\cdot (\beta- \theta^{S,B}_r-\theta^{B,T}_r) \,dr -\delta \int_0^\cdot \theta^{B,B}_r \, dr -2\delta \int_0^\cdot \theta^{B,T}_r\, dr$ (resp. $Y^S -\delta \int_0^\cdot (\beta -\theta^{B,S}_r - \theta^{S,T}_r)\, dr- \delta \int_0^\cdot \theta^{S,S}_r \, dr -2\delta \int_0^t \theta^{S,T}_r \, dr$) is an $\F^I$-martingale.\footnote{As discussed after Definition \ref{def: insider ad}, the set of jumps of $X^{B,S}$ and $X^{S,T}$ (resp. $X^{S,B}$ and $X^{B,T}$) arrive at the same time as some jumps of $Z^S$ (resp. $Z^B$), then we necessarily have $\theta^{B,S}+ \theta^{S,T} \leq \beta$ (resp. $\theta^{S, B}+ \theta^{B,T}\leq \beta$).} Then applying It\^{o}'s formula to $V^\delta(\tv, Y_r, r)$ and employing the standard dynamic programming arguments yield the following formal HJB equation for $V^\delta$:
\begin{equation}\label{eq: HJB}
-V_t(v_n, y,t) - H(v_n, y, t, V)=0, \quad n\in\{1, \cdots, N\}, (y,t)\in \delta\Integer\times[0,1),
\end{equation}
where the Hamilton $H$ is defined as (the $\tv$ argument is omitted in $H$ to simplify notation)
\begin{equation}\label{eq: Hamilton}
\begin{split}
H(v_n, y, t, V):=&
(V(y+\delta,t)-2V(y,t)+V(y-\delta,t))\beta \\
&+\operatorname*{sup}_{\theta^{B,B}\geq 0}\big[V(y+\delta,t)-V(y,t)+(v_n-p(y+\delta,t))\delta\big]\theta^{B,B}\\
&+\operatorname*{sup}_{\theta^{B,T}\geq 0}\big[V(y+2\delta,t)-V(y+\delta,t)+(v_n-p(y+2\delta,t))\delta\big]\theta^{B,T}\\
&+\operatorname*{sup}_{\theta^{B,S}\geq 0}\big[V(y,t)-V(y-\delta,t)+(v_n-p(y,t))\delta\big]\theta^{B,S} \\
&+\operatorname*{sup}_{\theta^{S,S}\geq 0}\big[V(y-\delta,t)-V(y,t)-(v_n-p(y-\delta,t))\delta\big]\theta^{S,S}\\
&+\operatorname*{sup}_{\theta^{S,T}\geq 0}\big[V(y-2\delta,t)-V(y-\delta,t)-(v_n-p(y-2\delta,t))\delta\big]\theta^{S,T}\\
&+\operatorname*{sup}_{\theta^{S,B}\geq 0}\big[V(y,t)-V(y+\delta,t)-(v_n-p(y,t))\delta\big]\theta^{S,B}.
\end{split}
\end{equation}
Optimizers $\theta^{i,j}$, $i\in \{B,S\}$ and $j\in \{B,T,S\}$, in \eqref{eq: Hamilton}, are deterministic functions of $v_n, y$ and $t$, hence they are of feedback form. They are expected to be the optimal control intensities for \eqref{eq: value func}. This is indeed the case in many existing results in Kyle model and Glosten Milgrom model (with $N=2$), compare \cite{Kyle}, \cite{Back}, \cite{BP}, \cite{Back-Baruch}, and \cite{Cetin-Xing}. However, when $N\geq 3$ in the Glosten Milgrom model, the following theorem shows any optimizers in \eqref{eq: Hamilton} are \emph{not} the optimal intensities when $\tv$ is neither $v_1$ nor $v_N$.

\begin{thm}\label{thm: nonexistence}
 Let $N\geq 3$ and $\tilde{v}^\delta$ satisfy Assumption \ref{ass: tv}. Let $p^\delta$ in \eqref{eq: p func} be the pricing rule and satisfy Assumption \ref{ass: p poly}. Then any optimizers $\theta^{i,j}(y,t)$, $i\in \{B,S\}, j\in\{B,T,S\}$ and  $(y, t)\in \delta \Integer\times [0,1)$, for \eqref{eq: Hamilton} are not the optimal strategy for \eqref{eq: value func} when $\tv^\delta=v_n$ for $1<n<N$.
\end{thm}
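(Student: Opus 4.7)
The plan is to argue by contradiction. Suppose that for some $n$ with $1<n<N$, there exists a pointwise choice of Hamiltonian optimizers $\theta^{*,i,j}(y,t)$ for \eqref{eq: Hamilton} that also realizes the supremum in \eqref{eq: value func} when $\tv^\delta=v_n$.

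\textbf{Step 1 (bracket constraints).} For each of the six suprema in \eqref{eq: Hamilton} to be finite, every bracketed coefficient of $\theta^{i,j}$ must be $\le 0$ at every $(y,t)\in\delta\Integer\times[0,1)$, and wherever $\theta^{*,i,j}>0$ the corresponding bracket must vanish. Summing the $\theta^{B,B}$- and $\theta^{S,B}$-brackets and using the strict monotonicity of $p^\delta$ from Definition \ref{def: price rule} gives the pinched first-difference bound
\[
(p^\delta(y,t)-v_n)\delta\;\le\;V^\delta(v_n,y+\delta,t)-V^\delta(v_n,y,t)\;\le\;(p^\delta(y+\delta,t)-v_n)\delta,
\]
with analogous shifted bounds coming from the remaining pairs, together with the exclusivity relations $\theta^{*,B,B}\theta^{*,S,B}\equiv 0$ and $\theta^{*,S,S}\theta^{*,B,S}\equiv 0$.

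\textbf{Step 2 (reduction of HJB).} By the verification theorem, the assumed optimality of $\theta^{*}$ forces $V^\delta(v_n,\cdot,\cdot)$ to solve \eqref{eq: HJB}, which on the interior reduces to the backward discrete heat equation
\[
-V^\delta_t=\beta^\delta\pare{V^\delta(\cdot+\delta,t)-2V^\delta(\cdot,t)+V^\delta(\cdot-\delta,t)}.
\]
Its probabilistic representation yields $V^\delta(v_n,y,t)=\expec^{\delta,y}\bra{g(Z^\delta_{1-t})}$ for the boundary-layer limit $g(y):=\lim_{t\uparrow 1}V^\delta(v_n,y,t)$. Letting $t\uparrow 1$ in the bound of Step 1 and using $p^\delta(y,1)=P(y)$ pins down $g$: it is piecewise linear with slope $v_k-v_n$ on each $[a_k^\delta,a_{k+1}^\delta)$, and constant on the target interval $[a_n^\delta,a_{n+1}^\delta)$.

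\textbf{Step 3 (strictly better strategy).} Using the auxiliary randomness $\mathcal H$ permitted by Definition \ref{def: insider ad} i), construct a Poisson-bridge-type admissible strategy $\wt\theta$ (analogous to the one built in Section \ref{sec: point process bridge}) that targets $[a_n^\delta,a_{n+1}^\delta)$: pure-buy when $Y<a_n^\delta$, pure-sell when $Y\ge a_{n+1}^\delta$, with cancellations at the two boundaries as $t\uparrow 1$. Under the rational pricing rule \eqref{eq: p func}, $p^\delta(Y_t,t)$ is an $\F^Y$-martingale, so the expected profit of $\wt\theta$ from \eqref{eq: profit} simplifies to an explicit sum of bid-ask contributions accumulated on each side of the target interval.

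\textbf{Step 4 (contradiction).} Comparing this explicit profit with the HJB value $V^\delta(v_n,0,0)=\expec^{\delta,0}\bra{g(Z^\delta_1)}$ identified in Step 2, the inequality is strict in favor of $\wt\theta$: the gap equals the accumulated bid-ask spread $p^\delta(y+\delta,t)-p^\delta(y-\delta,t)>0$ collected on \emph{both} sides of $[a_n^\delta,a_{n+1}^\delta)$, whereas no feedback rule depending only on $(Y_{t-},t)$ can exploit both sides simultaneously under the exclusivity relations from Step 1. This contradicts the assumed optimality of $\theta^{*}$.

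\textbf{Main obstacle.} The main difficulty is Step 4: rigorously quantifying the strict gap requires tracking the bid-ask contributions against the (scaled) Poisson occupation time on each side of $[a_n^\delta,a_{n+1}^\delta)$, in the spirit of the computations in Section \ref{sec: convergence}. The hypothesis $1<n<N$ enters crucially here: only for middle $n$ is the target bounded on both sides and bidirectional trading mandatory, while for $n\in\{1,N\}$ the target is a half-line and a one-sided feedback rule (as in \cite{Back-Baruch,Cetin-Xing}) already attains the Hamiltonian optimum.
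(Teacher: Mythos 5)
Your proposal does not follow the paper's argument, and the route you choose in Steps~3--4 cannot work as stated. The central problem is Step~4: you propose to derive a contradiction by exhibiting a bridge strategy $\wt\theta$ whose expected profit strictly exceeds $V^\delta(v_n,0,0)=\expec^{\delta,0}[g(Z^\delta_1)]$. But $V^\delta$ is \emph{defined} in \eqref{eq: value func} as the supremum of expected profit over all admissible strategies, and the representation $V^\delta(y,t)=\expec^{\delta,y}[V^\delta(Z_{s-t},s)]$ (Proposition \ref{prop: vis soln} iv)) holds for the true value function unconditionally, not merely under the hypothesis that the Hamiltonian optimizer is optimal. Hence no admissible strategy can have profit strictly above $\expec^{\delta,0}[g(Z^\delta_1)]$, and the strict inequality you need is impossible; the paper's Sections \ref{sec: suboptimal}--\ref{sec: convergence} in fact show the bridge strategy is \emph{sub}optimal, with a nonnegative loss $L^\delta$ that only vanishes as $\delta\downarrow 0$. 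A second gap is in Step~2: the pinched bounds of Step~1 leave slack of order $p^\delta(y+\delta,t)-p^\delta(y,t)>0$ at every point, so they do not pin down $g$ or the first differences of $V^\delta$.

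The idea the paper actually uses, and which is missing from your proposal, is a propagation (all-or-nothing) argument: if either \eqref{eq: buy ineq} or \eqref{eq: sell ineq} holds with equality at a single point $(y_0,t_0)$, then the martingale property $V^\delta(y,t)=\expec^{\delta,y}[V^\delta(Z_{s-t},s)]$ together with the fact that $Z^\delta_{s-t}$ charges every point of $\delta\Integer$ forces that same inequality to be an equality at \emph{all} $(y,t)\in\delta\Integer\times[0,1)$, and then strict monotonicity of $y\mapsto p^\delta(y,t)$ forces the other inequality to be strict everywhere. Consequently any Hamiltonian optimizer trades in only one direction (say $\hat\theta^{B,B}=\hat\theta^{B,S}\equiv 0$). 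The contradiction is then internal to the optimizer: if $(\hat X^B,\hat X^S)$ were optimal, the dynamic programming principle and the identification of the zero set of $V^\delta(\cdot,1)$ would force $\hat Y_1\in[a_n^\delta+\delta,a_{n+1}^\delta]$ a.s., a \emph{finite} interval precisely because $1<n<N$; but a process with uncontrolled Poisson jumps in both directions and controlled jumps in only one direction cannot be confined there. Your closing remarks correctly sense that bidirectional trading is the obstruction for middle $n$, but without the propagation step you have no proof that the optimizer is one-directional, and your comparison-against-a-better-strategy mechanism for the contradiction is logically unavailable.
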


\begin{rem}\label{rem: nonexistence}
 When $\tilde{v}^\delta =v_1$ (resp. $v_N$), the insider knows the risky asset is always over-priced (resp. under-priced). Hence she always sells (resp. buys) in equilibrium. This situation is exactly the same as \cite{Cetin-Xing}. When $\tilde{v}^\delta$ is neither minimal nor maximal, let us briefly describe the proof of Theorem \ref{thm: nonexistence} here. To ensure \eqref{eq: HJB} to be wellposed, $H$ must be finite for all $(y,t)\in \delta \Integer \times [0,1)$. Hence
 \begin{equation}\label{eq: V bdd}
  (p(y,t)- v_n) \delta \leq V(y+\delta, t) - V(y,t) \leq (p(y+\delta, t)- v_n)\delta, \quad \text{ for all } (y,t)\in \delta \Integer \times [0,1),
 \end{equation}
 where the second inequality comes from the first three maximization in \eqref{eq: Hamilton} and the first inequality comes from the last three. Since $V>0$, $\theta^{i,j}\equiv 0$, $i\in \{B,S\}$ and $j\in \{B,T,S\}$, in \eqref{eq: Hamilton} does not correspond to the optimal strategy. Hence there must exist $(y_0, t_0)$ such that one inequality in \eqref{eq: V bdd}, say the first one, is an equality. However, in this case, the discrete state space forces the first inequality to be an equality for \emph{all} $(y,t)\in \delta \Integer \times [0,1)$, which implies the second inequality in \eqref{eq: V bdd} is strict for all $(y,t)$, due to $p(y+\delta, t) > p(y,t)$. Therefore the optimizers in the first three maximization in \eqref{eq: Hamilton} must be identically zero, which means the associated point process $X$ does not have positive jumps. On the other hand, the dynamic programming principle and the boundary layer of \eqref{eq: HJB} at $t=1$ force $Y_1= Z_1+X_1\in [a^\delta_n +\delta, a^\delta_{n+1}]$ a.s.. This can never happen when $X$ does not have positive jumps.
 Therefore, Theorem \ref{thm: nonexistence} is the joint effort of the discrete state space and the boundary layer of the HJB equation.
\end{rem}

\begin{rem}\label{rem: trading same time}
 The statement of Theorem \ref{thm: nonexistence} remains valid when the insider is prohibited from trading with noise traders at the same time; i.e., $X^{B,T}, X^{B,S}, X^{S,T}, X^{S,B}$ are all zero. In this case, the second, third, fifth and sixth maximization do not present in \eqref{eq: Hamilton}. However, the first and fourth maximization therein still lead to \eqref{eq: V bdd}. Hence the same argument as in the previous remark still applies.
\end{rem}

\begin{rem}\label{rem: relaxed control}
 Examples of control problems without optimal feedback control exist in literature of the optimal control theory, cf., e.g. \cite[Chapter 3, pp. 246]{Warga} and \cite[Example 1.1]{Lou}. In these cases, notion of \emph{relaxed control} is employed to prove the existence of a relaxed optimal control, cf. \cite{Lou} and references therein. For the insider's optimization problem, instead of $\{\theta: \delta \Integer\times [0,1]\rightarrow \Real_+\}$,  the control set can be relaxed to $\{\theta: \delta \Integer \times [0,1] \rightarrow \mathcal{M}^1(\Real_+)\}$, where $\mathcal{M}^1(\Real_+)$  is the set of all probability measures in $\Real_+$. It is interesting to investigate whether \eqref{eq: value func} admits an optimal control in this relaxed set. We leave this topic to future studies.
\end{rem}

\subsection{Asymptotic Glosten Milgrom equilibrium}
To establish equilibrium of Glosten Milgrom type when the risky asset $\tv$ has general discrete distribution \eqref{eq: v dist} with $N\geq 3$, we introduce a weak form of equilibrium in what follows. To motivate this definition, we recall the convergence of Glosten Milgrom equilibria as the order size decreasing to zero and intensity of noise trades increasing to infinity, cf. \cite[Theorem 3]{Back-Baruch} and \cite[Theorem 5.3]{Cetin-Xing}:

\begin{prop}\label{prop: GM conv}
 For any Bernoulli distributed $\tv$ (i.e. $N=2$ in \eqref{eq: v dist}), there exists a sequence of Bernoulli distributed random variables $\tv^{\delta}$ such that
 \begin{enumerate}[i)]
  \item $\tv^\delta$ converges to $\tv$ in law as $\delta \downarrow 0$;
  \item For each $\delta>0$, model with $\tv^\delta$ as the fundamental value of the risky asset admits a Glosten Milgrom equilibrium $(p^\delta, X^{B, \delta}, X^{S, \delta}, \F^{I, \delta})$;
  \item When the intensity of Poisson process is given by $\beta^\delta := (2\delta^2)^{-1}$, $X^{B,\delta}- X^{S, \delta} \wc X^0$, as $\delta \downarrow 0$,  where  $X^0$ is the optimal strategy in the Kyle model and $\wc$ represents the weak convergence of stochastic processes\footnote{Refer to \cite{Billingsley-conv} or \cite{Jacod-Shiryaev} for the definition of weak convergence of stochastic processes.}.
 \end{enumerate}
\end{prop}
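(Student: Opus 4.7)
The plan is to stitch together three pieces, all of which rely on constructions that already appear in \cite{Back-Baruch} and \cite{Cetin-Xing}; my role is to recall how they chain together.

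For (i), since $\tv$ is Bernoulli taking values $v_1<v_2$ with probabilities $1-p$ and $p$, I pick $a^\delta\in\delta\Integer$ so that $\prob^\delta(Z^\delta_1\geq a^\delta)$ is as close to $p$ as possible and define $\tv^\delta$ through Assumption \ref{ass: tv}. Under the scaling $\beta^\delta=(2\delta^2)^{-1}$ the variance of $Z^\delta_1$ equals $1$ for every $\delta$, so a central limit theorem for the scaled difference of two independent Poissons gives $Z^\delta_1\wc N(0,1)$. Consequently $a^\delta$ can be tracked through the standard normal quantile function to the $(1-p)$-quantile, which forces $\prob(\tv^\delta=v_2)\to p$ and hence $\tv^\delta\wc\tv$.

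For (ii), I would invoke directly the explicit equilibrium constructed in \cite{Cetin-Xing}: the pricing rule is $p^\delta(y,t)=v_1+(v_2-v_1)\prob^{\delta,y}(Z^\delta_{1-t}\geq a^\delta)$, and the insider's strategy is a point-process bridge that, on $\{\tv^\delta=v_2\}$ (resp.\ on $\{\tv^\delta=v_1\}$), drives $Y^\delta_1$ into $[a^\delta,\infty)$ (resp.\ $(-\infty,a^\delta)$) while enforcing the inconspicuous trade condition $Y^\delta\stackrel{d}{=}Z^\delta$. Optimality follows from a verification argument based on the HJB equation \eqref{eq: HJB}; the obstruction of Theorem \ref{thm: nonexistence} is absent here because $N=2$ leaves no intermediate state $1<n<N$.

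For (iii), the inconspicuous trade theorem furnishes $Y^\delta\stackrel{d}{=}Z^\delta$, and the scaling $\beta^\delta=(2\delta^2)^{-1}$ makes $Z^\delta\wc B$ in the Skorokhod $J_1$ topology for a standard Brownian motion $B$; hence $Y^\delta\wc\wh B$ for a Brownian $\wh B$. The key step is to upgrade this to joint convergence of $(X^\delta,Z^\delta,\tv^\delta)$ to the Kyle optimal triple $(X^0,B,\tv)$, whose joint law is characterized by $dY^0_t=dB_t+(\Phi^{-1}(H(\tv))-Y^0_t)/(1-t)\,dt$ with $H$ a quantile transform of $\tv$. The main obstacle is precisely this joint identification, since $X^\delta=Y^\delta-Z^\delta$ cannot be obtained by continuous mapping in the $J_1$ topology on paths with discontinuities. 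The resolution carried out in \cite[Section 5]{Cetin-Xing} is to exploit the explicit feedback form of the insider's $\F^{I,\delta}$-intensities $\theta^{i,j,\delta}$ to establish tightness of $(X^\delta)_{\delta>0}$, and then identify any subsequential limit by passing to the limit in the associated martingale problem, where the jump intensities are shown to converge to the drift coefficient of the Kyle Brownian bridge.
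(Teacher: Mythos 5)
Your proposal is consistent with how the paper itself treats this statement: Proposition \ref{prop: GM conv} is not proved in the paper at all, but is recalled verbatim from \cite[Theorem 3]{Back-Baruch} and \cite[Theorem 5.3]{Cetin-Xing}, which are exactly the results you invoke. Your parts (i) and (ii) match the intended argument --- the quantile construction of $a^\delta$ is the $N=2$ specialization of Lemma \ref{lem: v delta}, and the $N=2$ equilibrium (pricing rule, bridge strategy, verification) is precisely the content of \cite{Cetin-Xing}, where, as you note, the obstruction of Theorem \ref{thm: nonexistence} does not arise. The one place where you diverge is the mechanism for (iii). You correctly flag that $X^\delta=Y^\delta-Z^\delta$ cannot be recovered by continuous mapping in the $J_1$ topology, but the resolution you attribute to the reference (tightness of $X^\delta$ from the feedback intensities plus a martingale-problem identification of subsequential limits) is not the route taken there or in this paper's analogous Theorem \ref{thm: str conv}. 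The actual argument is: use the inconspicuous-trade/bridge property to identify $Law(Y^\delta \mid \tv^\delta=v_n)$ with the law of $Z^\delta$ conditioned on $\{Z^\delta_1\in[a^\delta_n,a^\delta_{n+1})\}$, prove weak convergence of these conditioned laws to the Kyle bridge $Y^{0,n}$, and then pass to the difference via \cite[Proposition VI.1.23]{Jacod-Shiryaev}, which applies because the limiting processes $Y^0$ and $Z^0$ are continuous. This conditioning argument buys you exactly what your martingale-problem route would have to work hard for, namely control of the exploding bridge intensities $\theta^{i,j,\delta}(Y^\delta_{t-},t)$ as $t\uparrow 1$; if you insist on the martingale-problem route you would need a separate tightness estimate near the terminal time, which is nontrivial and is nowhere supplied. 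As a citation-stitching proof your proposal is acceptable, but the description of the key step in (iii) should be corrected to the conditioned-law argument.
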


This result motivates us to define the following weak form of Glosten Milgrom equilibrium:

\begin{defn}\label{def: Asy GM eq}
 For any $\tv$ with discrete distribution \eqref{eq: v dist}, an \emph{asymptotic Glosten Milgrom equilibrium} is a sequence $(\tv^\delta, p^\delta, X^{B,\delta}, X^{S, \delta}, \F^{I, \delta})_{\delta>0}$ such that
 \begin{enumerate}[i)]
  \item $\tv^\delta$ converges to $\tv$ in law as $\delta \downarrow 0$;
  \item For each $\delta>0$, given $(\tv^\delta, X^{B, \delta}, X^{S, \delta}, \F^{I, \delta})$ and set $Y^\delta:= Z^\delta + X^{B, \delta} - X^{S, \delta}$, $p^\delta$ is a rational pricing rule, i.e., $p^\delta(Y^\delta_t, t) = \expec[\tv^\delta \,|\, \F^{Y^\delta}_t]$ for $t\in [0,1]$;
  \item Given $(\tv^\delta, p^\delta)$ and $\beta^\delta=(2\delta^2)^{-1}$, let $\mathcal{J}^\delta(X^B, X^S)$ be insider's  expected profit associated to the admissible strategy $(X^B, X^S)$. Then
      \[
 \sup_{(X^B, X^S) \text{ admissible}} \mathcal{J}^\delta(X^B, X^S) - \mathcal{J}^\delta(X^{B, \delta}, X^{S, \delta})\rightarrow 0, \quad \text{as } \delta \downarrow 0.
 \]
 \end{enumerate}
\end{defn}

In the above definition, rationality of the pricing mechanism is not compromised. However optimality of the insider's strategy is not enforced. Instead, item iii) requires that, when the order size is small, the loss of insider's expected profit by employing the strategy $(X^{B, \delta}, X^{S, \delta}; \F^{\delta, I})$ is small, comparing to the optimal value. Moreover this discrepancy converges to zero when the order size vanishes. Therefore if the insider is willing to give up a small amount of expected profit, she can employ strategy $(X^{B, \delta}, X^{S, \delta}; \F^{I, \delta})$ to establish a suboptimal equilibrium. The following result establishes the existence of equilibrium in the above weak sense:

\begin{thm} \label{thm: main}
 Assume that $\tv$ satisfies \eqref{eq: v dist} with $N<\infty$. Then asymptotic Glosten-Milgrom equilibrium exists.
\end{thm}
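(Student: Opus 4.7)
The plan is to assemble the ingredients prepared in the earlier sections and verify the three conditions of Definition~\ref{def: Asy GM eq} in sequence.

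First I would choose $\tv^\delta$ via Lemma~\ref{lem: v delta}, which simultaneously gives item~i) and places $\tv^\delta$ in the framework of Assumption~\ref{ass: tv}. Since $N<\infty$, Assumption~\ref{ass: p poly} is automatic for the pricing rule $p^\delta$ defined by \eqref{eq: p func}, so this $p^\delta$ is the natural candidate. Next I would take the feedback strategy $(X^{B,\delta},X^{S,\delta};\F^{I,\delta})$ from the point-process bridge construction of Section~\ref{sec: point process bridge}. Conditional on $\tv^\delta=v_n$, this strategy steers $Y^\delta=Z^\delta+X^{B,\delta}-X^{S,\delta}$ so that $Y^\delta_1\in[a^\delta_n,a^\delta_{n+1})$ almost surely. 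This terminal-inclusion property is the exact discrete analogue of Kyle's Brownian bridge and, combined with the $\delta\Integer$-valued point-process bridge characterization, identifies the $(\F^{Y^\delta},\prob)$-law of $Y^\delta$ with that of the difference of two independent Poisson processes of intensity $\beta^\delta$. The inconspicuous trade theorem then follows and, by Bayes' rule, $\expec[\tv^\delta\,|\,\F^{Y^\delta}_t]=p^\delta(Y^\delta_t,t)$, giving item~ii).

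The substance of the proof is item~iii). Let $\cJ^\delta_{\mathrm{fb}}$ denote the expected profit of the bridge strategy just constructed and let $\cJ^\delta_{\mathrm{opt}}$ denote its supremum over admissible strategies. Using the characterization of suboptimal feedback strategies from Section~\ref{sec: suboptimal}, I would write $\cJ^\delta_{\mathrm{fb}}$ in closed form by integrating $(v_n-p^\delta)$ against the feedback intensities of the bridge and applying the martingale property of the compensated point processes. An upper bound for $\cJ^\delta_{\mathrm{opt}}$ is obtained from a supersolution of the HJB equation~\eqref{eq: HJB}: even though no feedback optimizer exists (Theorem~\ref{thm: nonexistence}), the value function is still dominated by any sufficiently regular supersolution via a standard verification argument applied to the admissible class of Definition~\ref{def: insider ad}. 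As announced in the introduction, subtracting these two expressions leaves a residual that can be written as the difference of two stochastic integrals of $v_n-p^\delta$ against the $\delta$-scaled Poisson occupation time of $Y^\delta$ at the neighbouring endpoints $a^\delta_n$ and $a^\delta_{n+1}$.

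The hard part is to show that this residual vanishes as $\delta\downarrow 0$, and this is where I expect the technical content of Section~\ref{sec: convergence} to be decisive. Under the scaling $\beta^\delta=(2\delta^2)^{-1}$, the process $Z^\delta$ converges weakly to a standard Brownian motion, and Theorem~\ref{thm: str conv} gives $X^{B,\delta}-X^{S,\delta}\wc X^0$, so $Y^\delta$ converges weakly to the Kyle total demand. By a Ray--Knight-type argument, the $\delta$-scaled Poisson occupation times at the adjacent levels $a^\delta_n$ and $a^\delta_{n+1}$ both converge to the same Brownian local time, while the ask-minus-bid gap $p^\delta(\cdot+\delta,\cdot)-p^\delta(\cdot-\delta,\cdot)$ is continuous in the limit; a uniform integrability argument based on Assumption~\ref{ass: p poly} and the finiteness of $N$ upgrades the joint convergence in law to convergence of expectations. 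The two integrals therefore cancel in the limit, yielding $\cJ^\delta_{\mathrm{opt}}-\cJ^\delta_{\mathrm{fb}}\to 0$ and completing the verification of item~iii).
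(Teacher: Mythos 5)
Your overall architecture is the paper's: choose $\tv^\delta$ by Lemma \ref{lem: v delta} (item i)), get rationality of $p^\delta$ from the bridge construction and Proposition \ref{prop: rational} (item ii)), bound the optimal value by the explicit supersolution $U^{S,\delta}$ of Proposition \ref{prop: U>V}, identify the bridge strategy's profit as $U^\delta-L^\delta$ via Proposition \ref{prop: ep<U-L}, and kill the residual by an occupation-time-to-local-time limit. However, there is a concrete error in the step that carries all the weight of item iii). The residual $L^\delta$ is \emph{not} a difference of occupation-time integrals at the interval endpoints $a^\delta_n$ and $a^\delta_{n+1}$: by \eqref{eq: L} and \eqref{eq: L delta} it is the difference of integrals of $v_n-p^\delta(\dmd,\cdot)$ and $v_n-p^\delta(\umd,\cdot)$ against the scaled occupation times $\uLd$ and $\dLd$ of $Y^\delta$ at the two \emph{adjacent lattice points} $\umd$ and $\dmd$ straddling the middle level $m^\delta_n=(a^\delta_n+a^\delta_{n+1}-\delta)/2$ of the target interval --- the levels where $U$ fails to be discretely harmonic (cf. \eqref{eq: U eq um}--\eqref{eq: U eq dm}) because that is where the insider's buying and selling regimes meet. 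The distinction is not cosmetic: the cancellation works precisely because $\umd-\dmd=\delta\downarrow 0$, so both occupation times converge to the Brownian local time at the single level $m_n$ and both integrands converge to the same limit. Had the two integrals lived at $a^\delta_n$ and $a^\delta_{n+1}$ as you assert, their limits would involve local times at the distinct levels $a^0_n\neq a^0_{n+1}$ with distinct integrands, and your concluding sentence ``the two integrals therefore cancel in the limit'' would simply be false.

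Two smaller points. The gap is bounded by $U^{S,\delta}-U^\delta+L^\delta$, and you must also dispose of the first piece; by \eqref{eq: US-S} it equals $\delta(v_n-p^\delta(0,0))\,\indic_{\{0\leq \dmd\}}=O(\delta)$, so this is immediate but should be said. Also, the occupation-time convergence in the paper is not a Ray--Knight argument: it compares the It\^o decomposition of $|Y^\delta-\umd|$ with Tanaka's formula for the Brownian limit and transfers weak convergence of the martingale parts (Proposition \ref{prop: local time}, Lemma \ref{lem: mart conv}); and the input $Y^\delta\wc Y^0$ comes directly from Proposition \ref{prop: Y property} ii), whereas Theorem \ref{thm: str conv} is a consequence of it rather than an ingredient.
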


In this asymptotic equilibrium, the pricing rule is given by \eqref{eq: p func}. When the order size is $\delta$, the insider employs the strategy $(X^{B, \delta}, X^{S, \delta}; \F^{I,\delta})$ with $\F^{I, \delta}$-intensities
      \begin{equation}\label{eq: X intensity}
      \begin{split}
       & \delta\beta^\delta\sum_{n=1}^N \indic_{\{\tv^\delta=v_n\}} \bra{\frac{h^\delta_n(Y^\delta_{t-} +\delta, t)}{h^\delta_n(Y^\delta_{t-}, t)}-1}_+  + \delta\beta^\delta\sum_{n=1}^N \indic_{\{\tv^\delta=v_n\}} \bra{\frac{h^\delta_n(Y^\delta_{t-}-\delta, t)}{h^\delta_n(Y^\delta_{t-}, t)}-1}_-, \\
       & \delta\beta^\delta\sum_{n=1}^N \indic_{\{\tv^\delta=v_n\}} \bra{\frac{h^\delta_n(Y^\delta_{t-} -\delta, t)}{h^\delta_n(Y^\delta_{t-}, t)}-1}_+  + \delta\beta^\delta\sum_{n=1}^N \indic_{\{\tv^\delta=v_n\}} \bra{\frac{h^\delta_n(Y^\delta_{t-}+\delta, t)}{h^\delta_n(Y^\delta_{t-}, t)}-1}_-,
      \end{split}
      \end{equation}
      respectively. In particular, when the fundamental value is $v_n$, the insider trades toward the middle level $m^\delta_n:= (a_n^\delta+ a_{n+1}^\delta-\delta)/2$ of the interval $[a^\delta_n, a^\delta_{n+1})$: when the total demand is less than $m^\delta_n$, the insider only places buy orders by either complementing noise buy orders or canceling some of noise sell orders, when the total demand is larger than $m^\delta_n$, the insider does exactly the opposite. More specifically, Lemma \ref{lem: hn property} below shows that $y\mapsto h^\delta_n(y,t)$ is strictly increasing when $y<m^\delta_n$ and strictly decreasing when $y>m^\delta_n$. Therefore, when $Y^\delta_{t-} < m^\delta_n$,  \eqref{eq: X intensity} implies that: $X^{B,B, \delta}$ has intensity $\frac{1}{2\delta} \pare{\frac{h^\delta_n(Y^\delta_{t-} +\delta, t)}{h_n^\delta(Y^\delta_{t-}, t)}-1}$, $X^{B,S,\delta}$ has intensity $\frac{1}{2\delta} \pare{1-\frac{h^\delta_n(Y^\delta_{t-}-\delta,t)}{h^\delta_n(Y^\delta_{t-}, t)}}$, meanwhile intensities of $X^{S,S, \delta}$ and $X^{S,B, \delta}$ are both zero. When $Y^\delta_{r-}>m^\delta_n$, intensities can be read out from \eqref{eq: X intensity} similarly.
      Even though Theorem \ref{thm: nonexistence} remains valid when the insider is prohibited from trading at the same time with noise traders, the strategy constructed above depends on the possibility of canceling orders. However, in this strategy, the insider never tops up noise orders, i.e., $X^{B,T}=X^{S,T}\equiv 0$. This allows the market maker to employ a rational pricing mechanism so that Definition \ref{def: Asy GM eq} ii) is satisfied, cf. Remark \ref{rem: no top up} below.

      The processes $(X^{B,\delta}, X^{S, \delta}; \F^{I, \delta})$ with intensities \eqref{eq: X intensity} will be constructed explicitly in Section \ref{sec: point process bridge}. The insider employs a sequence of independent random variables with uniform distribution on $[0,1]$ to construct her mixed strategy. This sequence of random variables are also independent of $Z^\delta$ and $\tv^\delta$. This construction is a natural extension of \cite{Cetin-Xing}. In this construction, whenever a noise order arrives, the insider uses a uniform distributed random variable to decide whether or not submitting an opposite canceling order. Hence this strategy is adapted to insider's filtration, rather than predictable as in the Kyle model. Such a canceling strategy is called \emph{input regulation} and has been studied extensively in the queueing theory literature, see eg. \cite[Chapter VII, Section 3]{Bremaud}.

      When the fundamental value is $v_n$ and the insider follows the aforementioned strategy, the total demand at time $1$ will end up in the interval $[a^\delta_n, a^\delta_{n+1})$. Therefore the insider's private information is fully, albeit gradually, revealed to the public so that the trading price does not jump when the fundamental value is announced. On the other hand, the total demand, in its own filtration, has the same distribution of the demand from noise traders, i.e., the insider is able to hide her trades among the noise trades. This is another manifestation of \emph{inconspicuous trading theorem} commonly observed in the insider trading literature (cf. e.g., \cite{Kyle}, \cite{Back}, \cite{BP}, etc.).

      The insider's strategy discussed above is of feedback form. The insider can determine her trades only using the current total cumulative demand (and some additional randomness coming from the sequence of iid uniform distributed random variables which are also independent of the fundamental value and the noise trades). Even though this strategy is not optimal, its associated expected profit is close to the optimal value when the order size is small. Moreover the discrepancy converges to zero as the order size diminishes.

      The following numeric example illustrates the convergence of the upper bound for insider's expected profit loss as the order size decreases to zero. In this example, $\tv$ takes values in $\{1,2,3\}$ with probability $0.55$, $0.35$, and $0.1$, respectively. The expected profit in Kyle-Back equilibrium is $0.512$. Compared to this, the following figure shows that the loss to insider's expected profit is small.
\begin{figure} [ht]
\centering
\includegraphics[scale=0.8]{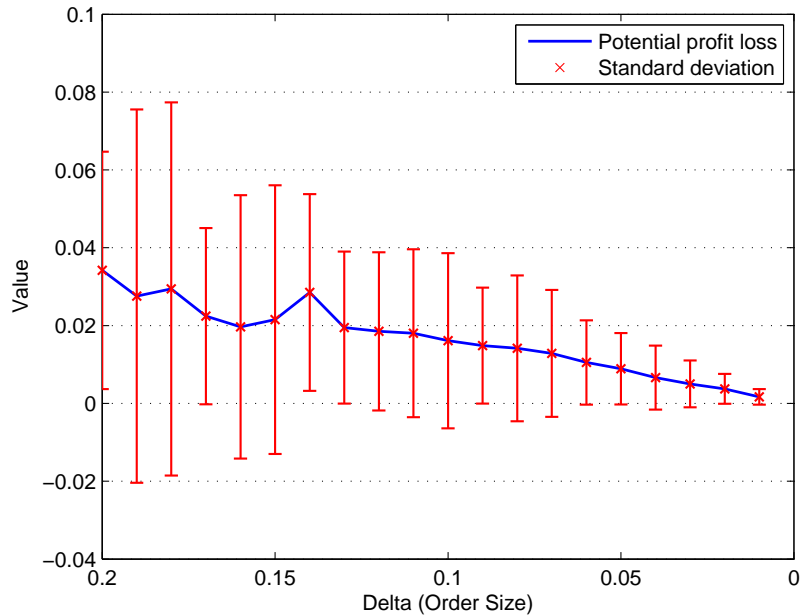}
\caption{The mean and standard deviation of the upper bound for insider's expected profit loss. The figure is generated by Monte Carlo simulation with $10^5$ paths.}
\label{figure}
\end{figure}

Finally, similar to Proposition \ref{prop: GM conv} iii), insider's net order in the asymptotic Glosten Milgrom equilibrium converges to the optimal strategy in the Kyle model as the order size decreases to zero.

\begin{thm}\label{thm: str conv}
 Let $(X^{B,\delta}, X^{S, \delta}, \F^{I, \delta})_{\delta>0}$ be the sequence of insider's strategy in Theorem \ref{thm: main}. Then
 \[
  X^{B,\delta}- X^{S, \delta} \wc X^0 \quad \text{ as }\delta \downarrow 0,
 \]
 where $X^0$ is the optimal strategy in Kyle model.
\end{thm}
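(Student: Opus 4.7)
The plan is to reduce the claim to the joint weak convergence of the pair $(Z^\delta, Y^\delta)$ to the Kyle-Back pair $(B, Y^0)$, where $Y^\delta := Z^\delta + X^{B,\delta} - X^{S,\delta}$ and $Y^0 := X^0 + B$. Once this is established, $X^{B,\delta} - X^{S,\delta} = Y^\delta - Z^\delta \wc Y^0 - B = X^0$ follows from the continuous mapping theorem, since the subtraction map on $D([0,1],\Real)^2$ is continuous at pairs whose limits lie in $C([0,1],\Real)^2$.

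For the marginals, the functional central limit theorem yields $Z^\delta \wc B$ for a standard Brownian motion $B$, since $Z^{B,\delta}/\delta$ and $Z^{S,\delta}/\delta$ are independent Poisson processes of intensity $\beta^\delta = (2\delta^2)^{-1}$, and hence $Z^\delta$ has variance $2\beta^\delta \delta^2 = 1$ per unit time. The inconspicuous trading property built into the construction in Section \ref{sec: point process bridge} shows that $Y^\delta$ has the same law as $Z^\delta$, hence $Y^\delta \wc B'$ for another standard Brownian motion $B'$. Finally, $\tv^\delta \to \tv$ in law together with $Z^\delta_1 \to B_1$ forces $a^\delta_n \to a^0_n$, where $a^0_n$ are the quantiles characterised by $\prob(B_1 \in [a^0_n, a^0_{n+1})) = p_n$.

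The technical heart is joint convergence. I would pass via Skorokhod representation to an enlarged probability space on which $Z^\delta \to B$ almost surely in $D([0,1],\Real)$ and $\tv^\delta \to \tv$ almost surely. The construction in Section \ref{sec: point process bridge} realises $X^{B,\delta}, X^{S,\delta}$ as input regulations of $Z^B, Z^S$ steered by an independent iid uniform sequence, yielding a measurable map from $(Z^\delta, \tv^\delta)$ and the uniforms to $Y^\delta$. The intensities in \eqref{eq: X intensity} scale as $\delta\beta^\delta$ times a first-order spatial difference of $\log h^\delta_n$. A first-order Taylor expansion combined with local central limit theorem estimates for the scaled Skellam kernel shows that $\delta\beta^\delta \bigl[h^\delta_n(y+\delta,t)/h^\delta_n(y,t)-1\bigr] \to \tfrac{1}{2}\,\partial_y \log h^0_n(y,t)$, where $h^0_n(y,t) := \prob(y+B_{1-t} \in [a^0_n, a^0_{n+1}))$, uniformly on compact subsets of $\delta\Integer \times [0,1)$ away from the degenerate zone. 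Passing to the limit in the semimartingale representation of $Y^\delta$ produces a continuous limit $Y^0$ characterised by (i) being a Brownian motion in its own filtration and (ii) $Y^0_1 \in [a^0_n, a^0_{n+1}]$ on $\{\tv = v_n\}$. These two properties uniquely identify the Kyle-Back total demand, and accordingly $Y^0 = X^0 + B$, from which the theorem follows.

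The main obstacle will be this joint convergence step, and in particular the uniform control of the Skellam ratios $h^\delta_n(y \pm \delta, t)/h^\delta_n(y, t)$ together with the tightness of $Y^\delta$ in the Skorokhod topology. Tightness is expected to follow from the intensities being $O(1/\delta)$ while the jump sizes are $\delta$, which gives bounded expected total variation, combined with Aldous' criterion applied to $Y^\delta$. The uniform convergence of the Skellam ratios to the Gaussian log-derivative, especially in the singular zone where $h^\delta_n$ becomes small (either for $t$ close to $1$ or for $y$ near the bridge terminal boundary), is the most delicate point, and I would handle it by local-limit estimates in the spirit of those developed in \cite{Cetin-Xing} for the Bernoulli ($N=2$) case.
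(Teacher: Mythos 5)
Your overall strategy (reduce to joint convergence of $(Z^\delta, Y^\delta)$, then subtract) is sound, and the reduction via continuity of subtraction at continuous limits is exactly how the paper closes its argument (via \cite[Proposition VI.1.23]{Jacod-Shiryaev}). But the route you take to the joint convergence is genuinely different from, and much heavier than, the paper's. You propose to prove convergence of the $\F^{I,\delta}$-intensities of $Y^\delta$ to the Kyle drift $\partial_y \log h^0_n$ via local limit theorems for the Skellam kernel, establish tightness by Aldous' criterion, and then identify the limit through its semimartingale characteristics. The paper instead exploits the bridge structure that is already built into the construction: by Proposition \ref{prop: Y property} i) and ii), conditionally on $\tv^\delta = v_n$ the law of $Y^\delta$ is \emph{exactly} the law of $Z^\delta$ conditioned on $Z^\delta_1 \in [a^\delta_n, a^\delta_{n+1})$, for every $\delta$. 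The whole problem then collapses to the convergence of conditioned random walks to conditioned Brownian motion (delegated to \cite[Lemma 5.4]{Cetin-Xing}, using $a^\delta_n \to a^0_n$ from Lemma \ref{lem: v delta}), with no intensity asymptotics needed at all. What your approach buys is self-containedness and a direct identification of the limiting drift; what the paper's buys is that the singular behaviour near $t=1$ never has to be confronted.

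That last point is where your proposal has a real gap as written: the step you defer (``uniform control of the Skellam ratios \ldots especially in the singular zone where $h^\delta_n$ becomes small'') is not a routine technicality. The limiting drift $\partial_y \log h^0_n(Y^0_r,r)$ blows up as $r \uparrow 1$, so any argument that passes to the limit in the compensator $\int_0^\cdot \mu^\delta(Y^\delta_{r-},r)\,dr$ must control the contribution near the terminal time uniformly in $\delta$; locally uniform convergence of the ratios on compacts of $\delta\Integer\times[0,1)$ is not enough to conclude convergence of the integrals up to time $1$. (The paper faces an analogous near-$t=1$ difficulty only in the proof of Proposition \ref{prop: L conv}, where it is handled by an explicit local-time estimate and an $\epsilon$-cutoff; something of that nature would be needed here too.) Also, your identification step pins down only the law of $(Y^0,\tv)$; to get convergence of the \emph{difference} $Y^\delta - Z^\delta$ you must either argue jointly, or, more simply, show directly that $X^{B,\delta}-X^{S,\delta}$ minus its compensator vanishes (its bracket is $O(\delta)$) and that the compensator converges jointly with $Y^\delta$ --- worth making explicit. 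My recommendation: replace the intensity-asymptotics step by the bridge identification of Proposition \ref{prop: Y property}, which renders all of it unnecessary.
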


\section{Optimizers in the HJB equation are not optimal control} \label{sec: nonexistence}

Theorem \ref{thm: nonexistence} will be proved in this section. Let us first make the heuristic argument for the HJB equation rigorous by using the dynamic programming principle and standard arguments for viscosity solutions. To this end, recall the domain of Hamilton:
\[
 \operatorname*{dom}(H):= \left\{(v_n, y, t, V)\in\{v_1, \cdots, v_N\} \times \delta \Integer \times [0,1] \times \Real-\text{valued functions} \such H(v_n, y, t, V)<\infty\right\}.
\]
Observe that control variables for \eqref{eq: Hamilton} are chosen in $[0,\infty)$. Hence $(v_n, y, t, V) \in \text{dom}(H)$ if
\begin{align}
 & V(y+\delta, t) - V(y,t) + (v_n - p(y+\delta, t)) \delta \leq 0,\label{eq: buy ineq}\\
 & V(y-\delta, t) - V(y,t) - (v_n - p(y-\delta, t))\delta \leq 0. \label{eq: sell ineq}
\end{align}
Moreover, when $(v_n, y, t, V)\in \text{dom}(H)$, the Hamilton is reduced to
\begin{equation}\label{eq: H red}
 H(v_n, y,t,V) = (V(y+\delta,t)-2V(y,t)+V(y-\delta,t))\beta.
\end{equation}
Hence \eqref{eq: HJB} reads
\begin{equation}\label{eq: red HJB}
 -V_t - (V(y+\delta,t)-2V(y,t)+V(y-\delta,t))\beta =0 \quad \text{ in } \text{dom}(H).
\end{equation}

\begin{prop}\label{prop: vis soln}
 The following statements hold for $V^\delta$, $\delta>0$:
 \begin{enumerate}[i)]
  \item $V^\delta$ is a viscosity solution of \eqref{eq: HJB};
  \item $(v_n, y, t, V^\delta)\in \text{dom}(H)$ for any $n\in\{1, \cdots, N\}$ and $(y, t)\in \delta \Integer\times [0,1)$. Hence $V^\delta$ satisfies \eqref{eq: buy ineq}, \eqref{eq: sell ineq}, and is a viscosity solution of \eqref{eq: red HJB};
  \item $t\mapsto V^\delta(y,t)$ is continuous on $[0,1]$;
  \item $V^\delta(y,t) = \expec^{\prob^{\delta, y}}\bra{V^\delta(Z_{s-t}, s)}$ for any $y\in \delta \Integer$,  and $0\leq t\leq s\leq 1$.
 \end{enumerate}
\end{prop}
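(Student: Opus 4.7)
The plan is to base all four items on the dynamic programming principle (DPP)
\begin{equation}\label{eq: DPP plan}
V^\delta(v_n, y, t) = \sup_{(\theta^{i,j})} \expec\bra{\cJ^\delta_{t,s}(\theta) + V^\delta(v_n, Y_s, s)\,\big|\, Y_t = y, \tv^\delta = v_n}, \quad 0 \le t \le s \le 1,
\end{equation}
where $\cJ^\delta_{t,s}(\theta)$ denotes the sum of the six running profit integrals of \eqref{eq: value func} restricted to $[t,s]$. The DPP is obtained by the standard concatenation argument once $V^\delta$ is known to be finite (Lemma \ref{lem: V positive} and Proposition \ref{prop: U>V}). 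I would prove (ii) first, since it is the key reduction that collapses $H$ to the tractable linear form \eqref{eq: H red}; (iii), (i) and (iv) are then derived from this reduction.

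For (ii), I argue by contradiction. If \eqref{eq: buy ineq} fails at some $(n, y, t)$, apply \eqref{eq: DPP plan} on $[t, t+\varepsilon]$ with the constant feedback $\theta^{B,B}\equiv k$ and all other intensities zero. Dynkin's formula applied to $V^\delta$ along the jumps of $X^{B,B}$ and $Z^\delta$, followed by dividing by $\varepsilon$ and letting $\varepsilon\downarrow 0$, produces
\[
0 \ge k\delta\bigl[V^\delta(y+\delta,t) - V^\delta(y,t) + (v_n - p^\delta(y+\delta,t))\delta\bigr] + \beta^\delta\bigl[V^\delta(y+\delta,t) + V^\delta(y-\delta,t) - 2V^\delta(y,t)\bigr],
\]
and letting $k\to\infty$ contradicts finiteness of $V^\delta$. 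The symmetric choice $\theta^{S,S}\equiv k$ rules out a strict violation of \eqref{eq: sell ineq}. Once both inequalities hold, each supremum in \eqref{eq: Hamilton} multiplies a non-positive coefficient by a non-negative control, so all six are attained at $0$ and $H$ reduces to \eqref{eq: H red}.

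Claims (iii) and (i) then follow by standard DPP arguments. For (iii), applying \eqref{eq: DPP plan} with $\theta\equiv 0$ gives $V^\delta(y,t)\ge \expec^{\prob^{\delta,y}}\bra{V^\delta(Z^\delta_{s-t},s)}$, from which right upper-semicontinuity at $t$ follows by the Feller property of $Z^\delta$ together with the polynomial growth granted by Assumption \ref{ass: p poly}; the matching lower-semicontinuity is obtained by extending an $\varepsilon$-optimal strategy at $(y,s)$ back to $[t,s]$ by zero and using the time-continuity of $p^\delta$ (Definition \ref{def: price rule} iii)) together with the integrability condition in Definition \ref{def: insider ad} iv) to show that the running profit on $[t, s]$ is negligible uniformly on this class. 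For (i), a smooth test function $\phi$ touching $V^\delta(y,\cdot)$ from above or below at $t_0\in[0,1)$, combined with the DPP for $\theta\equiv 0$ or a short pulse of control, produces the viscosity sub- and super-solution inequalities for the reduced HJB \eqref{eq: red HJB}, which by (ii) coincide with those for \eqref{eq: HJB}.

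Finally, (iv) is a consequence of (i), (ii) and (iii): together these say that $V^\delta$ is a continuous-in-$t$ viscosity solution of the backward Kolmogorov equation \eqref{eq: red HJB} of the compound Poisson process $Z^\delta$ on $[t,s]$ with terminal value $V^\delta(\cdot, s)$; uniqueness of viscosity solutions in the polynomial-growth class of Assumption \ref{ass: p poly} combined with Feynman-Kac then yields $V^\delta(y,t) = \expec^{\prob^{\delta,y}}\bra{V^\delta(Z^\delta_{s-t},s)}$ for every $0\le t\le s\le 1$. The main obstacle is the contradiction step in (ii): Dynkin's formula must be applied to an unbounded sequence of admissible controls, and the remainder terms have to be controlled uniformly in the scaling parameter $k$ before letting $k\to\infty$, which is where the polynomial growth of $p^\delta$ from Assumption \ref{ass: p poly} is essential.
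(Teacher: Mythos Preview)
Your overall architecture---DPP, reduction to the linear equation \eqref{eq: red HJB}, then a comparison/uniqueness argument to get continuity and the Feynman--Kac representation---is exactly the route the paper takes. The difference is the order in which (i) and (ii) are obtained, and your ordering creates a gap.

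In your argument for (ii) you apply ``Dynkin's formula to $V^\delta$'' and arrive at
\[
0 \ge k\delta\bigl[V^\delta(y+\delta,t) - V^\delta(y,t) + (v_n - p^\delta(y+\delta,t))\delta\bigr] + \beta^\delta\bigl[V^\delta(y+\delta,t) + V^\delta(y-\delta,t) - 2V^\delta(y,t)\bigr].
\]
This inequality is not what the DPP on $[t,t+\varepsilon]$ produces: the term $\bigl(\expec[V^\delta(Y_{t+\varepsilon},t+\varepsilon)]-V^\delta(y,t)\bigr)/\varepsilon$ contains, besides the jump part you kept, a contribution of the form $\bigl(V^\delta(y,t+\varepsilon)-V^\delta(y,t)\bigr)/\varepsilon$ coming from the no-jump event. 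You have dropped it, but at this point of the argument you do not know that $t\mapsto V^\delta(y,t)$ is continuous, let alone differentiable, so this term cannot be controlled and the displayed inequality is unjustified. Any attempt to repair it (e.g.\ stopping at the first jump time and sending $k\to\infty$) still requires at least lower-semicontinuity of $V^\delta$ in $t$ to pass to the limit in $\expec[V^\delta(Y_\tau,\tau)]$, so you end up needing regularity you only prove later.

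The paper avoids this circularity by reversing the order: it first proves (i) via the DPP in the viscosity sense, where Dynkin's formula is applied to smooth \emph{test} functions $\varphi$, not to $V^\delta$ itself. Item (ii) is then immediate from the supersolution inequality: for any test $\varphi$ touching $V_*$ from below one has $-\varphi_t(y,t)\ge H(v_n,y,t,V_*)$, hence $H(v_n,y,t,V_*)<\infty$ and $(v_n,y,t,V_*)\in\mathrm{dom}(H)$. Then $V_*\equiv V$ because $V$ is a supremum of functions continuous in $t$, and taking $\limsup$ in \eqref{eq: buy ineq}--\eqref{eq: sell ineq} gives the same for $V^*$. With (ii) in hand the Hamiltonian collapses to \eqref{eq: H red}, a comparison lemma in the polynomial-growth class yields $V^*\le V_*$ (hence (iii)), and the same comparison identifies $V$ with the Feynman--Kac candidate $\tilde V(y,t)=\expec^{\prob^{\delta,y}}[V^\delta(Z_{1-t},1)]$, which gives (iv). Your treatment of (iii) and (iv) is essentially this; the fix is to prove (i) before (ii) and to read \eqref{eq: buy ineq}--\eqref{eq: sell ineq} off the supersolution property rather than from a Dynkin expansion of $V^\delta$.
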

The proof is postponed to Appendix \ref{app: vis soln} where the dynamic programming principle together with the definition of viscosity solutions are recalled. The proof of Theorem \ref{thm: nonexistence} also requires the following result.

\begin{lem}\label{lem: V positive}
 For any $\delta>0$, $n \in \{1, \cdots, N\}$, and $(y,t)\in \delta\Integer \times [0,1)$, $V^\delta(v_n,y,t)>0$.
\end{lem}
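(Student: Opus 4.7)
My plan is to exhibit an explicit admissible insider strategy whose conditional expected profit starting from $(y,t)$ with $\tilde v^\delta=v_n$ is strictly positive; since $V^\delta$ is the supremum over admissible strategies this immediately yields $V^\delta(v_n,y,t)>0$.

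The candidate is the pure canceling strategy with intensities
\[
 \theta^{B,S}_r := \beta\,\indic_{\{p^\delta(Y_{r-},r)<v_n\}}, \qquad \theta^{S,B}_r := \beta\,\indic_{\{p^\delta(Y_{r-},r)>v_n\}},
\]
and $\theta^{B,B}=\theta^{S,S}=\theta^{B,T}=\theta^{S,T}\equiv 0$: whenever the current market price is strictly below (resp.\ above) the true value $v_n$, the insider cancels every arriving noise sell (resp.\ buy). This is $\F^I$-adapted and Markovian in $(Y_{r-},r)$; it respects the constraints $\theta^{B,S}+\theta^{S,T}\le\beta$ and $\theta^{S,B}+\theta^{B,T}\le\beta$; and the point-process realization via thinning of $Z^B,Z^S$ on a suitably enlarged filtration, together with the integrability condition Definition~\ref{def: insider ad} iv), is analogous to the construction to be carried out in Section~\ref{sec: point process bridge} (using Assumption~\ref{ass: p poly} and the finiteness of all moments of the Poisson-driven process $Y$).

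Substituting these intensities into the integrand of \eqref{eq: value func} collapses the six integrals to
\[
 \delta\beta\,\expec\bra{\int_t^1 |v_n-p^\delta(Y_r,r)|\,\indic_{\{p^\delta(Y_r,r)\ne v_n\}}\,dr\,\Big|\,Y_t=y,\,\tilde v^\delta=v_n},
\]
which is manifestly non-negative. For strict positivity I use two facts: (a) strict monotonicity of $y\mapsto p^\delta(y,r)$ (Definition~\ref{def: price rule} i)), so that for each $r$ there is at most one $y^\ast(r)\in\delta\Integer$ with $p^\delta(y^\ast(r),r)=v_n$; and (b) continuity of $r\mapsto p^\delta(y,r)$ (Definition~\ref{def: price rule} iii)). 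If $p^\delta(y,t)\ne v_n$, then by (b) the integrand is bounded below by a positive constant on some interval $[t,t+\varepsilon]$, and on the positive-probability event $\{Y_r=y\text{ for all }r\in[t,t+\varepsilon]\}$ (which has probability at least $e^{-2\beta\varepsilon}$) the integrand is strictly positive on $[t,t+\varepsilon]$. In the exceptional case $p^\delta(y,t)=v_n$, fact (a) forces $p^\delta(y\pm\delta,t)\ne v_n$, so after one noise jump (an event of positive probability in any positive time interval) the previous argument applies with $y\pm\delta$ in place of $y$. Either way the conditional expectation is strictly positive, which gives $V^\delta(v_n,y,t)>0$.

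The main technical obstacle I anticipate is the admissibility verification---producing genuine point processes with the prescribed $\F^I$-intensities via thinning of $Z^B,Z^S$ and checking Definition~\ref{def: insider ad} iv)---but this is essentially the same construction used for the explicit strategy in Section~\ref{sec: point process bridge}. Once that machinery is imported, the remainder of the argument is a short probability computation exploiting strict monotonicity and continuity of the pricing rule.
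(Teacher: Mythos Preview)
Your proof is correct but takes a different route from the paper's. The paper exhibits a simpler ``one-shot'' strategy using only $\theta^{B,B}$ (or $\theta^{S,S}$ when $n=1$): it waits for the stopping time $\tau=\inf\{r:\, p(Z_{r-}+\delta,r)<v_n\}\wedge 1$ at which the ask price first falls strictly below $v_n$, then submits buy orders at unit intensity until the next jump $\sigma$ of $Y$. The resulting expected profit is $\expec\bigl[\int_\tau^\sigma(v_n-p(Z_{r-}+\delta,r))\,dr\bigr]>0$, with strict positivity coming from $\prob(\tau<1)>0$ via Definition~\ref{def: price rule}~ii). Your continuous canceling strategy is more symmetric and produces the cleaner integrand $|v_n-p^\delta(Y_{r-},r)|$, avoiding the case split on $n$; the price is that admissibility requires a thinning construction for $X^{B,S},X^{S,B}$, whereas the paper's choice of $\theta^{B,B}$ sidesteps this entirely (an additional point process with bounded intensity, independent of $Z$, is immediate to construct). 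That said, in your setting the thinning is actually deterministic since $\theta^{B,S}/\beta\in\{0,1\}$---each noise sell is either canceled or not depending only on the sign of $v_n-p^\delta(Y_{r-},r)$---so the admissibility verification is lighter than you anticipate and does not really need the full machinery of Section~\ref{sec: point process bridge}.
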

\begin{proof}
Without loss of generality, we fix $\delta =1$, $\tv=v_n$ for some $n\in\{1, \cdots, N\}$, and $(y,t)=(0,0)$. The superscript $\delta$ is omitted throughout this proof. When $n>1$, let us construct a strategy where the insider buys once the asset is under-priced. Consider
\[
 \tau:= \inf\{r\,:\, p(Z_{r-}+1, r)<v_n)\}\wedge 1 \quad \text{ and } \quad \sigma:= \inf\{r> \tau\,:\, \Delta Y_r \neq 0\} \wedge 1.
\]
Here $\tau$ is the first time that the asset is under-priced and $\sigma$ is the arrival time of the first order after $\tau$. The insider employs a strategy with intensity $\theta^{B,B}_r = \indic_{\{\tau \leq r\leq \sigma\}}$ and all other intensities zero. Then the associated expected profit is
\[
 \expec\bra{\int_0^1 \pare{v_n - a(Y_{r-}, r)} \indic_{\{\tau \leq r\leq \sigma\}} dr} = \expec\bra{\int_\tau^\sigma \pare{v_n - p(Z_{r-}+1, r)} dr} >0,
\]
where the inequality follows from the definition of $\tau$ and the fact that $\prob(\tau<1)>0$ due to Definition \ref{def: price rule} ii). When $n=1$, set $\tau:= \inf\{t\,:\, p(Z_{t-}-1, t)>v_1\}\wedge 1$ and $\theta^{S,S}_t = \indic_{\{\tau\leq t\leq \sigma\}}$. Argument similar as above shows that this selling strategy also leads to positive expected profit. Therefore, in both cases, $V>0$ is verified.
\end{proof}

\begin{proof}[Proof of Theorem \ref{thm: nonexistence}]
Without loss of generality, we set $\delta =1$ and omit the superscript $\delta$ throughout the proof.

\vspace{2mm}
\noindent \underline{Step 1:} For any $n\in \{1, \cdots, N\}$, either one of the following situations holds:
\begin{itemize}
 \item \eqref{eq: buy ineq} holds as an equality and \eqref{eq: sell ineq} is a strict inequality at all $(y,t)\in \Integer \times [0,1)$;
 \item \eqref{eq: sell ineq} holds as an equality and \eqref{eq: buy ineq} is a strict inequality at all $(y,t)\in \Integer \times [0,1)$.
\end{itemize}
To prove the assertion, observe from \eqref{eq: buy ineq} and \eqref{eq: sell ineq} that
\[
 p(y,t) -v_n \leq V(y+1, t) - V(y,t) \leq p(y+1, t) - v_n, \quad (y, t)\in \Integer\times [0,1).
\]
Since $y\mapsto p(y,t)$ is strictly increasing for any $t\in[0,1)$, there exists $\eta(y,t) \in [0,1]$ such that
\[
 V(y+1,t) - V(y,t) = p(y,t) + \eta(y,t) \pare{p(y+1, t) - p(y,t)} - v_n, \quad (y, t)\in \Integer\times [0,1).
\]

Assume that either \eqref{eq: buy ineq} or \eqref{eq: sell ineq} holds as an equality at some point. If such assumption fails, both inequalities in \eqref{eq: buy ineq} and \eqref{eq: sell ineq} are strict at all points in $\Integer \times [0,1)$. Then all optimizers in \eqref{eq: Hamilton} are identically zero, with the associated expected profit zero. Since $V>0$, cf. Lemma \ref{lem: V positive}, these trivial optimizers  are not optimal strategies for \eqref{eq: value func}. Hence the statement of the theorem is already confirmed in this trivial situation. Let us now assume \eqref{eq: sell ineq} holds as an equality at $(y_0+1, t_0)$, we will show \eqref{eq: sell ineq} is an identity. On the other hand, combining the identity in \eqref{eq: sell ineq} and the strict monotonicity of $y\mapsto p(y,t)$, we obtain
\[
 V(y+1, t) - V(y,t) = p(y, t) - v_n < p(y+1, t) - v_n, \quad (y,t)\in \Integer\times[0,1),
\]
hence the inequality \eqref{eq: buy ineq} is always strict. The other situation where \eqref{eq: buy ineq} is an identity and \eqref{eq: sell ineq} is strict can be proved analogously.

Since \eqref{eq: sell ineq} holds as an equality at $(y_0+1, t_0)$, then, for any $s\in (t_0, 1)$,
\begin{equation*}
 \expec^{y_0}\bra{p(Z_{s-t_0}, s)} - v_n = p(y_0, t_0) - v_n = V(y_0+1, t_0) - V(y_0, t_0) = \expec^{y_0}\bra{V(Z_{s-t_0}+1, s) - V(Z_{s-t_0}, s)},
\end{equation*}
where the first identity follows from \eqref{eq: p func} and the Markov property of $Z$, the third identity is obtained after applying Proposition \ref{prop: vis soln} iv) twice. On the other hand, the definition of $\eta(y, t)$ yields
\begin{equation*}
 \expec^{y_0}\bra{V(Z_{s-t_0}+1, s) - V(Z_{s-t_0}, s)} = \expec^{y_0}\bra{p(Z_{s-t_0}, s) + \eta(Z_{s-t_0}, s)\pare{p(Z_{s-t_0}+1, s) - p(Z_{s-t_0}, s)}} -v_n.
\end{equation*}
The last two identities combined imply
\begin{equation}\label{eq: exp eta 0}
 \expec^{y_0} \bra{\eta(Z_{s-t_0}, s) \pare{p(Z_{s-t_0}+1, s) - p(Z_{s-t_0}, s)}} =0.
\end{equation}
Recall that $\eta\geq 0$, $p(\cdot+1, s)-p(\cdot, s)>0$ for any $s<1$, and the distribution of $Z_{s_0-t}$ has positive mass on each point in $\Integer$. We then conclude from \eqref{eq: exp eta 0} that $\eta(y,s) =0$ for any $y\in \Integer$. Since $s$ is arbitrarily chosen,
\begin{equation}\label{eq: eta 0 s}
 \eta(y,s) =0, \quad \text{for any } y\in \Integer, t_0<s<1.
\end{equation}
Now fix $s$, the previous identity yields, for any $t<s$ and $y\in \Integer$,
\[
 V(y+1, t) -V(y,t) = \expec^y\bra{V(Z_{s-t}+1, s) - V(Z_{s-t}, s)} = \expec^y\bra{p(Z_{s-t}, s)} - v_n = p(y,t) - v_n,
\]
where Proposition \ref{prop: vis soln} iv) is applied twice again to obtain the first identity. Therefore $\eta(y,t) =0$ for any $y\in \Integer$ and $t\leq s$, which combined with \eqref{eq: eta 0 s}, implies \eqref{eq: sell ineq} is an identity.

\vspace{2mm}

\noindent \underline{Step 2:} Fix $1<n<N$. When \eqref{eq: sell ineq} is an identity, any optimizers in \eqref{eq: Hamilton} are shown not to be the optimal strategy for \eqref{eq: value func}. When \eqref{eq: buy ineq} is an identity, a similar argument leads to the same conclusion. Combined with the result in Step 1, the statement of the theorem is confirmed.

When \eqref{eq: sell ineq} is an identity, sending $t\rightarrow 1$, $V(y,1)$, defined as $\lim_{t\rightarrow 1} V(y,t)$, satisfies
\[
 V(y-1,1) - V(y,1) = v_n-P(y-1).
\]
The previous identity and \eqref{eq: P} combined imply that $V(y,1)$ is strictly decreasing when $y< a_n+1$, constant when $y\in [a_n+1, a_{n+1}+1)$, and strictly increasing when $y\geq a_{n+1}+1$. Thus $y\mapsto V(y,1)$ attains its minimum value when $y\in [a_n+1, a_{n+1}]$. Let $(\hat{X}^B, \hat{X}^S)$ be the point processes whose $\F^I$-intensities are optimizers $\hat{\theta}^{i,j}$, $i\in \{B,S\}$ and $j\in \{B,T,S\}$, in \eqref{eq: Hamilton}, and set $\hat{Y}= Z + \hat{X}^B- \hat{X}^S$. Assuming that $(\hat{X}^B, \hat{X}^S)$ is the optimal strategy for \eqref{eq: value func}, DPP i) in Appendix \ref{app: vis soln} implies
\[
     \begin{split}
      &V(y,t) \\
      \geq &\expec^{y,t} \left[V(\hat{Y}_1, 1)\right.\\
      & + \int_t^1 (v_n-p(\hat{Y}_{r-}+1,r))d\hat{X}_r^{B,B} + \int_t^1 (v_n - p(\hat{Y}_{r-}+2,r)) d\hat{X}^{B,T}_r + \int_t^1 (v_n-p(\hat{Y}_{r-},r))d\hat{X}_r^{B,S}\\
      &\left.-\int_t^1 (v_n-p(\hat{Y}_{r-}-1,r))d\hat{X}_r^{S,S} -\int_t^1 (v_n - p(\hat{Y}_{r-}-2, r)) d\hat{X}^{S,T}_r - \int_t^1 (v_n-p(\hat{Y}_{r-},r))d\hat{X}_r^{S,B} \right],
     \end{split}
\]
where the expectation is taken under $\prob^{y,t}$ with $\prob^{y,t}(\hat{Y}_t = y)=1$.
However, the value function $V(y,t)$ is exactly the expected profit when the insider employs the optimal strategy $(\hat{X}^B, \hat{X}^S)$. Therefore, the previous identity yields
\[
 \expec^{y,t}[V(\hat{Y}_1, 1)] =0.
\]
Recall that $V(\cdot, 1)$, as limit of positive functions, is nonnegative, and it achieves the minimum at $[a_n+1, a_{n+1}]$. The previous identity implies  $V(y,1) =0$ when $y\in [a_n+1, a_{n+1}]$ and
\begin{equation}\label{eq: Y_1 terminal}
 \hat{Y}_1 \in [a_n+1, a_{n+1}], \quad \prob^{y,t}-a.s..
\end{equation}

However, when \eqref{eq: sell ineq} is an identity and \eqref{eq: buy ineq} is a strict inequality, any optimizer of \eqref{eq: Hamilton} satisfies $\hat{\theta}^{B,B}= \hat{\theta}^{B,S}\equiv 0$, i.e., $\hat{X}^{B}\equiv 0$. Therefore, $\hat{Y} = Z^B - Z^S - \hat{X}^S$ with only negative controlled jumps from $\hat{X}^S$ cannot compensate $Z^S$ to satisfy \eqref{eq: Y_1 terminal}, where $[a_n+1, a_{n+1}]$ is a finite interval in $\Integer$ when $1<n<N$.
\end{proof}

\section{A suboptimal strategy}\label{sec: suboptimal}

We start to prepare the proof of Theorem \ref{thm: main} from this section.

\begin{center}
\textit{For the rest of the paper, $N<\infty$, assumed in Theorem \ref{thm: main}, is enforced unless stated otherwise.}
\end{center}

In this section we are going to characterize a suboptimal strategy of feedback form in the Glosten Milgrom model with order size $\delta$, such that the pricing rule \eqref{eq: p func} is rational. To simplify presentation, we will take $\delta =1$, hence omit all superscript $\delta$, throughout this section. Scaling all processes by $\delta$ gives the desired processes when the order size is $\delta$.

The following standing assumption on distribution of $\tv$ will be enforced throughout this section:
\begin{ass}\label{ass: middle level}
 There exists a strictly increasing sequence $(a_n)_{n=1, \cdots, N+1}$ such that
 \begin{enumerate}[i)]
  \item $a_n\in \Integer\cup \{-\infty, \infty\}$, $a_1=-\infty$, $a_{N+1}=\infty$, and $\cup_{n=1}^{N} [a_n, a_{n+1}) = \Integer\cup\{-\infty\}$;
  \item $\prob(Z_1 \in [a_n, a_{n+1})) = \prob(\tv=v_n)$, $n=1, \cdots, N$;
  \item The middle level $m_n=(a_n + a_{n+1}-1)/2$ of the interval $[a_n, a_{n+1})$ is not an integer.
 \end{enumerate}
\end{ass}
Item i) and ii) have already been assumed in Assumption \ref{ass: tv}. Item iii) is a technical assumption which facilities the construction of the suboptimal strategy. In the next section, when an arbitrary $\tv$ of distribution \eqref{eq: v dist} is considered and the order size $\delta$ converges to zero, a sequence  $(a_n^\delta)_{n=1, \cdots, N+1, \delta>0}$ together with a sequence of random variables $(\tv^\delta)_{\delta>0}$ will be constructed, such that Assumption \ref{ass: middle level} is satisfied for each $\delta$ and $\tv^\delta$ converges to $\tv$ in law. To simplify notation, we denote by $\dm:= \lfloor (a_n+ a_{n+1}-1)/2\rfloor$ the largest integer smaller than $m_n$ and by $\um:= \lceil (a_n+ a_{n+1}-1)/2 \rceil$ the smallest integer larger than $m_n$. Assumption \ref{ass: middle level} iii) implies $a_n \leq \dm < m_n < \um < a_{n+1}$ and $\um - \dm =1$ when both $a_n$ and $a_{n+1}$ are finite.

Let us now define a function $U$, which relates to the expected profit of a suboptimal strategy and also dominates the value function $V$. First the Markov property $Z$ implies that $p$ is continuously differentiable in the time variable and satisfies\footnote{This follows from the same argument as in \cite[Footnote 4]{Cetin-Xing}.}
\begin{equation}\label{eq: p pde}
\begin{split}
 & p_t + \pare{p(y+1, t) - 2p(y,t) + p(y-1, t)} \beta =0, \quad (y,t)\in \Integer \times [0,1),\\
 & p(y,1) = P(y).
\end{split}
\end{equation}
Define
\begin{equation}\label{eq: U 1}
 U(v_n, y, 1) := \sum_{j=y}^{a_n-1} (v_n - A(j)) \, \indic_{\{y\leq \dm\}} + \sum_{j=a_{n+1}}^y (B(j)- v_n) \, \indic_{\{y\geq \um\}}, \quad y\in \Integer, 1\leq n\leq N,
\end{equation}
where $A(y):= P(y+1)$ and $B(y):= P(y-1)$ can be considered as ask and bid pricing functions right before time $1$.  Since $(v_n)_{n=1, \cdots, N}$ is increasing, $U(\cdot, \cdot, 1)$ is nonnegative and
\begin{equation}\label{eq: U=0}
 U(v_n, y, 1) =0 \quad \Longleftrightarrow \quad y\in [a_n-1, a_{n+1}+1).
\end{equation}
Given $U(\cdot, \cdot, 1)$ as above, $U$ is extended to $t\in [0,1)$ as follows:
\begin{align}
 U(v_n, y, t) &:= U(v_n, y, 1) + \beta \int_t^1 \pare{p(y,r) - p(y-1, r)} dr, \quad y \geq \um,\label{eq: U um}\\
 U(v_n, y, t) &:= U(v_n, y, 1) + \beta \int_t^1 \pare{p(y+1, r) - p(y, r)} dr, \quad y\leq \dm, \label{eq: U dm}
\end{align}
for $t\in [0,1)$ and $n=1, \cdots, N$. Since $N$ is finite, $p$ is bounded, hence $U$ takes finite value.

\begin{prop}\label{prop: ep<U-L}
 Let Assumption \ref{ass: middle level} hold. Suppose that the market maker chooses $p$ in \eqref{eq: p func} as the pricing rule. Then for any insider's admissible strategy $(X^B, X^S; \F^I)$, with $\F^I$-intensities $\theta^{i,j}, i\in \{B,S\}$ and $j\in\{B,T,S\}$, the associated expected profit function $\mathcal{J}(v_n, y, t; X^B, X^S)$ satisfies
 \begin{equation}\label{eq: ep<U-L}
  \mathcal{J}(v_n, y,t; X^B, X^S) \leq U(v_n, y, t) - L(v_n, y, t), \quad n\in\{1, \cdots, N\}, (y,t)\in \Integer \times [0,1].
 \end{equation}
 where
 \begin{equation}\label{eq: L}
 \begin{split}
 &L(v_n, y, t) \\
 := & \quad \expec^y \bra{\left.\int_t^1 \pare{v_n- p(\dm, r)} \bra{\pare{\beta - \theta^{B, S}_r + \theta^{S,S}_r} \, \indic_{\{Y_{r-} = \um\}} + \theta^{S,T}_r \, \indic_{\{Y_{r-}=\um+1\}}}dr \right| \tv= v_n} \\
  & - \expec^y \bra{\left.\int_t^1 \pare{v_n- p(\um, r)} \bra{\pare{\beta - \theta^{S,B}_r + \theta^{B,B}_r} \, \indic_{\{Y_{r-} = \dm\}} + \theta^{B,T}_r \, \indic_{\{Y_{r-}=\dm-1\}}}dr \right| \tv= v_n}.
 \end{split}
 \end{equation}
 Moreover \eqref{eq: ep<U-L} is an identity when the following conditions are satisfied:
 \begin{enumerate}[i)]
  \item $Y_1 \in [a_n-1, a_{n+1}+1)$ a.s. when $\tv= v_n$;
  \item $X^{S,S}_t = X^{S,B}_t \equiv 0$ when $Y_{t-} \leq \dm$, $X^{B,B}_t = X^{B,S}_t \equiv 0$ when $Y_{t-}\geq \um$, $\theta^{B,T}\equiv 0$ when $y\geq \dm$, and $\theta^{S,T}\equiv 0$ when $y\leq \um$.
 \end{enumerate}
\end{prop}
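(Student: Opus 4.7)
The plan is to apply a Dynkin/It\^o expansion to $U(v_n, Y_r, r)$ on $[t,1]$ and, after taking $\expec[\cdot\mid\tv=v_n, Y_t=y]$, produce an identity of the form
\[
 U(v_n, y, t) = \expec^{y,t}\bra{U(v_n, Y_1, 1)\mid\tv=v_n} + \mathcal{J}(v_n, y, t; X^B, X^S) + L(v_n, y, t) + \mathcal{R}(v_n, y, t; X^B, X^S),
\]
in which $L$ collects the boundary contributions localised on the transition layer $\{\dm-1,\dm,\um,\um+1\}$ and $\mathcal{R}\geq 0$ collects ``slack'' coming from any insider activity on the wrong side of the gap $(\dm,\um)$. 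Since $U(v_n,\cdot,1)\geq 0$ by \eqref{eq: U 1} and $\mathcal{R}\geq 0$, \eqref{eq: ep<U-L} follows immediately. This extends the verification argument used in the Bernoulli case of \cite{Cetin-Xing}; by Theorem \ref{thm: nonexistence}, $U$ cannot be a genuine HJB value function here, so it is instead constructed as a careful upper bound adapted to the geometry of the gap $(\dm,\um)$.

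First I would derive the infinitesimal identities for $U$. Differentiating \eqref{eq: U um}--\eqref{eq: U dm} in $t$ and substituting the PDE \eqref{eq: p pde} for $p$, I expect to verify that for $y\in\Integer\setminus\{\dm-1,\dm,\um,\um+1\}$ and $r\in[t,1)$,
\[
 U_t(v_n,y,r) + \beta\bra{U(v_n,y+1,r) - 2U(v_n,y,r) + U(v_n,y-1,r)} = 0,
\]
while at the four transition states the same discrete Laplace operator picks up explicit defects featuring the factors $(v_n-p(\dm,r))$ and $(v_n-p(\um,r))$, whose $\beta$-weighted coefficients match exactly those in \eqref{eq: L}.

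Next I would apply the $\F^I$-semimartingale change-of-variables formula to $U(v_n,Y_r,r)$. Under $\F^I$, $Y$ admits jumps of $\pm 1$ at rates $\beta-\theta^{B,T}_r-\theta^{S,B}_r+\theta^{B,B}_r$ and $\beta-\theta^{S,T}_r-\theta^{B,S}_r+\theta^{S,S}_r$, and jumps of $\pm 2$ at rates $\theta^{B,T}_r$ and $\theta^{S,T}_r$. Grouping the drift of $U(v_n,Y_r,r)$ by the six control intensities and invoking the identities above, each $\theta^{\cdot,\cdot}_r$-weighted term rearranges as the corresponding profit integrand in \eqref{eq: profit} plus a residual of the form $\theta^{\cdot,\cdot}_r\,\bra{p(y+1,r)-p(y,r)}$ (or its sell analogue), which is $\geq 0$ whenever the insider trades against the drift of $U$ --- $\theta^{B,B}$ or $\theta^{B,S}$ positive on $\{y\geq\um\}$, $\theta^{S,S}$ or $\theta^{S,B}$ on $\{y\leq\dm\}$, or $\theta^{B,T},\theta^{S,T}$ in the wrong wing --- and identically zero on the correct side. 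Taking $\expec[\cdot\mid\tv=v_n, Y_t=y]$ and killing the compensated point-process martingales (their integrability comes from Definition \ref{def: insider ad} iv) together with the polynomial growth of $p$, hence of $U$, enforced by Assumption \ref{ass: p poly}) yields the displayed identity.

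For the equality characterisation, the slack in \eqref{eq: ep<U-L} equals $\expec^{y,t}[U(v_n,Y_1,1)\mid\tv=v_n] + \mathcal{R}$: the first summand vanishes iff $Y_1\in[a_n-1,a_{n+1}+1)$ a.s.\ by \eqref{eq: U=0}, yielding i), while $\mathcal{R}=0$ iff every wrong-side intensity listed above is zero, yielding ii). The main obstacle I anticipate is the bookkeeping at the transition states $\{\dm-1,\dm,\um,\um+1\}$: away from them the PDE provides a clean cancellation between $U_t$ and the noise-jump compensator, but at the transitions the two branches of $U$ cross with a mismatch whose algebraic content has to combine the $\pm 1$-jump rates at $\{\dm,\um\}$ with the $\pm 2$ topping-up rates at $\{\dm-1,\um+1\}$ in precisely the combination $(\beta-\theta^{B,S}_r+\theta^{S,S}_r)$, $\theta^{S,T}_r$, $(\beta-\theta^{S,B}_r+\theta^{B,B}_r)$, $\theta^{B,T}_r$ appearing in \eqref{eq: L}.
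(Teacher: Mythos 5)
Your proposal is correct and follows essentially the same route as the paper: the verification argument via It\^o's formula applied to $U(v_n,Y_\cdot,\cdot)$, the discrete heat-equation identities for $U$ with defect terms at the middle levels (the paper's Lemma \ref{lem: U eqn}), a nonnegative slack term (the paper's $K$, your $\mathcal{R}$) collecting wrong-side trading, and the boundary term $L$ assembled from the $\beta$-defects at $\dm,\um$ together with the size-$2$ topping-up jumps from $\dm-1,\um+1$. The equality characterisation via $U(\cdot,1)=0$ on $[a_n-1,a_{n+1}+1)$ and vanishing of the slack also matches the paper.
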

Before proving this result, let us derive equations that $U$ satisfies. The following result shows that $U$ satisfies \eqref{eq: red HJB} except when $y=\um$ and $y= \dm$, and $U$ satisfies the identity in either \eqref{eq: buy ineq} or \eqref{eq: sell ineq} depending on whether $y\leq \dm$ or $y\geq \um$.

\begin{lem}\label{lem: U eqn}
 The function $U$ satisfies the following equations: (Here $\tv=v_n$ is fixed and the dependence on $\tv$ is omitted in $U$.)
 \begin{align}
  & U_t + \pare{U(y+1, t) - 2U(y,t) + U(y-1, t)} \beta =0, & y>\um \text{ or } y<\dm, \label{eq: U har}\\
  & U_t + \pare{U(y+1, t) - 2U(y,t) + U(y-1, t)} \beta = (p(\dm, t) - v_n) \beta, & y= \um, \label{eq: U eq um}\\
  & U_t + \pare{U(y+1, t) - 2U(y,t) + U(y-1, t)} \beta = (v_n - p(\um, t)) \beta, & y= \dm, \label{eq: U eq dm}\\
  & U(y, t) - U(y+1,t) - (v_n - p(y,t))=0, & y\geq \um, \label{eq: U eqn >um}\\
  & U(y, t) - U(y-1,t) + (v_n - p(y,t))=0, & y\leq \dm, \label{eq: U eq <dm}
 \end{align}
\end{lem}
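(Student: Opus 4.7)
The plan is to reduce all five equations to the two one-step difference identities \eqref{eq: U eqn >um} and \eqref{eq: U eq <dm} together with one matching fact at the middle of the interval.

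First I would verify \eqref{eq: U eqn >um} (the proof of \eqref{eq: U eq <dm} being symmetric). At $t=1$ the claim is a telescoping identity: for $y\ge\um$, \eqref{eq: U 1} together with $B(j)=P(j-1)$ and $P\equiv v_n$ on $[a_n,a_{n+1})$ give $U(y,1)-U(y+1,1)=v_n-P(y)=v_n-p(y,1)$, where one checks the cases $y=\um$ (both sides vanish because $\um, \um+1\in[a_n-1,a_{n+1}+1)$ by \eqref{eq: U=0}) and $y\ge a_{n+1}$ separately. To extend to $t<1$, I would differentiate \eqref{eq: U um} in $t$: a direct computation combined with the PDE \eqref{eq: p pde} yields
$$\partial_t\bigl(U(y,t)-U(y+1,t)\bigr)=\beta\bigl(p(y+1,t)-2p(y,t)+p(y-1,t)\bigr)=-p_t(y,t).$$
Integrating from $t$ to $1$ and using the terminal identity gives \eqref{eq: U eqn >um}; the same argument applied to \eqref{eq: U dm} produces \eqref{eq: U eq <dm}.

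Next comes the matching step at the middle. By Assumption \ref{ass: middle level} iii) one has $\um-\dm=1$ and both $\um,\dm\in[a_n,a_{n+1})\subset[a_n-1,a_{n+1}+1)$, so \eqref{eq: U=0} forces $U(\um,1)=U(\dm,1)=0$. Plugging $y=\um$ into \eqref{eq: U um} and $y=\dm$ into \eqref{eq: U dm} then produces the common value
$$U(\um,t)=U(\dm,t)=\beta\int_t^1\bigl(p(\um,r)-p(\dm,r)\bigr)\,dr.$$
This is the only place where the non-integrality of $m_n$ truly bites, since it is precisely what aligns the two ``half-definitions'' \eqref{eq: U um}--\eqref{eq: U dm} on their shared boundary.

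Finally I would derive the three $t$-equations. For $y>\um$, applying \eqref{eq: U eqn >um} at levels $y$ and $y-1$ gives $U(y+1,t)-2U(y,t)+U(y-1,t)=p(y,t)-p(y-1,t)$, while \eqref{eq: U um} gives $U_t(y,t)=-\beta(p(y,t)-p(y-1,t))$; adding yields \eqref{eq: U har}. The case $y<\dm$ is symmetric via \eqref{eq: U dm} and \eqref{eq: U eq <dm}. For the boundary equation at $y=\um$, the matching gives $U(\um,t)-U(\um-1,t)=U(\um,t)-U(\dm,t)=0$, so combined with \eqref{eq: U eqn >um} at $y=\um$ one obtains $U(\um+1,t)-2U(\um,t)+U(\um-1,t)=p(\um,t)-v_n$, and \eqref{eq: U um} gives $U_t(\um,t)=-\beta(p(\um,t)-p(\dm,t))$; summing produces exactly $\beta(p(\dm,t)-v_n)$, i.e., \eqref{eq: U eq um}. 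The derivation of \eqref{eq: U eq dm} at $y=\dm$ is entirely symmetric. The only genuine subtlety is the middle-matching step; once that is in hand, the rest is a routine discrete calculation.
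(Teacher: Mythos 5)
Your proposal is correct and follows essentially the same route as the paper: verify the one-step identities \eqref{eq: U eqn >um} and \eqref{eq: U eq <dm} from the terminal data \eqref{eq: U 1} together with the time-integral definitions \eqref{eq: U um}--\eqref{eq: U dm} and the heat-type equation \eqref{eq: p pde}, note the matching $U(\um,\cdot)=U(\dm,\cdot)$, and obtain the second-difference equations by combining these. The only differences are cosmetic (you differentiate and re-integrate where the paper integrates directly, and you spell out the middle-matching step that the paper merely "observes").
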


\begin{proof}
 We will only verify these equations when $y\geq \um$. The remaining equations can be proved similarly. First \eqref{eq: U 1} implies
 \[
  U(y+1, 1) - U(y,1) = B(y+1) - v_n = P(y) - v_n, \quad y\geq \um.
 \]
 Combining the previous identity with \eqref{eq: U um},
 \begin{align*}
  U(y+1, t) - U(y,t) &= U(y+1, 1) - U(y,1) + \beta \int_t^1 \pare{p(y+1, r) - 2p(y,r) + p(y-1, r)} dr \\
  &= p(y,t) -v_n,
 \end{align*}
 where \eqref{eq: p pde} is used to obtain the second identity. This verifies \eqref{eq: U eqn >um}. When $y> \um$, summing up \eqref{eq: U eqn >um} at $y$ and $y+1$, and taking time derivative in \eqref{eq: U um}, yield
 \begin{align*}
 & U_t  +\pare{U(y+1, t) - 2 U(y,t) + U(y-1, t)} \beta \\
 &= -\beta(p(y,t) - p(y-1, t)) + \beta (p(y,t) - p(y-1, t))\\
 &=0,
 \end{align*}
 which confirms \eqref{eq: U har} when $y>\um$. When $y= \um$, observe from \eqref{eq: U 1}, \eqref{eq: U um} and \eqref{eq: U dm} that $U(\um, \cdot) = U(\dm, \cdot)$. Then
 \begin{align*}
 & U_t + \pare{U(y+1, t) - 2U(y,t) + U(y-1, t)} \beta \\
 &= -\beta\pare{p(\um, t) - p(\dm, t)} + \beta\pare{U(\um+1, t) - U(\um, t)}\\
 &= -\beta \pare{p(\um, t) - p(\dm, t)} - \beta \pare{v_n - p(\um, t)}\\
 &= \beta\pare{p(\dm, t) - v_n},
 \end{align*}
 where the second identity follows from \eqref{eq: U eqn >um}.
\end{proof}

\begin{proof}[Proof of Proposition \ref{prop: ep<U-L}]
 Throughout the proof the $\tv=v_n$ is fixed and the dependence on $\tv$ is omitted in $U$. Let $Y^B= Z^B + X^{B,B} + X^{B,T}- X^{S, B}$ and $Y^S= Z^S + X^{S,S}+ X^{S,T} - X^{B,S}$ be positive and negative parts of $Y$ respectively. Then $Y^B - \int_0^\cdot (\beta - \theta^{S, B}_r -\theta^{B,T}_r) dr -\int_0^\cdot \theta^{B,B}_r dr -2\int_0^\cdot \theta^{B,T}_r dr$ and $Y^S- \int_0^\cdot (\beta - \theta^{B,S}_r - \theta^{S,T}_r) dr - \int_0^\cdot \theta^{S,S}_r dr - 2\int_0^\cdot \theta^{S,T}_r dr$ are $\F^I$-martingales. Applying It\^{o}'s formula to $U(Y_\cdot, \cdot)$, we obtain
 \begin{equation}\label{eq: U ito}
 \begin{split}
  &U(Y_1, 1) \\
  = &U(y,t) + \int_t^1 U_t (Y_{r-}, r) dr \\
  & + \int_t^1 \bra{U(Y_{r}, r) - U(Y_{r-}, r)}  dY^B_r + \int_t^1 \bra{U(Y_{r}, r) - U(Y_{r-}, r)} dY^S_r\\
  =& U(y,t) + \int_t^1 \bra{U_t(Y_{r-}, r) +\pare{U(Y_{r-}+1, r) - 2U(Y_{r-}, r) + U(Y_{r-}-1, r)}\beta}  dr\\
  &+ \int_t^1 \bra{U(Y_{r-} +1, r) - U(Y_{r-}, r)} \pare{\theta^{B,B}_r- \theta^{S,B}_r} dr\\
  &+ \int_t^1 \bra{U(Y_{r-}+2,r) - U(Y_{r-}+1,r)} \theta^{B,T}_r dr\\
  &+\int_t^1 \bra{U(Y_{r-}-1, r) - U(Y_{r-}, r)} \pare{\theta^{S,S}_r - \theta^{B,S}_r} dr \\
  &+ \int_t^1 \bra{U(Y_{r-}-2, r) - U(Y_{r-}-1, r)} \theta^{S,T}_r dr + M_1 - M_t,
 \end{split}
 \end{equation}
 where
 \begin{align*}
  M= &\quad\int_0^\cdot \bra{U(Y_{r}, r) - U(Y_{r-}, r)} d\pare{Y^B_r - \int_0^r \pare{\beta-\theta^{S,B}_u+ \theta^{B,B}_u + \theta^{B,T}_u} du}\\
  & + \int_0^\cdot \bra{U(Y_{r}, r) - U(Y_{r-}, r)} d\pare{Y^S_r - \int_0^r \pare{\beta- \theta^{B,S}_u + \theta^{S,S}_u+\theta^{S,T}_u} du}.
 \end{align*}
 Since \eqref{eq: U eqn >um} and \eqref{eq: U eq <dm} imply $U(y+1, t)- U(y,t)$ is either $p(y,t)- v_n$ or $p(y+1, t) - v_n$, which are both bounded from below by $v_1-v_n$ and from above by $v_N-v_n$, hence $M$ is an $\F^I$-martingale (cf. \cite[Chapter I, T6]{Bremaud}). On the right hand side of \eqref{eq: U ito}, splitting the second integral on $\{Y_{r-} \geq \um\}$, $\{Y_{r-}=\dm\}$, and $\{Y_{r-} < \dm\}$, splitting the fourth integral on $\{Y_{r-} > \um\}$, $\{Y_{r-} = \um\}$, and $\{Y_{r-} \leq \dm\}$, utilizing $U(\um, \cdot) = U(\dm, \cdot)$, as well as different equations in Lemma \ref{lem: U eqn} in different regions, we obtain
 \begin{align*}
  &U(Y_1, 1)\\
  &= U(y,t) + \int_t^1 \pare{p(\dm, r) - v_n} \beta \indic_{\{Y_{r-} = \um\}} dr + \int_t^1 \pare{v_n - p(\um, r)} \beta \indic_{\{Y_{r-} = \dm\}} dr\\
  & \quad - \int_t^1 \pare{v_n - p(Y_{r-}, r)} \indic_{\{Y_{r-} \geq \um\}} (\theta^{B,B}_r- \theta^{S,B}_r) dr - \int_t^1 \pare{v_n - p(Y_{r-}+1, r)} \indic_{\{Y_{r-} < \dm\}} (\theta^{B,B}_r - \theta^{S,B}_r) dr\\
  & \quad - \int_t^1 \pare{v_n - p(Y_{r-}+1, r)} \indic_{\{Y_{r-}\geq \dm\}} \theta^{B,T}_r dr - \int_t^1 \pare{v_n - p(Y_{r-}+2, r)} \indic_{\{Y_{r-}< \dm-1\}} \theta^{B,T}_r dr\\
  & \quad + \int_t^1 \pare{v_n - p(Y_{r-}-1, r)} \indic_{\{Y_{r-} > \um\}} (\theta^{S,S}_r - \theta^{B,S}_r) dr + \int_t^1 \pare{v_n - p(Y_{r-}, r)} \indic_{\{Y_{r-} \leq \dm\}} (\theta^{S,S}_r - \theta^{B,S}_r) dr\\
  & \quad + \int_t^1 \pare{v_n - p(Y_{r-} -2, r)} \indic_{\{Y_{r-} > \um + 1\}} \theta^{S,T}_r dr + \int_t^1 \pare{v_n - p(Y_{r-} -1,r)} \indic_{\{Y_{r-}\leq \um\}} \theta^{S,T}_r dr\\
  & \quad + M_1 - M_t.
 \end{align*}
 Rearranging the previous identity by putting the profit of $(X^B, X^S)$ to the left hand side, we obtain
 \begin{equation}\label{eq: profit iden}
 \begin{split}
  &\int_t^1 (v_n - p(Y_{r-}+1, r)) \theta^{B,B}_r dr + \int_t^1 \pare{v_n - p(Y_{r-}+2, r)}\theta^{B,T}_r dr+ \int_t^1 (v_n - p(Y_{r-}, r)) \theta^{B,S}_r dr\\
  &- \int_t^1 (v_n - p(Y_{r-}-1, r)) \theta^{S,S}_r dr- \int_t^1 \pare{v_n - p(Y_{r-}-2, r)} \theta^{S,T}_r dr - \int_t^1 (v_n - p(Y_{r-}, r)) \theta^{S,B}_r dr\\
  =& U(y,t) - U(Y_1, 1) - K - L + M_1 - M_t,
 \end{split}
 \end{equation}
 where
 \begin{align*}
  K= &\quad \int_t^1 \pare{p(Y_{r-}+1, r) - p(Y_{r-}, r)} \indic_{\{Y_{r-} \geq \um\}} \theta^{B,B}_r dr + \int_t^1 \pare{p(Y_{r-}, r) -p(Y_{r-}-1, r)} \indic_{\{Y_{r-} \geq \um\}} \theta^{B,S}_r dr\\
  & \quad \int_t^1 \pare{p(Y_{r-}+2, r)-p(Y_{r-} +1, r)} \indic_{\{Y_{r-}\geq \dm\}} \theta^{B,T}_r dr\\
  & + \int_t^1 \pare{p(Y_{r-}, r) - p(Y_{r-}-1, r)} \indic_{\{Y_{r-}\leq \dm\}} \theta^{S,S}_r dr + \int_t^1 \pare{p(Y_{r-}+1, u) - p(Y_{r-}, r)} \indic_{\{Y_{r-} \leq \dm\}} \theta^{S,B}_r dr\\
  & \quad + \int_t^1 \pare{p(Y_{r-}-1, r) -p(Y_{r-}-2, r)} \indic_{\{Y_{r-} \leq \um\}}\theta^{S,T}_r dr,\\
  L= & \quad \int_t^1 [v_n - p(\dm, r)] \bra{(\beta- \theta^{B,S}_r + \theta^{S,S}_r) \indic_{\{Y_{r-}= \um\}} + \theta^{S,T}_r \indic_{\{Y_{r-}=\um +1\}}}dr\\
  &  - \int_t^1 [v_n- p(\um, r)]\bra{(\beta- \theta^{S,B}_r + \theta^{B,B}_r) \indic_{\{Y_{r-}= \dm\}} + \theta^{B,T}_r \indic_{\{Y_{r-}= \dm-1\}}}dr.
 \end{align*}
 Taking conditional expectation $\expec[\cdot |\F^I_t, Y_t =y]$ on both sides of \eqref{eq: profit iden}, the left hand side is the expected profit $\mathcal{J}(X^B, X^S)$, while, on the right hand side, both $U(\cdot, 1)$ and $K$ are nonnegative (cf. Definition \ref{def: price rule} i)). Therefore \eqref{eq: ep<U-L} is verified. To attain the identity in \eqref{eq: ep<U-L}, we need i) $Y_1 \in [a_n-1, a_{n+1}+1)$ a.s. so that $U(Y_1, 1) =0$ a.s. follows from \eqref{eq: U=0}; ii) $\theta^{B,B}= \theta^{B,S} \equiv 0$ when $y\geq \um$, $\theta^{S,S}= \theta^{S,B}\equiv 0$ when $y\leq \dm$, $\theta^{B,T} \equiv 0$ when $y\geq \dm$, and $\theta^{S,T}\equiv 0$ when $y \leq \um$.
\end{proof}

Come back to the statement of Proposition \ref{prop: ep<U-L}. If the insider chooses a strategy such that both conditions in i) and ii) are satisfied, then the identity in \eqref{eq: ep<U-L} is attained, hence the expected profit of this strategy is $U-L$. On the other hand, define $U^S: \{v_1, \cdots, v_N\} \times \Integer \times [0,1] \rightarrow \Real$ via
\begin{equation}\label{eq: oU}
 U^S(v_n,y, t) = \left\{\begin{array}{ll}U(v_n, y, t) & y\geq \um\\ U(v_n, y-1, t) & y\leq \dm\end{array}\right..
\end{equation}
The next result shows that $U^S$  dominates the value function $V$, therefore $U^S-U+L$ is the upper bound of the potential loss of the expected profit. In Section \ref{sec: convergence}, we will prove this potential loss converges to zero as $\delta\downarrow 0$. Therefore, when the order size is small, the insider losses little expected profit by employing a strategy satisfying Proposition \ref{prop: ep<U-L} i) and ii).

\begin{prop}\label{prop: U>V}
 Let Assumption \ref{ass: middle level} hold. Then $V\leq U^S$, hence $V<\infty$, on $\{v_1, \cdots, v_N\}\times \Integer \times [0,1]$.
\end{prop}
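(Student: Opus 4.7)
The plan is to show that $U^S$ is a classical solution of the reduced HJB equation \eqref{eq: red HJB} and then conclude $V\leq U^S$ by a standard verification argument via It\^{o}'s formula.

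First I would establish two structural identities for $U^S$: for every $y\in\Integer$ and $t\in[0,1]$,
\[
 U^S(v_n, y+1, t) - U^S(v_n, y, t) = p(y, t) - v_n, \qquad U^S_t(v_n, y, t) = -\beta\bigl(p(y, t) - p(y-1, t)\bigr).
\]
The first identity splits into three cases: for $y\geq \um$ it is \eqref{eq: U eqn >um}; for $y\leq \dm - 1$ use $U^S(\cdot, t) = U(\cdot-1, t)$ on $\{y\leq \dm\}$ and apply \eqref{eq: U eq <dm} at $y$; and for the boundary case $y = \dm$ combine the identity $U(\um, t) = U(\dm, t)$ (noted in the proof of Lemma \ref{lem: U eqn}) with \eqref{eq: U eq <dm} at $y = \dm$. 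The second identity follows from the uniform representation
\[
 U^S(y, t) = U^S(y, 1) + \beta\int_t^1 \bigl(p(y, r) - p(y-1, r)\bigr)\, dr,
\]
which one reads off from \eqref{eq: U um} on $\{y\geq \um\}$ and from \eqref{eq: U dm} applied at $y-1$ on $\{y\leq \dm\}$.

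Combining these identities with Definition \ref{def: price rule} i), the inequality \eqref{eq: buy ineq} with $V$ replaced by $U^S$ reduces to $p(y, t)\leq p(y+1, t)$, while \eqref{eq: sell ineq} collapses to an identity. Hence $(v_n, y, t, U^S)\in\operatorname{dom}(H)$, every supremum in \eqref{eq: Hamilton} equals zero, and
\[
 H(v_n, y, t, U^S) = \beta\bigl(U^S(y+1, t) - 2 U^S(y, t) + U^S(y-1, t)\bigr) = \beta\bigl(p(y, t) - p(y-1, t)\bigr),
\]
so $U^S$ solves \eqref{eq: red HJB} classically. For the verification, I would apply It\^{o}'s formula to $U^S(Y_r, r)$ along any admissible $(X^B, X^S;\F^I)$ between $t$ and $1$, using the structural identities above together with the telescoped differences $U^S(y\pm 2) - U^S(y)$ and the jump compensators of $Y$ described preceding \eqref{eq: HJB}. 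The $\beta$-contributions cancel by the PDE, while the $\theta^{i,j}$-contributions, combined control by control with the corresponding terms of \eqref{eq: profit}, collapse to quantities of the form $(p(y, r) - p(y+1, r))\theta^{i,j}$, $(p(y-1, r) - p(y, r))\theta^{i,j}$, or identically zero, all nonpositive by Definition \ref{def: price rule} i). Taking conditional expectation and invoking $U^S(\cdot, \cdot, 1)\geq 0$ (immediate from \eqref{eq: U 1} and \eqref{eq: oU}) then yields $\mathcal{J}(v_n, y, t; X^B, X^S) \leq U^S(v_n, y, t)$; passing to the supremum over admissible strategies proves $V\leq U^S$, and the finiteness of $U^S$ (since $N<\infty$ and $p$ is bounded) gives $V<\infty$.

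The main obstacle I expect is the termwise bookkeeping in the It\^{o} step: one must track how each of the six control intensities $\theta^{B,B}, \theta^{B,T}, \theta^{B,S}, \theta^{S,S}, \theta^{S,T}, \theta^{S,B}$ enters the compensators of the $\pm 1$ and $\pm 2$ jumps of $Y$ and verify the pairwise cancellation against the six profit terms of \eqref{eq: profit}. Once set up, each control's net contribution reduces to a ``spread'' of $p$ evaluated at adjacent states, which is nonpositive by monotonicity of the pricing rule; this is the essential mechanism that makes $U^S$ a tight upper bound.
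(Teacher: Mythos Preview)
Your proposal is correct and follows essentially the same route as the paper: both verify that $U^S$ satisfies the identity $U^S(y,t)-U^S(y+1,t)=v_n-p(y,t)$ and the reduced HJB equation \eqref{eq: red HJB} by the same three-case analysis ($y\geq\um$, $y=\dm$, $y<\dm$), using $U(\um,\cdot)=U(\dm,\cdot)$ at the seam. The only difference is that the paper then delegates the It\^{o}/verification step to \cite[Proposition 3.2]{Cetin-Xing}, whereas you spell it out; your bookkeeping sketch is accurate, since with \eqref{eq: sell ineq} holding as an identity for $U^S$ and \eqref{eq: buy ineq} reducing to $p(y,t)\leq p(y+1,t)$, each of the six control contributions indeed collapses to a nonpositive pricing spread.
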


\begin{proof}
 Fix $v_n$ and omit it as the first argument of $U^S$ and $U$ throughout the proof. We first verify
\begin{align}
 &U^S(y,t) - U^S(y+1, t) - (v_n - p(y,t)) =0,\label{eq: US 1}\\
 &U^S_t + \pare{U^S(y+1, t) - 2U^S(y,t) + U^S(y-1,t)} \beta=0 \label{eq: US 2},
\end{align}
for any $(y,t)\in \Integer \times [0,1)$. Indeed, when $y\geq \um$, \eqref{eq: US 1} is exactly \eqref{eq: U eqn >um}. When $y=\dm$,
\[
 U^S(\dm, t) - U^S(\um, t) = U(\dm-1, t) - U(\um, t) = U(\dm-1, t) - U(\dm, t) = v_n - p(\dm, t),
\]
where the second identity follows from $U(\um, t) = U(\dm, t)$ and the third identity holds due to \eqref{eq: U eq <dm}. When $y<\dm$,
\[
 U^S(y, t) - U^S(y+1, t) = U(y-1, t) - U(y,t) = v_n - p(y,t),
\]
where \eqref{eq: U eq <dm} is utilized again to obtain the second identity. Therefore \eqref{eq: US 1} is confirmed for all cases. As for \eqref{eq: US 2}, \eqref{eq: US 1} yields
\[
 U^S(y+1, t) - 2U^S(y,t) + U^S(y-1, t) = p(y,t) - p(y-1, t).
\]
On the other hand, we have from \eqref{eq: U um} and \eqref{eq: U dm} that
\[
 U^S_t(y,t)= \left\{\begin{array}{ll}U_t(y,t) = -\beta (p(y,t)-p(y-1,t)) & y\geq \um\\ U_t(y-1, t) = -\beta(p(y,t)-p(y-1, t)) & y\leq \dm\end{array}\right..
\]
Therefore \eqref{eq: US 2} is confirmed after combining the previous two identities.

Now note that $U^S(\cdot, 1) \geq 0$, moreover $U^S$ satisfies \eqref{eq: US 1} and \eqref{eq: US 2}. The assertion $V\leq U^S$ follows from the same argument as in the high type of \cite[Proposition 3.2]{Cetin-Xing}.
\end{proof}

Having studied the insider's optimization problem, let us turn to the market maker. Given $(X^B, X^S; \F^I)$, Definition \ref{def: Asy GM eq} ii) requires the pricing rule to be rational. This leads to another constraint on $(X^B, X^S; \F^I)$.

\begin{prop}\label{prop: rational}
If there exists an admissible strategy $(X^B, X^S; \F^I)$ such that
 \begin{enumerate}[i)]
  \item $Y^B= Z^B + X^{B,B}+ X^{B,T}- X^{S,B}$ and $Y^S= Z^S + X^{S,S} + X^{S,T} - X^{B,S}$ are independent $\F^Y-$adapted Poisson processes with common intensity $\beta$;
  \item $[Y_1 \in [a_n, a_{n+1})] = [\tv=v_n]$, $n=1, \cdots, N$.
 \end{enumerate}
 Then the pricing rule \eqref{eq: p func} is rational.
\end{prop}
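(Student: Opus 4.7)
The plan is to show directly that $p(Y_t,t)=\expec[\tv\mid\F^Y_t]$ under the two assumptions, using condition ii) to rewrite $\tv$ in terms of the terminal total demand $Y_1$, and then condition i) to compute the conditional distribution of $Y_1$ given $\F^Y_t$ via the Markov property.

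First I would use the identity of events $[\tv=v_n]=[Y_1\in[a_n,a_{n+1})]$ from condition ii) to write
\begin{equation*}
\expec[\tv\mid\F^Y_t]=\sum_{n=1}^N v_n\,\prob\bigl(Y_1\in[a_n,a_{n+1})\mid\F^Y_t\bigr).
\end{equation*}
Thus everything reduces to computing the conditional law of $Y_1$ given $\F^Y_t$ in terms of $Y_t$.

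Next I would exploit condition i). Since $Y^B$ and $Y^S$ are independent $\F^Y$-Poisson processes with common intensity $\beta$, the increments $Y^B_1-Y^B_t$ and $Y^S_1-Y^S_t$ are independent of $\F^Y_t$ and distributed as independent Poisson$(\beta(1-t))$ random variables. Consequently $Y_1-Y_t=(Y^B_1-Y^B_t)-(Y^S_1-Y^S_t)$ is independent of $\F^Y_t$ and has the same law as $Z_{1-t}$ under $\prob^{\delta,0}$. Combining with $\F^Y_t$-measurability of $Y_t$ gives, for every $n$,
\begin{equation*}
\prob\bigl(Y_1\in[a_n,a_{n+1})\mid\F^Y_t\bigr)=\prob^{Y_t}\bigl(Z_{1-t}\in[a_n,a_{n+1})\bigr)=h_n(Y_t,t),
\end{equation*}
where the last equality is the definition \eqref{eq: h} of $h_n$ (with $\delta=1$ here; the general-$\delta$ version is identical).

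Substituting back and invoking the definition \eqref{eq: p func} of the pricing rule,
\begin{equation*}
\expec[\tv\mid\F^Y_t]=\sum_{n=1}^N v_n\,h_n(Y_t,t)=p(Y_t,t),
\end{equation*}
which is exactly the rationality condition of Definition \ref{def: Asy GM eq} ii). The only subtle point is the step that transfers the law of $Y_1-Y_t$ to that of $Z_{1-t}$: this requires that $Y^B$ and $Y^S$ are Poisson in the market maker's own filtration $\F^Y$ (not merely in $\F^I$), which is precisely what condition i) assumes, so no further work is needed beyond invoking independence and stationarity of Poisson increments.
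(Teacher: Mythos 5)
Your proposal is correct and follows essentially the same route as the paper: the paper's one-line computation $p(Y_t,t)=\expec^{Y_t}[P(Z_{1-t})]=\expec[P(Y_1)\,|\,\F^Y_t]=\expec[\tv\,|\,\F^Y_t]$ rests on exactly the two facts you spell out, namely that condition i) makes $Y$ equal in law to $Z$ with $\F^Y$-independent Poisson increments (so the conditional law of $Y_1$ given $\F^Y_t$ is that of $Y_t+Z_{1-t}$), and that condition ii) identifies $\tv$ with $P(Y_1)$. Your version merely makes the independent-increments step explicit where the paper compresses it into ``$Y$ and $Z$ have the same distribution.''
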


\begin{proof}
 For any $t\in [0,1]$,
 \begin{align*}
  p(Y_t, t) &= \expec^{Y_t}[P(Z_{1-t})] = \expec\bra{P(Z_1)\,|\, Z_t = Y_t}= \expec\bra{P(Y_1)\,|\, \F^Y_t} = \expec[\tv\,|\, \F^Y_t],
 \end{align*}
 where the third identity holds since $Y$ and $Z$ have the same distribution, the fourth identity follows from ii) and \eqref{eq: P}.
\end{proof}

\begin{rem}\label{rem: no top up}
 If the insider places a buy (resp. sell) order when a noise buy (resp. sell) order arrives, Proposition \ref{prop: rational} i) cannot be satisfied. Therefore in the asymptotic equilibrium the insider will not trade in the same direction as the noise traders, i.e., $X^{B,T}=X^{S,T}\equiv 0$, so that the market maker can employ a rational pricing rule.
\end{rem}

Concluding this section, we need to construct point processes $(X^B, X^S; \F^I)$ which simultaneously satisfy conditions in Proposition \ref{prop: ep<U-L} ii), Proposition \ref{prop: rational} i) and ii)\footnote{Note is Proposition \ref{prop: rational} ii) implies Proposition \ref{prop: ep<U-L} i).}. This construction is a natural extension of \cite[Section 4]{Cetin-Xing}, where $N=2$ is considered, and will be presented in the next section.

\section{Construction of a point process bridge}\label{sec: point process bridge}

In this section, we will construct point processes $X^B$ and $X^S$ on a probability space $(\Omega, \F^I, (\F^I_t)_{t\in [0,1]}, \prob)$ such that $X^{B,T}= X^{S,T}\equiv 0$, due to Remark \ref{rem: no top up}, and satisfy
\begin{enumerate}[i)]
 \item $Y^B = Z^B + X^{B,B} - X^{S,B}$ and $Y^S = Z^S + X^{S,S} - X^{B,S}$ are independent $\F^Y$-adapted Poisson processes with common intensity $\beta$;
 \item $X^{B,B}_t = X^{B,S}_t \equiv 0$ when $Y_{t-}\geq \um$, $X^{S,S}_t = X^{S,B}_t \equiv 0$ when $Y_{t-}\leq \dm$;
 \item $[Y_1 \in [a_n, a_{n+1})]= [\tv=v_n]$ $\prob$-a.s. for $n=1, \cdots, N$.
\end{enumerate}
The construction is a natural extension of \cite{Cetin-Xing} where $N=2$ is considered. As in \cite{Cetin-Xing}, $X^B$ and $X^S$ are constructed using two independent sequences of iid random variables $(\eta_i)_{i\geq 1}$ and $(\zeta_i)_{i\geq 1}$ with uniform distribution on $[0,1]$, moreover they are independent of $Z$ and $\tv$. The insider uses $(\eta_i)_{i\geq 1}$ to randomly contribute either buy or sell orders, and uses $(\zeta_i)_{i\geq 1}$ to randomly cancel noise orders. Throughout this section Assumption \ref{ass: middle level} is enforced. Moreover, we set $\delta=1$, hence suppress the superscript $\delta$. Otherwise  $X^B$ and $X^S$ can be scaled by $\delta$ to obtain the desired processes.

In the following construction, we will define a probability space $(\Omega, \F^I, (\F^I_t)_{t\in [0,1]}, \prob)$ on which $Y$ takes the form
\begin{equation}\label{eq: Y decomp}
 Y= Z+ \sum_{n=1}^N \indic_{A_n}(X^B-X^S).
\end{equation}
Here $Z$ is the difference of two independent $\F^I$-adapted Poisson processes with intensity $\beta$, $A_n\in \F^I_0$  such that $\prob(A_n) = \prob(Z_1\in [a_n, a_{n+1}))$ for each $n=1, \cdots, N$.

Before constructing $X^B$ and $X^S$ satisfying desired properties, let us draw some intuition from the theory of filtration enlargement. Let us define $(\mathbb{D}([0,1], \Integer), \overline{\F}, (\overline{\F}_t)_{t\in [0,1]}, \overline{\prob})$ be the canonical space where $\mathbb{D}([0,1], \Integer)$ is $\Integer$-valued \cadlag\, functions, $\overline{\prob}$ is a probability measure under which $Z^B$ and $Z^S$ are independent Poisson processes with intensities $\beta$, $(\overline{\F}_t)_{t\in [0,1]}$ is the minimal filtration generated by $Z^B$ and $Z^S$ satisfying the usual conditions, and $\overline{\F} = \vee_{t\in [0,1]} \overline{\F}_t$. Let us denote by $(\mathcal{G}_t)_{t\in [0,1]}$ the filtration $(\overline{\F}_t)_{t\in [0,1]}$ enlarged with a sequence of random variables $(\indic_{\{Z_1 \in [a_n, a_{n+1})\}})_{n=1, \cdots, N}$.

In order to find the $\mathcal{G}$-intensities of $Z^B$ and $Z^S$, we use a standard enlargement of filtration argument which can be found, e.g. in \cite{Mansuy-Yor}. To this end, recall $h_n(y,t) = \overline{\prob}[Z_1 \in [a_n, a_{n+1}) \,|\, Z_t = y]$. Note that $h_n$ is strictly positive on $\Integer \times [0,1)$. Moreover the Markov property of $Z$ implies $h_n$ is continuously differentiable in the time variable and satisfies
\begin{equation}\label{eq: hn pde}
\begin{split}
 &\partial_t h_n + \pare{h_n(y+1, t) - 2h_n(y,t) + h_n(y-1, t)}\beta =0, \quad (y,t)\in \Integer \times [0,1),\\
 & h_n(y,1) = \indic_{\{y\in [a_n, a_{n+1})\}}.
\end{split}
\end{equation}

\begin{lem}\label{lem: Zintensity}
The $\mathcal{G}$-intensities of $Z^{B}$ and $Z^{S}$ at $t \in [0,1)$ are given by
\[
\sum^{N}_{n=1}\indic_{\{Z_{1} \in [a_{n}, a_{n+1})\}}\frac{h_{n}(Z_{t-}+1,t)}{h_{n}(Z_{t-},t)}\beta \text{\quad and \quad } \sum^{N}_{n=1}\indic_{\{Z_{1} \in [a_{n}, a_{n+1})\}}\frac{h_{n}(Z_{t-}-1,t)}{h_{n}(Z_{t-},t)}\beta,
\]
respectively.
\end{lem}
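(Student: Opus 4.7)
The plan is to use the standard initial enlargement machinery, noting that the indicators $(\indic_{\{Z_1 \in [a_n, a_{n+1})\}})_{n=1,\ldots,N}$ partition $\Omega$, so adding them to $(\overline{\F}_t)$ is equivalent to adding the single discrete random variable $L := \sum_{n=1}^N n\,\indic_{\{Z_1 \in [a_n, a_{n+1})\}}$. The $\overline{\F}_t$-conditional law of $L$ is purely atomic with weights $\overline{\prob}(L = n \mid \overline{\F}_t) = h_n(Z_t, t)$, which are strictly positive on $[0,1)$; this makes Jacod's absolute continuity criterion trivial. Because $\mathcal{G}_t$-measurable bounded variables are exactly sums $\sum_n \eta^n_t \indic_{\{Z_1 \in [a_n, a_{n+1})\}}$ with $\eta^n_t$ bounded $\overline{\F}_t$-measurable, the claimed intensity of $Z^B$ reduces to verifying, for each $n$, each such $\eta^n_t$, and every $h>0$:
\begin{equation*}
\expec\bra{\eta^n_t \indic_{\{Z_1 \in [a_n, a_{n+1})\}}(Z^B_{t+h} - Z^B_t)}
= \expec\bra{\eta^n_t \indic_{\{Z_1 \in [a_n, a_{n+1})\}} \int_t^{t+h} \frac{h_n(Z_{s-}+1, s)}{h_n(Z_{s-}, s)}\beta \, ds}.
\end{equation*}

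Evaluating the right-hand side is immediate: applying Fubini and conditioning the indicator on $\overline{\F}_s$ gives $\expec[\indic_{\{Z_1 \in [a_n, a_{n+1})\}} \mid \overline{\F}_s] = h_n(Z_s, s)$ (which equals $h_n(Z_{s-}, s)$ outside the countable set of jump times), so the right-hand side collapses to $\beta \,\expec\bra{\eta^n_t \int_t^{t+h} h_n(Z_{s-}+1, s)\,ds}$. For the left-hand side I condition on $\overline{\F}_{t+h}$ to obtain $\expec[\eta^n_t (Z^B_{t+h} - Z^B_t) h_n(Z_{t+h}, t+h)]$, split $Z^B_s = \wt{Z}^B_s + \beta s$ into the $\overline{\F}$-martingale part and its drift, and apply integration by parts. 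The PDE \eqref{eq: hn pde} combined with It\^{o}'s formula shows $h_n(Z_\cdot, \cdot)$ is an $\overline{\F}$-martingale whose jump caused by a unit jump of $Z^B$ is $h_n(Z_{s-}+1, s) - h_n(Z_{s-}, s)$; hence the predictable covariation $\langle \wt{Z}^B, h_n(Z_\cdot, \cdot)\rangle_\cdot = \int_0^\cdot [h_n(Z_{s-}+1, s) - h_n(Z_{s-}, s)]\beta\,ds$. Taking expectations kills the martingale pieces and leaves exactly $\beta\,\expec[\eta^n_t \int_t^{t+h} h_n(Z_{s-}+1,s)\,ds]$, matching the right-hand side.

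The analogous identity for $Z^S$ follows by replacing $+1$ with $-1$ in the jump term and repeating the computation verbatim, giving the stated $\mathcal{G}$-intensity of $Z^S$. The main technical point to be careful with is the manipulation of the quadratic covariation and ensuring all stochastic integrals are genuinely in $L^1$; this is not really an obstacle since $N$ is finite, $0 \le h_n \le 1$, and $\eta^n_t$ is bounded, so all integrands are uniformly bounded and Fubini/martingale arguments apply without issue. A routine monotone-class argument extends the identity from test variables of the form $\eta^n_t \indic_{\{Z_1 \in [a_n, a_{n+1})\}}$ to arbitrary bounded $\mathcal{G}_t$-measurable variables, concluding the proof.
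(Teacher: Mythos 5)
Your proposal is correct and follows essentially the same route as the paper: both arguments reduce to testing against $\overline{\F}_t$-measurable variables multiplied by the indicators $\indic_{\{Z_1\in[a_n,a_{n+1})\}}$, replace each indicator by the martingale $h_n(Z_\cdot,\cdot)$ via conditioning, and identify the compensator through the predictable covariation $\langle Z^B-\beta\cdot,\,h_n(Z_\cdot,\cdot)\rangle=\int_0^\cdot\beta\bigl(h_n(Z_{r-}+1,r)-h_n(Z_{r-},r)\bigr)dr$. The only differences are organizational (you verify the intensity identity increment by increment and close with a monotone-class step, while the paper directly exhibits the compensated process as a $\mathcal{G}$-martingale against sets $E\in\overline{\F}_s$), and your integrability checks are adequate since $0\le h_n\le 1$ and the test variables are bounded.
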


\begin{proof}
We will only calculate the intensity for $Z^{B}$. The intensity of $Z^{S}$ can be obtained similarly. All expectations are taken under $\overline{\prob}$ throughout this proof. For $s\leq t <1$, take an arbitrary $E \in \overline{\F}_s$ and denote $M^{B}_t := Z_t^{B} - \beta t$. The definition of $h_n$ and the $\overline{\F}$-martingale property of $M^{B}$ imply
\begin{align*}
&\expec\bra{(M_t^{B}-M_s^{B})\indic_E \indic_{\{Z_1 \in [a_n,a_{n+1})\}}} \\
& = \expec\bra{(M_t^{B}-M_s^{B})\indic_E h_n(Z_t,t)} \\
& = \expec\bra{\indic_E(\langle M^{B}, h_n(Z_\cdot,\cdot)\rangle_t - \langle M^{B}, h_n(Z_\cdot,\cdot)\rangle_s)} \\
& = \expec\bra{\indic_E\int^t_s \beta \pare{h_n(Z_{r-}+1,r)-h_n(Z_{r-},r)}dr  } \\
& = \expec\bra{\indic_E\int^t_s\beta\,\indic_{\{Z_1 \in [a_{n}, a_{n+1})\}}\frac{h_n(Z_{r-}+1,r)-h_n(Z_{r-},r)}{h_n(Z_{r-},r)}dr}.
\end{align*}
These computations for each $n=1, \cdots, N$ imply that
\[
	M^{B}-\int^\cdot_s\beta\sum^N_{n=1}\indic_{\{Z_1 \in [a_{n}, a_{n+1})\}}\frac{h_n(Z_{r-}+1,r)-h_n(Z_{r-},r)}{h_n(Z_{r-},r)}dr
\]
defines a $\mathcal{G}$-martingale. Therefore the $\mathcal{G}$-intensity of $Z^{B}$ follows from $Z_t^{B}=M_t^{B}+\beta t$.
\end{proof}

To better understand intensities in the previous lemma, let us collect several properties for $h_n$:

\begin{lem}\label{lem: hn property}
 Let Assumption \ref{ass: middle level} hold. The following properties hold for each $h_n$, $n=1, \cdots, N$:
 \begin{enumerate}[i)]
  \item $h_n(\cdot,\cdot) = h_n(2m_n-\cdot, \cdot)$; in particular, $h_n(\um, \cdot) = h_n(\dm, \cdot)$.
  \item $y\mapsto h_n(y,t)$ is strictly increasing when $y\leq \dm$ and strictly decreasing when $y\geq \um$.
 \end{enumerate}
 Here, when $n=1$ (resp. $n=N$), $\um=\dm=-\infty$ (resp. $\um=\dm= \infty$).
\end{lem}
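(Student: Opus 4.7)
The plan rests on the probabilistic representation
\[
 h_n(y,t) = \prob\pare{Z_{1-t}+y \in [a_n, a_{n+1})} = \prob\pare{Z_{1-t} \in [a_n - y, a_{n+1} - y)},
\]
together with two key features of the driving process: since $Z$ is the difference of two independent Poisson processes with common intensity $\beta$, the marginal $Z_{1-t}$ is symmetric about $0$ (i.e.\ $Z_{1-t}$ and $-Z_{1-t}$ have the same law); moreover, for every $t<1$ the Skellam density $k\mapsto \prob(Z_{1-t}=k)$ is strictly unimodal at $0$, i.e.\ strictly decreasing in $|k|$.

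For part (i), I would substitute $y\mapsto 2m_n - y$ into the representation, translate the interval using $2m_n = a_n + a_{n+1} - 1$, apply the symmetry of $Z_{1-t}$ to reverse the sign, and finally use that the integer-valued random variable $Z_{1-t}$ satisfies $\prob(Z_{1-t}\in(a,b]) = \prob(Z_{1-t}\in[a+1,b+1))$ for integers $a<b$. A short chain of identities then yields $h_n(2m_n - y, t) = h_n(y,t)$, and the specialisation $y=\dm$ gives $h_n(\um,\cdot) = h_n(\dm,\cdot)$.

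For part (ii), thanks to (i) it is enough to prove strict monotonicity on $\{y\geq \um\}$, because reflection $y\mapsto 2m_n - y$ swaps $\{y\geq \um\}$ with $\{y\leq \dm\}$ and reverses the direction of monotonicity. On that half line, a set-difference computation in the representation gives
\[
 h_n(y+1,t) - h_n(y,t) = \prob(Z_{1-t}=a_n - y - 1) - \prob(Z_{1-t} = a_{n+1} - y - 1),
\]
so the claim reduces to the strict inequality $|a_n - y - 1| > |a_{n+1} - y - 1|$, upon invoking strict unimodality. Writing $\alpha = a_n - y - 1 < \gamma = a_{n+1} - y - 1$, Assumption \ref{ass: middle level} iii) guarantees that $m_n$ is a half-integer with $\um = m_n + \tfrac{1}{2}$, so that $y\geq \um$ forces $\alpha + \gamma = 2m_n - 2y - 1 \leq -2$; an elementary case split on the sign of $\gamma$ then delivers $|\alpha| > |\gamma|$. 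The degenerate cases $n=1$ and $n=N$ are treated separately and are simpler: one endpoint is infinite, one side of the set difference becomes empty, and strict monotonicity over all of $\Integer$ follows at once.

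The step I expect to require the most care is the strict unimodality of the Skellam law. I would establish it either via the classical inequality $I_n(x) > I_{n+1}(x)$ for modified Bessel functions (valid for $x>0$, $n\geq 0$), or, more in keeping with the PDE viewpoint of the paper, through the variation-diminishing property of the discrete heat semigroup governed by \eqref{eq: hn pde}, applied to $g(y,t) := h_n(y+1,t) - h_n(y,t)$, whose terminal data $\indic_{\{y=a_n-1\}} - \indic_{\{y=a_{n+1}-1\}}$ exhibits exactly one sign change.
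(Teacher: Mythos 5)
Your proof is correct and follows essentially the same route as the paper's: part (i) via the symmetry of $Z_{1-t}$ together with $2m_n=a_n+a_{n+1}-1$, and part (ii) via the representation $h_n(y,t)=\overline{\prob}\pare{Z_{1-t}\in[a_n-y,a_{n+1}-y)}$ combined with strict unimodality of the Skellam law at $0$. The paper simply asserts that unimodality and the resulting monotonicity without the bookkeeping you supply (the one-point set differences, the case split on $|\alpha|$ versus $|\gamma|$, the degenerate cases $n=1,N$), so your version is the same argument written out in full.
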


\begin{proof}
 Recall that $a_n+a_{n+1}-1=2m_n$. Then
 \begin{align*}
  h_n(y,t)&= \overline{\prob}[Z_1\in [a_n, a_{n+1}) \,|\, Z_t = y] = \overline{\prob}[y+ Z_{1-t} \in [a_n, a_{n+1})] \\
  &= \overline{\prob}[2m_n -y- Z_{1-t} \in (2m_n -a_{n+1}, 2m_n - a_n]]\\
  &= \overline{\prob}[2m_n-y-Z_{1-t} \in [a_n, a_{n+1})]=h_n(2m_n-y, t),
 \end{align*}
 where the last identity holds since $Z$ and $-Z$ have the same distribution. This verifies i). To prove ii), rewrite $h_n(y,t)= \overline{\prob}[Z_{1-t}\in [a_n-y, a_{n+1}-y)]$. Then the statement ii) follows from the fact that $y\mapsto \prob(Z_{1-t}=y)$ is strictly increasing when $y\leq 0$ and strictly decreasing when $y\geq 0$.
\end{proof}

In what follows, given $A_n \in \F^I_0$ such that $\prob (A_n) = \overline{\prob}(Z_1 \in [a_n, a_{n+1}))$, $(X^B, X^S; \F^I)$ on $A_n$ will be constructed so that $\F^I$-intensity of $Y^B$ (resp. $Y^S$) on $A_n$ match $\mathcal{G}$-intensities of $Z^B$ (resp. $Z^S$) on $[Z_1 \in [a_n, a_{n+1})]$. Matching these intensities ensures that $(X^B, X^S; \F^I)$ satisfies desired properties, cf. Proposition \ref{prop: Y property} below. Recall $Y^B= Z^B + X^{B,B}- X^{S, B}$ and $Y^S=Z^S + X^{S,S} - X^{B,S}$. Subtracting $\beta$ from $\mathcal{G}$-intensities of $Z^B$ (resp. $Z^S$) in Lemma \ref{lem: Zintensity}, we can read out intensities of $X^{B,B}- X^{S,B}$ (resp. $X^{S,S}- X^{B,S}$). Since property ii) at the beginning of this section implies that $\theta^B$ and $\theta^S$ are never positive at the same time. Therefore, when the intensity of $X^{B,B}-X^{S,B}$ is positive, the insider contributes buy orders $X^{B,B}$ with such intensity, otherwise the insider submits sell orders $X^{S,B}$ with the same intensity to cancel some noise buy orders from $Z^B$. Applying the same strategy to $X^{S,S}- X^{B,S}$ and utilizing Lemma \ref{lem: hn property}, we read out $\F^I$-intensities for $X^{i,j}$, $i,j\in \{B,S\}$:

\begin{cor}\label{Cor: X intensity}
 Suppose that $\F^I$-intensities of $Y^B$ and $Y^S$ match $\mathcal{G}$-intensities of $Z^B$ and $Z^S$ respectively, moreover $X^{B,B}_t = X^{B,S}_t \equiv 0$ when $Y_{t-} \geq \um$ and $X^{S,S}_t = X^{S,B}_t\equiv 0$ when $Y_{t-}\leq \dm$. Then $\F^I$-intensities of $X^{i,j}$, $i,j\in \{B,S\}$, have the following form on $A_n$ when $Y_{t-}=y$:
 \begin{align*}
  &\theta^{B,B}(y,t) = \pare{\frac{h_n(y+1, t)}{h_n(y,t)}-1}_+ \beta, &\theta^{B,S}(y,t) = \pare{\frac{h_n(y-1, t)}{h_n(y,t)}-1}_- \beta,\\
  &\theta^{S,S}(y,t) = \pare{\frac{h_n(y-1, t)}{h_n(y,t)} -1}_+ \beta,
  &\theta^{S,B}(y,t) = \pare{\frac{h_n(y+1, t)}{h_n(y,t)} -1}_- \beta.
 \end{align*}
 In particular, $\theta^{i,j}$, $i,j\in \{B,S\}$, satisfies the following properties:
 \begin{enumerate}[i)]
  \item $\theta^{B,B}(y,\cdot) = \theta^{B,S}(y,\cdot) \equiv 0$, $\theta^{S,S}(y,\cdot)>0$, and $\theta^{S,B}(y,\cdot)>0$, when $y\geq \um$; $\theta^{S,S}(y,\cdot) =\theta^{S,B}(y,\cdot) \equiv 0$, $\theta^{B,B}(y,\cdot)>0$, and $\theta^{B,S}(y,\cdot)>0$, when $y\leq \dm$;
  \item $\theta^{B,B}(\cdot, \cdot) = \theta^{S,S}(2m_n - \cdot, \cdot)$, $\theta^{B,S}(\cdot, \cdot) = \theta^{S,B}(2m_n-\cdot, \cdot)$;
  \item $\theta^{B,B}(\dm, \cdot) = \theta^{S,S}(\um, \cdot)\equiv 0$.
 \end{enumerate}
\end{cor}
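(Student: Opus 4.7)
The plan is as follows. The target formulas tell us exactly how to split the single constraint coming from the $\mathcal{G}$-intensity of $Z^B$ (from Lemma \ref{lem: Zintensity}) into the four individual intensities, so the proof is essentially an algebraic decomposition combined with a sign analysis powered by Lemma \ref{lem: hn property}.

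First I would write down the matching identity coming from the hypothesis. Under $\prob$ the processes $Z^B$ and $Z^S$ are $\F^I$-Poisson processes with common intensity $\beta$, so the $\F^I$-intensities of $Y^B = Z^B + X^{B,B} - X^{S,B}$ and $Y^S = Z^S + X^{S,S} - X^{B,S}$ are $\beta+\theta^{B,B}-\theta^{S,B}$ and $\beta+\theta^{S,S}-\theta^{B,S}$ respectively. Matching these on $A_n$ to the $\mathcal{G}$-intensities supplied by Lemma \ref{lem: Zintensity} yields, on $A_n$,
\[
\theta^{B,B}(Y_{t-},t)-\theta^{S,B}(Y_{t-},t)=\left(\frac{h_{n}(Y_{t-}+1,t)}{h_{n}(Y_{t-},t)}-1\right)\beta,
\quad
\theta^{S,S}(Y_{t-},t)-\theta^{B,S}(Y_{t-},t)=\left(\frac{h_{n}(Y_{t-}-1,t)}{h_{n}(Y_{t-},t)}-1\right)\beta.
\]

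Next I would use the support constraints together with the monotonicity in Lemma \ref{lem: hn property} (ii) to read off each intensity from its difference. When $Y_{t-}\leq\dm$, Lemma \ref{lem: hn property} (ii) gives $h_n(Y_{t-}+1,t)\geq h_n(Y_{t-},t)$ and $h_n(Y_{t-}-1,t)\leq h_n(Y_{t-},t)$; since the hypothesis imposes $X^{S,S}\equiv X^{S,B}\equiv 0$ in this region, one has $\theta^{S,B}=\theta^{S,S}=0$ and the two displayed identities uniquely determine $\theta^{B,B}$ (as a positive part) and $\theta^{B,S}$ (as a negative part). When $Y_{t-}\geq\um$ the roles are reversed: $X^{B,B}\equiv X^{B,S}\equiv 0$ forces $\theta^{B,B}=\theta^{B,S}=0$, and the signs flip so that $\theta^{S,B}$ and $\theta^{S,S}$ equal the negative and positive parts respectively. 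Because every $y\in\Integer$ lies either in $\{y\leq\dm\}$ or $\{y\geq\um\}$ and these sets overlap only when $\dm=-\infty$ or $\um=\infty$, this piecewise description defines the intensities on all of $\Integer$ and matches the formulas claimed.

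Finally the properties are immediate book-keeping. Property (i) follows from plugging the ranges into the strict monotonicity in Lemma \ref{lem: hn property} (ii), which gives strict sign at every interior $y$ with strict inequality $y>\um$ or $y<\dm$. Property (ii), the reflection symmetry $\theta^{B,B}(\cdot,\cdot)=\theta^{S,S}(2m_n-\cdot,\cdot)$ and $\theta^{B,S}(\cdot,\cdot)=\theta^{S,B}(2m_n-\cdot,\cdot)$, is a one-line consequence of $h_n(y\pm1,t)/h_n(y,t)=h_n(2m_n-y\mp1,t)/h_n(2m_n-y,t)$, itself a direct rewrite of Lemma \ref{lem: hn property} (i). Property (iii) uses the same symmetry at the boundary: $h_n(\dm+1,t)=h_n(\um,t)=h_n(\dm,t)$ by Lemma \ref{lem: hn property} (i), so both the positive and negative parts at $\dm$ vanish, and the symmetric statement handles $\um$.

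The only subtlety, which I would emphasize as the main point to check, is that the sign structure dictated by Lemma \ref{lem: hn property} and the imposed support constraints mesh consistently across the two regions; this is precisely what Assumption \ref{ass: middle level} (iii) guarantees by keeping $\dm$ and $\um$ strictly separated so that the positive-part/negative-part decomposition never has to be resolved on a shared boundary integer. Everything else is algebra.
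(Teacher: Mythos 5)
Your proposal is correct and follows essentially the same route as the paper: the corollary is justified there by the paragraph immediately preceding it, which likewise subtracts $\beta$ from the $\mathcal{G}$-intensities of Lemma \ref{lem: Zintensity} to obtain the differences $\theta^{B,B}-\theta^{S,B}$ and $\theta^{S,S}-\theta^{B,S}$, then uses the support constraints and Lemma \ref{lem: hn property} to identify each intensity with the appropriate positive or negative part. Your treatment of the boundary points $\dm$, $\um$ (where the strict sign in (i) degenerates and (iii) takes over) is consistent with the paper's intent.
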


As  described in Corollary \ref{Cor: X intensity}, when $A_n \in \overline{\F}_0$ is fixed, the state space is divided into two domains $\mathcal{S}:= \{y\in \Integer: y\geq \um\}$ and $\mathcal{B}:= \{y\in \Integer: y\leq \dm\}$. As $Y$ making excursions into these two domains, either $X^S$ or $X^B$ is active. In the following construction, we will focus on the domain $\mathcal{B}$ and construct inductively jumps of $X^B$ until $Y$ leaves $\mathcal{B}$. When $Y$ excurses in $\mathcal{S}$, $X^S$ can be constructed similarly.

When $Y$ is in $\mathcal{B}$, one of the goals of $X^B$ is to make sure that $Y_1$ ends up in the interval $[a_n, a_{n+1})$. In order to achieve this goal, $X^B$ will add some jumps in addition to the jumps coming from $Z^B$. However this by itself will not be enough since $Y$ also jumps downward due to $Z^S$. Thus, $X^B$ also needs to cancel some of downwards jumps from $Z^S$. Therefore $X^B$ consists of two components $X^{B,B}$ and $X^{B,S}$, where $X^{B,B}$ complements jumps of $Z^B$ and $X^{B,S}$ cancels some jumps of $Z^S$. Let us denote by $(\tau_i)_{i\geq 1}$ the sequence of jump times for $Y$. These stopping times will be constructed inductively as follows. Given $\tau_{i-1}<1$ and $Y_{\tau_{i-1}}\leq \dm$, the next jump time $\tau_i$ happens at the minimum of the following three random times:
\begin{itemize}
 \item the next jump of $Z^B$,
 \item the next jump of $X^{B,B}$,
 \item the next jump of $Z^S$ which is not canceled by a jump of $X^{B,S}$.
\end{itemize}
Here $X^{B,B}$ and $X^{B,S}$ need to be constructed so that their intensities $\theta^{B,B}(Y_{t-}, t)$ and $\theta^{B,S}(Y_{t-}, t)$ match the forms in Corollary \ref{Cor: X intensity}. This goal is achieved by employing two independent sequences of iid random variables $(\eta_i)_{i\geq 1}$ and $(\zeta_i)_{i\geq 1}$ with uniform distribution on $[0,1]$. They are also independent of $\overline{\mathcal{F}}$ and $(A_n)_{n=1, \cdots, N}$. These two sequences will be used to generate a random variable $\nu_i$ and another sequence of Bernoulli random variables $(\xi_{j,i})_{j\geq 1}$ taking values in $\{0,1\}$. Let $(\sigma_i^+)_{i\geq 1}$ and $(\sigma^-_i)_{i\geq 1}$ be jump time of $Z^B$ and $Z^S$, respectively. Then, after $\tau_{i-1}$, the next jump of $Z^B$ is at $\sigma^+_{Z_{\tau_{i-1}}^B +1}$, the next jump of $X^{B,B}$ is at $\nu_i$, and the next jump of $Z^S$ not canceled by jumps of $X^{B,S}$ is at $\tau^-_i= \min\{\sigma^-_j >\tau_{i-1}\,:\, \xi_{j,i}=1\}$. Then the next jump of $Y$ is at
\[
 \tau_i = \sigma^+_{Z^B_{\tau_{i-1}} +1} \wedge \nu_i \wedge \tau^-_i.
\]
The construction of $\nu_i$ and $(\xi_{j,i})_{j\geq 1}$ using $(\eta_i)_{i\geq 1}$ and $(\zeta_i)_{i\geq 1}$ is exactly the same as in \cite[Section 4]{Cetin-Xing}, only replacing $h$ therein by $h_n$.

All aforementioned construction is performed in a filtrated probability space $(\Omega, \F^I, (\F^I_t)_{t\in [0,1]}, \prob)$ such that there exist $(A_n)_{n=1, \cdots, N}\in \F^I_0$ with $\prob(A_n) = h_n(0,0)$ and two independent sequences of iid $\F^I$-measurable random variables $(\eta_i)_{i\geq 1}$ and $(\zeta_i)_{i\geq 1}$ with uniform distribution on $[0,1]$, moreover these two sequences are independent of both $Z$ and $(A_n)_{n=1, \cdots, N}$. These requirements can be satisfied by extending $\overline{\F}_0$ (resp. $\overline{\F}$) to $\F^I_0$ (resp. $\F^I$). As for the filtration $(\F^I_t)_{t\in[0,1]}$, we require that it is right continuous and complete under $\prob$, moreover $Z$, as the difference of two independent Poisson processes with intensity $\beta$, is adapted to $(\F^I_t)_{t\in [0,1]}$. Therefore $Z$ is independent of $(A_n)_{n=1, \cdots, N}$, since $Z$ has independent increments. Finally, we also assume that $(\F^I_t)_{t\in[0,1]}$ is rich enough so that $(\nu_i)_{i\geq 1}$ and $(\tau^-_i)_{i\geq 1}$ discussed above are $\F^I$-stopping times.

An argument similar to \cite[Lemma 4.3]{Cetin-Xing} yields:
\begin{lem}\label{lem: Y intensity}
 Given point processes $(X^B, X^S; \F^I)$ constructed above, the $\F^I$-intensities of $Y^B$ and $Y^S$ at $t\in[0,1)$ are given by
 \[
  \sum_{n=1}^N \indic_{A_n} \frac{h_n(Y_{t-}+1, t)}{h_n(Y_{t-}, t)} \beta \quad \text{ and } \quad \sum_{n=1}^N \indic_{A_n} \frac{h_n(Y_{t-}-1, t)}{h_n(Y_{t-}, t)} \beta,
 \]
 respectively.
\end{lem}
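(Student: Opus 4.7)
The plan is to reduce the computation to the single-interval case already handled in \cite[Lemma 4.3]{Cetin-Xing} and then patch the intensities together via the partition $(A_n)_{n=1,\ldots,N}$ of $\Omega$. Since $A_n\in\F^I_0$, since $Z$ has independent increments and is therefore independent of $(A_n)$, and since the randomizing sequences $(\eta_i)$ and $(\zeta_i)$ are by construction independent of both $Z$ and $(A_n)$, the whole construction decouples across the events $A_n$. It therefore suffices to show that, on each $A_n$, the $\F^I$-intensity of $Y^B$ (resp.\ $Y^S$) at $t\in[0,1)$ equals $\beta\, h_n(Y_{t-}+1,t)/h_n(Y_{t-},t)$ (resp.\ $\beta\, h_n(Y_{t-}-1,t)/h_n(Y_{t-},t)$); summing in $n$ then gives the claim.

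Fixing $n$ and working on $A_n$, I would split the analysis according to whether $Y_{t-}\in\mathcal{B}=\{y\leq\dm\}$ or $Y_{t-}\in\mathcal{S}=\{y\geq\um\}$. When $Y_{t-}\leq\dm$, Corollary \ref{Cor: X intensity} together with Lemma \ref{lem: hn property}(ii) gives $\theta^{S,S}\equiv\theta^{S,B}\equiv 0$, so $Y^B=Z^B+X^{B,B}$; it remains to verify that the extra jumps generated from $(\eta_i)$ through the times $\nu_i$ form a point process with $\F^I$-intensity $(h_n(y{+}1,t)/h_n(y,t)-1)\beta$, after which the total intensity of $Y^B$ is $\beta+((h_n(y{+}1,t)-h_n(y,t))/h_n(y,t))\beta=\beta h_n(y{+}1,t)/h_n(y,t)$. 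When $Y_{t-}\geq\um$, one has instead $Y^B=Z^B-X^{S,B}$, and the Bernoulli thinning of $Z^B$ produced through $(\zeta_i)$ yields a thinned process of $\F^I$-intensity $\beta\bigl(1-h_n(y{+}1,t)/h_n(y,t)\bigr)$, giving a net intensity of $\beta h_n(y{+}1,t)/h_n(y,t)$ for $Y^B$. The analogous identities for $Y^S$ are handled by the same argument after applying the reflection symmetry $h_n(\cdot,t)=h_n(2m_n-\cdot,t)$ from Lemma \ref{lem: hn property}(i).

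The two intensity computations above are precisely those of \cite[Lemma 4.3]{Cetin-Xing}, with $h$ in that proof replaced by $h_n$ and the ambient sample space restricted to $A_n$. I would therefore quote that computation rather than reproduce it, emphasising only the point where the present setting differs: verifying that the indicator $\indic_{A_n}$ is $\F^I_0$-measurable, which allows one to multiply the corresponding compensator by $\indic_{A_n}$ without extra terms, and that the stopping times $\nu_i$ and $\tau^-_i$ constructed from $(\eta_i)$ and $(\zeta_i)$ remain $\F^I$-stopping times by our standing assumption on $(\F^I_t)$.

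The only real obstacle is a bookkeeping one, namely ensuring that the $\F^I$-compensators of $X^{B,B}$ and $X^{B,S}$ (resp.\ $X^{S,S}$ and $X^{S,B}$) are built consistently across the whole partition so that their sum has the clean $\sum_n\indic_{A_n}$ form stated in the lemma. This is not a deep difficulty, because $\indic_{A_n}$ commutes with the $\F^I$-projection, but care is needed to confirm that the $\mathcal{G}$-intensity heuristic of Lemma \ref{lem: Zintensity}, which was computed on the canonical space of $Z$ alone, genuinely lifts to the enlarged space carrying $(\eta_i)$ and $(\zeta_i)$; this follows from the independence of the randomization sequences from $(Z,(A_n))$ and the right-continuity of $\F^I$.
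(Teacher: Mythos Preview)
Your proposal is correct and matches the paper's own approach: the paper does not give a detailed proof but simply states that an argument similar to \cite[Lemma 4.3]{Cetin-Xing} yields the lemma, which is exactly the reduction you carry out. Your additional care about the $\F^I_0$-measurability of $\indic_{A_n}$ and the independence of the randomizing sequences from $(Z,(A_n))$ makes explicit the bookkeeping needed to patch the $N=2$ argument across the partition, so your write-up is, if anything, more complete than what the paper provides.
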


Now we are ready to verify that our construction is as desired.

\begin{prop}\label{prop: Y property}
The process $Y$ as constructed above satisfies the following properties:
\begin{enumerate}[i)]
\item $[Y_1 \in [a_n,a_{n+1})] = A_n$ a.s. for $n=1,\cdots,N$;
\item $Y^B$ and $Y^S$ are independent Poisson processes with intensity $\beta$ with respect to the natural filtration $(\F_t^Y)_{t\in[0,1]}$ of $Y$;
\item $(X^B, X^S; \F^I)$ is admissible in the sense of Definition \ref{def: insider ad}.
\end{enumerate}
\end{prop}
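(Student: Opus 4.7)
The plan is to prove the three items in order, with Lemma~\ref{lem: Y intensity} as the key input. For item~i), fix $n$ and observe that the $\F^I$-intensities of $Y^B$ and $Y^S$ on $A_n$ given by Lemma~\ref{lem: Y intensity} take the \emph{same} functional form $\beta h_n(y\pm 1, t)/h_n(y,t)$ as the $\mathcal{G}$-intensities of $Z^B$ and $Z^S$ on $[Z_1 \in [a_n, a_{n+1})]$ computed in Lemma~\ref{lem: Zintensity}. Since on each side we have two counting processes starting at zero whose law is uniquely determined by their predictable intensities (Jacod's uniqueness theorem, cf.\ \cite[Chapter VIII]{Bremaud}), the conditional law of $Y$ given $A_n$ under $\prob$ coincides with the conditional law of $Z$ given $\{Z_1 \in [a_n, a_{n+1})\}$ under $\overline{\prob}$. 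Hence $\prob(Y_1 \in [a_n, a_{n+1})\,|\, A_n) = 1$, so $A_n \subset [Y_1 \in [a_n, a_{n+1})]$ a.s. Since $\sum_{n=1}^N \prob(A_n) = \sum_{n=1}^N h_n(0,0) = 1$ and the intervals $[a_n, a_{n+1})$ partition $\Integer$, equality $A_n = [Y_1 \in [a_n, a_{n+1})]$ follows.

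Item~i) says that $Y$ under $\prob$ has the same distribution as $Z$ under $\overline{\prob}$, so by the Markov property of $Z$, $\prob(A_n \,|\, \F^Y_t) = h_n(Y_t, t)$. For item~ii), conditioning the intensity from Lemma~\ref{lem: Y intensity} on $\F^Y_{t-}$ yields
\[
 \sum_{n=1}^N \prob(A_n \,|\, \F^Y_{t-}) \, \frac{h_n(Y_{t-}+1, t)}{h_n(Y_{t-}, t)} \beta = \sum_{n=1}^N h_n(Y_{t-}+1, t) \beta = \beta,
\]
so the $\F^Y$-intensity of $Y^B$ equals the constant $\beta$, and identically for $Y^S$. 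The construction ensures that $Y^B$ and $Y^S$ never jump simultaneously: jumps of $X^{B,B}$ and $X^{S,S}$ occur at the random times $\nu_i$, almost surely disjoint from jumps of $Z$, while $Z^B, Z^S$ are independent Poissons with no common jumps a.s. Watanabe's multivariate characterization then gives that $Y^B$ and $Y^S$ are independent $\F^Y$-Poisson processes with intensity $\beta$.

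For item~iii), conditions i)--iii) of Definition~\ref{def: insider ad} are built into the construction, together with the intensity formulas of Corollary~\ref{Cor: X intensity}. Integrability follows from the bounds $X^{B,B}\leq Y^B$, $X^{B,S}\leq Z^S$ and the analogous bounds on $X^{S,S}, X^{S,B}$, giving $\expec[X^B_1], \expec[X^S_1] \leq 2\beta$ thanks to item~ii). As $N<\infty$, $p$ is bounded by $\max_n |v_n|$, so condition~iv) reduces to $\max_n|v_n|\bigl(\expec[X^i_1] + \expec[\int_0^1 \theta^i_t dt]\bigr) < \infty$, which holds because $\expec[\int_0^1 \theta^i_t dt]=\expec[X^i_1]$. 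The main obstacle will be applying Jacod's uniqueness rigorously in the first step: the ratios $h_n(y\pm 1, t)/h_n(y,t)$ are unbounded as $t\uparrow 1$ near the endpoints of $[a_n, a_{n+1})$, so one should first restrict to $[0, 1-\varepsilon]$, apply the uniqueness theorem there, and then send $\varepsilon\downarrow 0$ using that $Y$ has only finitely many jumps on each compact subinterval of $[0,1)$ almost surely.
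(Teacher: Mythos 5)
Your argument is correct in substance but follows a genuinely different route from the paper's. You identify the conditional law of $Y$ given $A_n$ with the $h$-transform of $Z$ (the law of $Z$ conditioned on $\{Z_1\in[a_n,a_{n+1})\}$) in one stroke, by matching the compensators from Lemma \ref{lem: Y intensity} and Lemma \ref{lem: Zintensity} and invoking Jacod's theorem that the internal-filtration compensator of a multivariate point process determines its law; items i) and ii) and the filtering identity $\prob(A_n\,|\,\F^Y_t)=h_n(Y_t,t)$ then all fall out of this single law identification. The paper instead works directly: it introduces the likelihood-ratio process $\ell_t=\sum_n\indic_{A_n}h_n(0,0)/h_n(Y_t,t)$, shows it is a positive $\F^I$-local martingale, and obtains i) from Doob's supermartingale convergence ($\ell_{1-}<\infty$ forces $h_n(Y_{1-},1)>0$ on $A_n$); it then proves the filtering identity \emph{first}, via a Girsanov change of measure $d\qprob^i/d\prob=\ell_{\tau_i\wedge t}$ and Bayes' formula, and only then deduces ii) by projecting the $\F^I$-intensity onto $\F^Y$. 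Your approach is conceptually cleaner but leans on the abstract uniqueness theorem, and the real technical content is concentrated exactly where you defer it: the intensity $\beta h_n(y\pm1,t)/h_n(y,t)$ is unbounded both as $t\uparrow1$ and as $|y|\to\infty$, so the localization to $[0,1-\varepsilon]$ and the passage $\varepsilon\downarrow0$ (including ruling out a jump of $Y$ at the fixed time $1$, which the paper handles explicitly) must be carried out carefully; the paper's supermartingale argument is precisely a self-contained substitute for that step. Your admissibility bounds in iii) ($X^{B,B}\le Y^B$, $X^{B,S}\le Z^S$, and $p$ bounded since $N<\infty$) match the paper's.
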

\begin{proof}
 To verify that $Y$ satisfies the desired properties, let us introduce an auxiliary process $(\ell_t)_{t\in[0,1)}$:
\[
	\ell_t:=\sum^N_{n=1}\indic_{A_n}\frac{h_n(0,0)}{h_n(Y_t,t)} \quad t \in [0,1).
\]
When $n=2, \cdots, N-1$, there is only almost surely finite number of positive (resp. negative) jumps of $Y$ on $A_n$ when $Y_\cdot \geq \um$ (resp. $Y_\cdot \leq \dm$). Therefore $Y_t$ is finite on these $A_n$ when $t<1$ is fixed. When $n=1$ (resp. $n=N$), there is finite number of positive (resp. negative) jumps of $Y$ on $A_1$ (resp. $A_N$) before $t$. Hence $Y_t<\infty$ on $A_1$ (resp. $Y_t>-\infty$ on $A_N$). This analysis implies $h_n(Y_t, t)>0$ on $A_n$ for each $n=1, \cdots, N$ and $t<1$. Therefore $(\ell_t)_{t\in [0,1)}$ is well defined positive process with $\ell_0=1$.

To prove i), we first show that $\ell$ is a positive $\F^I$-local martingale on $[0,1)$. To this end, It\^o formula yields that
\[
	d\ell_t = \sum^N_{n=1}\indic_{A_n}\ell_{t-}\bra{\frac{h_n(Y_{t-},t)-h_n(Y_{t-}+1,t)}{h_n(Y_{t-}+1,t)}dM_t^B + \frac{h_n(Y_{t-},t)-h_n(Y_{t-}-1,t)}{h_n(Y_{t-}-1,t)}dM_t^S}, \quad t \in [0,1).
\]
Here
\[
M^B = Y^B- \beta\int^\cdot_0\sum^N_{n=1}\indic_{A_n}\frac{h_n(Y_{r-}+1,r)}{h_n(Y_{r-},r)}dr, \quad M^S = Y^S- \beta\int^\cdot_0\sum^N_{n=1}\indic_{A_n}\frac{h_n(Y_{r-}-1,r)}{h_n(Y_{r-},r)}dr
\]
are all $\F^I$-local martingales. Define $\zeta^+_m=\inf\{t \in [0,1]:Y_t=m\}$ and $\zeta^-_m=\inf\{t \in [0,1]:Y_t=-m\}$. Consider the sequence of stopping time $(\eta_m)_{m \geq 1}$:
\[
	\eta_m:=\pare{\indic_{\cup_{n=2}^{N-1} A_n} \zeta^+_m \wedge \zeta^-_m + \indic_{A_1} \zeta^+_m + \indic_{A_N} \zeta^-_m}\wedge (1-1/m).
\]
It follows from the definition of $h_n$ that each $h_n(Y_t, t)$ on $A_n$ is bounded away from zero uniformly in $t\in [0, \eta_m]$. This implies that $\ell^{\eta_m}$ is bounded, hence $\ell^{\eta_m}$ is an $\F^I$-martingale. The construction of  $Y$ yields $\lim_{m \to \infty} \eta_m = 1$. Therefore, $\ell$ is a positive $\F^I$-local martingale, hence also a supermartingale, on $[0,1)$.

Define $\ell_1 := \lim_{t \to 1} \ell_t$, which exists and is finite due to Doob's supermartingale convergence theorem. This implies $h_n(Y_{1-},1) > 0$ on $A_n$. On the other hand, the construction of $Y$ yields $Y^S$ (resp. $Y^B$) does not jump at time $1$ $\prob$-a.s. when $Y_{1-}\leq \dm$ (resp. $Y_{1-}\geq \um$). Therefore $h_n(Y_1, 1)>0$ on $A_n$. However $h_n(\cdot, 1)$ by definition can only be either $0$ or $1$. Hence $Y_1\in [a_n, a_{n+1})$ on $A_n$, for each $n=1, \cdots, N$, and the statement i) is confirmed.

As for the statement ii), we will prove that $Y^B$ is an $\F^Y$-adapted Poisson process. The similar argument can be applied to $Y^S$ as well. In view of the $\F^I$-intensity of $Y^B$ calculated in Lemma \ref{lem: Y intensity}, one has that, for each $i \geq 1$,
\[
Y^B_{\cdot \wedge \tau_i\wedge 1} - \beta\pare{\int^{\cdot\wedge \tau_i \wedge 1}_0 \sum^N_{n=1}\indic_{A_n} \frac{h_n(Y_{u-}+1,u)}{h_n(Y_{u-},u)}du }
\]
is an $\F^I$-martingale, where $\tau_i$ is the $i^{th}$ jump time of $Y$. We will show in the next paragraph that, when stopped at $\tau_i \wedge 1$, $Y^B$ is Poisson process in $\F^Y$ by showing that $(Y^B_{\tau_i \wedge t}-\beta(\tau_i \wedge t))_{t \in [0,1]}$ is an $\F^Y$-martingale. (Here note that $\tau_i$ is an $\F^Y$-stopping time.) This in turn will imply that $Y^B$ is a Poisson process with intensity $\beta$ on $[0,\tau \wedge 1)$ where $\tau=\lim_{i\to\infty}\tau_i$ is the explosion time. Since Poisson process does not explode, this will further imply $Y^B_{\tau\wedge 1} < \infty$ and, therefore, $\tau \geq 1$, $\prob$-a.s..

We proceed by projecting the above martingale into $\F^Y$ to see that
\[
	Y^B - \beta\int^\cdot_0\sum^N_{n=1}\prob(A_n|\F^Y_r)\frac{h_n(Y_{r-}+1,r)}{h_n(Y_{r-},r)}dr
\]
is an $\F^Y$-martingale when stopped at $\tau_i \wedge 1$. Therefore, it remains to show that, for almost all $t \in [0,1)$, on $[t \leq \tau_i]$,
\begin{equation}\label{eq: h sum id}
\sum^N_{n=1}\prob(A_n|\F^Y_t)\frac{h_n(Y_{t-}+1,t)}{h_n(Y_{t-},t)} = 1, \quad \prob\text{-a.s.}.
\end{equation}
To this end, we will show, on $[t \leq \tau_i]$,
\begin{equation}
\label{pIn}
	\prob(A_n|\F^Y_t) = h_n(Y_{t},t), \quad \text{for } t \in [0,1).
\end{equation}
Then \eqref{eq: h sum id} follows since $Y_t \neq Y_{t-}$ only for countably many times.

We have seen that $(\ell_{u\wedge\tau_i})_{u\in[0,t]}$ is a strictly positive $\F^I$-martingale for each $i$. Define a probability measure $\qprob^i \sim \prob$ on $\F^I_t$ via $d\qprob^i/d\prob|_{\F^I_t} = \ell_{\tau_i\wedge t}$. It follows from Girsanov's theorem that $Y^B$ is a Poisson process when stopped at $\tau_i \wedge t$ and with intensity $\beta$ under $\qprob^i$. Therefore, they are independent from $A_n$ under $\qprob^i$. Then, for $t < 1$, we obtain from the Bayes's formula that
\begin{equation}
\label{In}
\begin{split}
\indic_{\{r\leq\tau_i\wedge t\}}\prob(A_n|\F^Y_r) &= \indic_{\{r \leq \tau_i \wedge t\}}\frac{\expec^{\qprob^i}[\indic_{A_n}\ell_r^{-1}|\F^Y_r]}{\expec^{\qprob^i}[\ell_r^{-1}|\F^Y_r]} \\
&=\indic_{\{r \leq \tau_i \wedge t\}}\frac{\expec^{\qprob^i}[\indic_{A_n}\frac{h_n(Y_r,r)}{h_n(0,0)}|\F^Y_r]}{\expec^{\qprob^i}[\sum^N_{n=1}\indic_{A_n}\frac{h_n(Y_r,r)}{h_n(0,0)}|\F^Y_r]} \\
& = \indic_{\{r \leq \tau_i \wedge t\}}h_n(Y_r,r),
\end{split}
\end{equation}
where the third identity follows from the aforementioned independence of $Y$ and $A_n$ under $\qprob^i$ along with the fact that $\qprob^i$ does not change the probability of $\F_0^I$ measurable events so that $\qprob^i(A_n) = \prob(A_n) = h_n(0,0)$. As result, (\ref{pIn}) follows from (\ref{In}) after sending $i \to \infty$.

Since $Y^B$ and $Y^S$ are $\F^Y$-Poisson processes and they do not jump simultaneously by their construction, they are then independent. To show the strategy $(X^B, X^S; \F^I)$ constructed is admissible, it remains to show both $\expec[X^B_1 \indic_{A_n}]$ and $\expec[X^S_1 \indic_{A_n}]$ are finite for each $n=1, \cdots, N$. To this end, for each $n$, $\expec[X^B_1 \indic_{A_n}]= \expec[X^{B,B}_1 \indic_{A_n}] + \expec[X^{B,S}_1 \indic_{A_n}]$, where $\expec[X^{B,S}_1 \indic_{A_n}] \leq \expec[Z^S]<\infty$ and $\expec[X^{B,B}_1 \indic_{A_n}] \leq \expec[Y^B_1 \indic_{A_n}] + \expec[X^{S,B}_1 \indic_{A_n}] \leq \expec[Z^B_1 \,|\, Z\in [a_n, a_{n+1})] + \expec[Z^S_1]<\infty$. Similar argument also implies $\expec[X^S_1 \indic_{A_n}]<\infty$. Finally, since $N<\infty$, $p$ is bounded, Definition \ref{def: insider ad} iv) is verified using $\expec[X^B_1 \indic_{A_n}], \expec[X^S_1 \indic_{A_n}]<\infty$ for each $n\in \{1, \ldots, N\}$.
\end{proof}

\section{Convergence}\label{sec: convergence}

Collecting results from previous sections, we will prove Theorems \ref{thm: main} and \ref{thm: str conv} in this section. Let us first construct a sequence of random variables $(\tv^\delta)_{\delta>0}$, each of which will be the fundamental value in the Glosten Milgrom model with order size $\delta$.

Adding to the sequence of canonical spaces $(\Omega^\delta, \F^{Z, \delta}, (\F^{Z, \delta}_t)_{t\in [0,1]}, \prob^\delta)$, defined at the beginning of Section \ref{subsec: nonexistence}, we introduce $(\Omega^0, \F^0, (\F^0_t)_{t\in [0,1]}, \prob^0)$, where $\Omega^0= \mathbb{D}([0,1], \Real)$ is the space of $\Real$-valued \cadlag\, functions on $[0,1]$ with coordinate process $Z^0$, and $\prob^0$ is the Wiener measure. Denote by $\prob^{0,y}$ the Wiener measure under which $Z^0_0=y$ a.s.. Let us now define a $\Real\cup \{-\infty, \infty\}$-valued sequence $(a^0_n)_{n=1, \cdots N+1}$ via
\[
 a^0_1 = -\infty, \quad a^0_n = \Phi^{-1}\pare{p_1 + \cdots + p_{n-1}}, \quad n=2, \cdots, N+1, \quad \text{where } \Phi(\cdot) = \int_{-\infty}^\cdot \frac{1}{\sqrt{2\pi}} e^{-x^2/2} \,dx.
\]
Using this sequence, one can define a pricing rule following the same recipe in \eqref{eq: p func}:
\begin{align}
 &p^0(y,t) := \sum_{n=1}^N v_n h^0_n(y,t), \quad y\in \Real, t\in[0,1], n\in\{1, \cdots, N\}, \label{eq: p0}\\
 &\text{where } h^0_n(y,t):= \prob^{0,y}\pare{Z^0_{1-t}\in [a^0_n, a^0_{n+1})}= \Phi(a^0_{n+1}-y) - \Phi(a_n^0-y). \nonumber
\end{align}
As we will see later, this is exactly the pricing rule in the Kyle-Back equilibrium. Moreover, the sequence $(a^\delta_n)_{n=1, \cdots, N+1}$, associated to $(\tv^\delta)_{\delta>0}$ constructed below, converges to $(a^0_n)_{n=1, \cdots, N+1}$ as $\delta \downarrow 0$, helping to verify Definition \ref{def: Asy GM eq} i).

\begin{lem}\label{lem: v delta}
 For any $\tv$ with distribution \eqref{eq: v dist} where $N$ may not be finite, there exists a sequence of random variables $(\tv^\delta)_{\delta>0}$, each of which takes value in $\{v_1, \cdots, v_N\}$, such that
 \begin{enumerate}
  \item[i)] Assumption \ref{ass: middle level} is satisfied when $\tv$ therein is replaced by each $\tv^\delta$\,\footnote{When the order size is $\delta$, Assumption \ref{ass: middle level} iii) reads $(a_n^\delta + a_{n+1}^\delta -\delta)/2\notin \delta \Integer$.};
  \item[ii)] $Law(\tv^\delta) \Longrightarrow Law(\tv)$, as $\delta \downarrow 0$. Here $\Longrightarrow$ represents the weak convergence of probability measures.
 \end{enumerate}
\end{lem}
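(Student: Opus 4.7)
The plan is to pick $a_n^\delta\in 2\delta\Integer$ approximating the quantiles $a_n^0:=\Phi^{-1}(p_1+\cdots+p_{n-1})$ of the standard normal, so that the parity condition in Assumption \ref{ass: middle level} iii) is automatic, and then to read off $\tv^\delta$ from relation \eqref{def: v delta}. The convergence in ii) then reduces to the CLT for $Z_1^\delta$.

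First, I would set $a_1^\delta:=-\infty$, $a_{N+1}^\delta:=\infty$, and, for every $n\in\{2,\ldots,N\}$ with $a_n^0$ finite, put
\[
a_n^\delta := 2\delta \bigl\lfloor a_n^0/(2\delta)\bigr\rfloor ,
\]
so that $|a_n^\delta-a_n^0|\le 2\delta$ and $a_n^\delta/\delta$ is even. The evenness entails $(a_n^\delta+a_{n+1}^\delta-\delta)/2\notin\delta\Integer$, which is exactly the rescaled Assumption \ref{ass: middle level} iii). Strict monotonicity $a_n^\delta<a_{n+1}^\delta$ holds as soon as $2\delta<a_{n+1}^0-a_n^0$; for finite $N$ this is automatic for all sufficiently small $\delta$, while for $N=\infty$ I would apply the floor formula only for $n\le N_\delta$ with $N_\delta\uparrow\infty$ as $\delta\downarrow 0$ slowly enough, and fill in the tail by any strictly increasing sequence in $2\delta\Integer$ starting past $a_{N_\delta}^\delta$. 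This delivers Assumption \ref{ass: middle level} i).

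Second, declare
\[
\prob(\tv^\delta=v_n):=\prob^\delta\bigl(Z_1^\delta\in[a_n^\delta,a_{n+1}^\delta)\bigr), \qquad n=1,\ldots,N,
\]
which is a valid probability distribution since the intervals partition $\delta\Integer\cup\{-\infty\}$, and makes Assumption \ref{ass: middle level} ii) hold by construction. For the weak convergence, note that $Z_1^\delta=\delta(N^+-N^-)$ with $N^\pm$ independent Poisson of mean $\beta^\delta=(2\delta^2)^{-1}$, so $\expec[Z_1^\delta]=0$ and $\operatorname{Var}(Z_1^\delta)=1$; the classical CLT then yields $Z_1^\delta\Rightarrow \mathcal{N}(0,1)$ as $\delta\downarrow 0$. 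Since $a_n^\delta\to a_n^0$ for every $n$, the portmanteau theorem gives
\[
\prob(\tv^\delta=v_n)\longrightarrow \Phi(a_{n+1}^0)-\Phi(a_n^0)=p_n .
\]
Pointwise convergence of the mass function on the countable set $\{v_1,v_2,\ldots\}$ together with $\sum_n p_n=1$ upgrades to $\operatorname{Law}(\tv^\delta)\Rightarrow\operatorname{Law}(\tv)$ via Scheff\'e's lemma.

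The main obstacle I anticipate is handling the case $N=\infty$, where strict monotonicity of $(a_n^\delta)_n$ simultaneously with the parity condition cannot be achieved for a single $\delta$ if the gaps $a_{n+1}^0-a_n^0$ shrink to zero. The truncation $N_\delta\uparrow\infty$ solves this at the cost of an extra tail argument: one must check that the tail probabilities behave well under the limit, but this again follows from weak convergence of $Z_1^\delta$ and Scheff\'e's lemma since the mass function converges pointwise and both distributions sum to $1$.
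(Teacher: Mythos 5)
Your proof is correct, but it takes a different route from the paper's. The paper constructs $a_n^\delta$ inductively as discrete quantiles of $Z_1^\delta$ itself, i.e.\ $a_{n+1}^\delta=\inf\{y\in\delta\Integer: \prob^\delta(Z_1^\delta\le y)\ge p_1+\cdots+p_n \text{ and } (a_n^\delta+y-\delta)/2\notin\delta\Integer\}$, so the parity condition is imposed as a side constraint in the infimum and the resulting $p_n^\delta$ is automatically within one lattice atom of $p_n$; convergence of $p_n^\delta$ is then obtained from the functional weak convergence $\prob^\delta\Longrightarrow\prob^0$ together with an $\epsilon$-sandwich of the intervals $[a_n^\delta,a_{n+1}^\delta)$ between $[a_n^0\pm\epsilon, a_{n+1}^0\mp\epsilon)$. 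You instead round the Gaussian quantiles $a_n^0$ down to the even lattice $2\delta\Integer$, which makes Assumption \ref{ass: middle level} iii) automatic (indeed $(2\delta k_n+2\delta k_{n+1}-\delta)/2=\delta(k_n+k_{n+1})-\delta/2\notin\delta\Integer$) and is arguably cleaner, at the price of having to enforce strict monotonicity separately — hence your truncation $N_\delta\uparrow\infty$ when $N=\infty$, which is a legitimate fix since for each fixed $n$ one eventually has $n\le N_\delta$ and Scheff\'e's lemma upgrades pointwise convergence of the mass functions to convergence in law. Two small points to tighten: (a) the CLT step needs only $Z_1^\delta\Rightarrow\mathcal{N}(0,1)$, which is fine, but your appeal to portmanteau with the $\delta$-dependent endpoints $a_n^\delta$ should be replaced by either the paper's sandwich or uniform convergence of the distribution functions (P\'olya's theorem, using continuity of $\Phi$); (b) you should note that each interval $[a_n^\delta,a_{n+1}^\delta)$ contains at least one lattice point (it has length $\ge 2\delta$), so that $\prob(\tv^\delta=v_n)>0$ and $\tv^\delta$ genuinely takes all the values $v_1,\ldots,v_N$. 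Neither is a gap.
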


\begin{proof}
 For each $\delta>0$, $\tv^\delta$ will be constructed by adjusting $p_n$ in \eqref{eq: v dist} to some $p_n^\delta$, $n=1, \cdots, N$. Starting from $[\tv=v_1]$, choose $a^\delta_1=-\infty$, $a^\delta_2= \inf\{y\in \delta \Integer\,:\, \prob^\delta(Z^\delta_1 \leq y) \geq p_1\}$, and set $\prob^\delta(\tv^\delta = v_1) = \prob^\delta(Z^\delta_1 \in [a^\delta_1, a^\delta_2))$. Moving on to $[\tv^\delta = v_2]$, choose $a^\delta_3 = \inf\{y\in \delta \Integer\,:\, \prob^\delta(Z^\delta_1 \leq y) \geq p_1+p_2 \text{ and } (a^\delta_2 + y-\delta)/2 \notin \delta \Integer\}$ and  set $\prob^\delta (\tv^\delta = v_2) = \prob^\delta (Z^\delta_1 \in [a^\delta_2, a^\delta_3))$. Following this step, we can define $a^\delta_n$ inductively. When $N<\infty$, we set $a^\delta_{N+1}= \infty$. This construction gives a sequence of random variables $(\tv^\delta)_{\delta>0}$ taking values in $\{v_1, \cdots, v_N\}$ such that
 $\prob^\delta(\tv^\delta=v_n) = p^\delta_n := \prob^\delta(Z^\delta_1 \in [a^\delta_n, a^\delta_{n+1}))$ with $\sum_{n=1}^N p^\delta_n=1$, moreover each sequence $(a^\delta_n)_{n=1, \cdots, N+1}$ satisfies Assumption \ref{ass: middle level}.

 It remains to show $Law(\tv^\delta) \Longrightarrow Law(\tv)$ as $\delta\downarrow 0$. To this end, note that $a^\delta_{n}$ is either the $(\sum_{i=1}^{n-1} p_i)^{th}$ quantile of the distribution of $Z^\delta_1$ or $\delta$ above this quantile. When $\beta^\delta$ is chosen as $1/(2\delta^2)$, it follows from \cite[Chapter 6, Theorem 5.4]{Ethier-Kurtz} that $\prob^\delta \Longrightarrow \prob^0$, in particular, $Law(Z^\delta_1) \Longrightarrow Law(Z^0_1)$. Therefore,
 \begin{equation}\label{eq: a conv}
  \lim_{\delta \downarrow 0} a^\delta_n = a^0_n, \quad n=1, \cdots, N+1.
 \end{equation}
 For any $\epsilon>0$ and $n\in\{1, \cdots, N\}$, the previous convergence yields the existence of a sufficiently small $\delta_{\epsilon, n}$, such that $[a^0_n + \epsilon,a^0_{n+1}-\epsilon) \subseteq [a^\delta_n ,a_{n+1}^\delta) \subseteq [a^0_n-\epsilon,a^0_{n+1}+\epsilon)$ for any $\delta \leq \delta_{\epsilon, n}$. Hence
 \begin{align*}
	\prob^\delta\pare{Z_1^\delta \in [a^\delta_n,a^\delta_{n+1})} &\leq 	 \prob^\delta\pare{Z_1^\delta \in [a^0_n-\epsilon,a^0_{n+1}+\epsilon)} \to \prob^0\pare{Z_1^0 \in [a^0_n-\epsilon, a^0_{n+1}-\epsilon)},\\
	\prob^\delta\pare{Z_1^\delta \in [a^\delta_n,a^\delta_{n+1})} &\geq 	 \prob^\delta\pare{Z_1^\delta \in [a^0_n+\epsilon,a^0_{n+1}-\epsilon)} \to \prob^0\pare{Z_1^0 \in [a^0_n+\epsilon, a^0_{n+1}-\epsilon)}, \quad \text{as } \delta \downarrow 0,
\end{align*}
where both convergence follow from $Law(Z^\delta_1) \Longrightarrow Law(Z^0_1)$ and the fact that the distribution of $Z^0_1$ is continuous. Since $\epsilon$ is arbitrarily chosen, utilizing the continuity of the distribution for $Z^0_1$ again, we obtain from the previous two inequalities
\begin{equation*}
\lim_{\delta \downarrow 0}\prob^\delta\pare{Z_1^\delta \in [a^\delta_n,a_{n+1}^\delta)} = \prob^0\pare{Z_1^0 \in [a^0_n,a^0_{n+1})}.
\end{equation*}
Hence $\lim_{\delta \downarrow 0} p^\delta_n = p^0_n$ for each $n\in \{1, \cdots N\}$ and $Law(\tilde{v}^\delta) \Rightarrow Law(\tilde{v})$.
\end{proof}

After $(\tv^\delta)_{\delta >0}$ is constructed, it follows from Sections \ref{sec: suboptimal} and \ref{sec: point process bridge} that a sequence of strategies $(X^{B,\delta}, X^{S, \delta}; \F^{I,\delta})_{\delta>0}$ exists, each of which satisfies conditions in Proposition \ref{prop: rational}. Hence $p^\delta$ in \eqref{eq: p func} is rational for each $\delta>0$. It then remain to verify Definition \ref{def: Asy GM eq} iii) to establish an asymptotic Glosten Milgrom equilibrium.

Before doing this, we prove Theorem \ref{thm: str conv} first.
Let us recall the Kyle-Back equilibrium. Following arguments in \cite{Kyle} and \cite{Back}, the equilibrium pricing rule is given by \eqref{eq: p0} and the equilibrium demand satisfies the SDE
\[
 Y^0= Z^0 + \sum_{n=1}^N \indic_{\{\tv= v_n\}} \int_0^\cdot \frac{\partial_y h^0_n(Y^0_r, r)}{h^0_n(Y^0_r, r)}\, dr,
\]
where $Z^0$ is a $\prob^0$-Brownian motion modeling the demand from noise traders. Hence the insider's strategy in the Kyle-Back equilibrium is given by
\[
 X^0 = \sum_{n=1}^N \indic_{\{\tv= v_n\}} \int_0^\cdot \frac{\partial_y h^0_n(Y^0_r, r)}{h^0_n(Y^0_r, r)}\, dr.
\]

\begin{proof}[Proof of Theorem \ref{thm: str conv}]
 As we have seen in Lemma \ref{lem: v delta}, Assumption \ref{ass: middle level} is satisfied by each $\tv^\delta$. It then follows from Proposition \ref{prop: Y property} i) and ii) that the distribution of $Y^\delta$ on $[\tv^\delta = v_n]$ is the same as the distribution of $Z^\delta$ conditioned on $Z^\delta_1 \in [a^\delta_n, a^\delta_{n+1})$. Denote $Y^{0,n}=Y^0 \indic_{\{\tv=v_n\}}$ as the cumulative demand in Kyle Back equilibrium when the fundamental value is $v_n$. The same argument as in \cite[Lemma 5.4]{Cetin-Xing} yields
 \[
  Law(Z^\delta \,|\, Z^\delta_1 \in [a^\delta_n, a^\delta_{n+1})) \Longrightarrow Law(Y^{0,n}), \quad \text{ as } \delta \downarrow 0,
 \]
 for each $n\in \{1, \cdots, N\}$. It then follows
 \begin{equation}\label{eq: Y conv}
  Law(Y^\delta; \F^{I, \delta}) \Longrightarrow Law(Y^0; \F^{I,0}), \quad \text{ as } \delta \downarrow 0,
 \end{equation}
 where the filtration $\F^{I,0}$ is $\F^{0}$ initially enlarged by $\tv$.
 Recall from \eqref{eq: Y decomp} that $Y^\delta = Z^\delta + X^{B, \delta} - X^{S, \delta}$, moreover $Y^{0}= Z^0 + X^0$. Combining \eqref{eq: Y conv} with $Law(Z^\delta) \Longrightarrow Law(Z^0)$, we conclude from \cite[Proposition VI.1.23]{Jacod-Shiryaev} that $Law(X^{B,\delta}- X^{S,\delta})\Longrightarrow Law(X^0)$ as $\delta \downarrow 0$.
\end{proof}

In the rest of the section, Definition \ref{def: Asy GM eq} iii) is verified for strategies $(X^{B,\delta}, X^{S,\delta}; \F^{I,\delta})_{\delta>0}$, which concludes the proof of Theorem \ref{thm: main}. We have seen in Proposition \ref{prop: ep<U-L} that the expected profit of the strategy $(X^{B, \delta}, X^{S, \delta}; \F^{I,\delta})$, constructed in Section \ref{sec: point process bridge}, satisfies
\[
 \mathcal{J}^\delta(v_n, 0,0; X^{B, \delta}, X^{S, \delta}) = U^\delta(v_n, 0, 0) - L^\delta (v_n, 0, 0), \quad n\in \{1, \cdots, N\},
\]
where
\begin{equation}\label{eq: L delta}
\begin{split}
 L^\delta(v_n, 0,0) =& \quad \delta\beta^\delta\,\expec^{\delta, 0}\bra{\left.  \int_0^1 (v_n - p^\delta (\dmd, r)) \,\indic_{\{Y^\delta_{r-} = \umd\}} dr\right| \tv^\delta = v_n}\\
 &-\delta \beta^\delta\,\expec^{\delta, 0}\bra{\left. \int_0^1 (v_n - p^\delta (\umd, r)) \,\indic_{\{Y^\delta_{r-} = \dmd\}} dr\right| \tv^\delta = v_n}.
\end{split}
\end{equation}
This expression for $L^\delta$ follows from changing the order size in \eqref{eq: L} from $1$ to $\delta$ and utilizing $\theta^{B, S, \delta}(\umd, \cdot) = \theta^{S,S,\delta}(\umd, \cdot) = \theta^{S, B,\delta}(\dmd, \cdot) = \theta^{B,B,\delta}(\dmd, \cdot) =0$ from Corollary \ref{Cor: X intensity} i) and iii), $\theta^{B,T,\delta}=\theta^{S,T,\delta}\equiv 0$ from Remark \ref{rem: no top up}, and the expectations are taken under $\prob^{\delta, 0}$. Here $\dmd:= \delta\lfloor (a_n+ a_{n+1}-\delta)/2\delta\rfloor$ the largest integer multiple of $\delta$ smaller than $m^\delta_n$ and by $\umd:= \delta\lceil (a_n+ a_{n+1}-\delta)/2\delta \rceil$ the smallest integer multiple of $\delta$ larger than $m^\delta_n$.
To prove Theorem \ref{thm: str conv}, let us first show
\begin{equation}\label{eq: L conv}
 \lim_{\delta\downarrow 0} L^\delta(v_n, 0, 0) = 0, \quad n\in\{1, \cdots, N\}.
\end{equation}
In the rest development, we fix $v_n$ and denote $L^\delta = L^\delta (v_n, 0,0)$.

Before presenting technical proofs for \eqref{eq: L conv}, let us first introduce a heuristic argument. First, since $\beta^\delta= 1/(2\delta^2)$, \eqref{eq: L delta} can be rewritten as \begin{equation}\label{eq: L delta reform}
 L^\delta = \expec^{\delta,0}\bra{\left. \uId_1 \right| \tv^\delta = v_n} - \expec^{\delta,0}\bra{\left. \dId_1 \right| \tv^\delta = v_n},
\end{equation}
where
\[
 \uId_\cdot = \int_0^\cdot (v_n - p^\delta (Y^\delta_{r-}-\delta, r)) \, d \uLd_r, \quad \dId_\cdot = \int_0^\cdot (v_n - p^\delta (Y^\delta_{r-}+\delta, r))\, d \dLd_r,
\]
and $\cL^{\delta, y}_\cdot = \frac{1}{2\delta} \int_0^\cdot \indic_{\{Y^\delta_{r-} = y\}} dr$ is the \emph{scaled occupation time} of $Y^\delta$ at level $y$. Here $Y^\delta$ is, in its natural filtration, the difference of two independent Poisson $Y^{B, \delta}$ and $Y^{S, \delta}$ with jump size $\delta$ and intensity $\beta^\delta$, cf. Proposition \ref{prop: Y property} ii). For the integrands in $\uId$ and $\dId$, we expect that $v_n -p^\delta (Y^\delta_\cdot \pm \delta, \cdot) \wc v_n - p^0(Y^0_\cdot, \cdot)$, where $Y^0$ is a $\prob^0$-Brownian motion. As for the integrators, we will show both $\uLd_\cdot$ and $\dLd_\cdot$ converge weakly to $\cL^{m_n}_\cdot$, which is the Brownian local time at level $m_n := (a^0_n + a^0_{n+1})/2$. Then the weak convergence of both integrands and integrators yield
\[
 \uId_\cdot \,\text{ and } \, \dId_\cdot \wc I^{0,n}_\cdot:= \int_0^\cdot (v_n - p^0(Y^0_r, r)) \, d\cL^{m_n}_r, \quad \text{ as } \delta \downarrow 0.
\]
Finally passing the previous convergence to conditional expectation, the two terms on the right hand side of \eqref{eq: L delta reform} cancel each other in the limit.

To make this heuristic argument rigorous, let us first prepare several results.

\begin{prop}\label{prop: integrand conv}
 On the family of filtration $(\F^{Y, \delta}_t)_{t\in [0,1], \delta \geq 0}$, generated by $(Y^\delta)_{\delta \geq 0}$,
 \[
  p^\delta (Y^\delta_\cdot \pm \delta, \cdot) \wc p^0(Y^0_\cdot, \cdot) \quad \text{ on } \mathbb{D}[0,1) \text{ as  } \delta \downarrow 0.
 \]
\end{prop}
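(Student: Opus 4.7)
The plan is to split the proof into three pieces: (i) a functional CLT giving $Y^\delta \wc Y^0$; (ii) uniform convergence of $p^\delta(\cdot \pm \delta, \cdot)$ to $p^0(\cdot, \cdot)$ on compact subsets of $\Real \times [0,1)$; (iii) composition via Skorokhod representation. The assertion drops out of these three steps.

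For step (i), Proposition \ref{prop: Y property} ii) shows that in its own filtration $(\F^{Y,\delta}_t)_{t\in[0,1]}$ the process $Y^\delta$ is distributed as the difference of two independent Poisson processes of jump size $\delta$ and common intensity $\beta^\delta=(2\delta^2)^{-1}$, that is, the same law as the coordinate process $Z^\delta$ on $(\Omega^\delta,\prob^\delta)$. The functional invariance principle already invoked in the proof of Lemma \ref{lem: v delta} (Ethier--Kurtz, Chapter 6, Theorem 5.4) then yields $Y^\delta \wc Y^0$ on $\mathbb{D}[0,1]$, with $Y^0$ a standard Brownian motion.

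For step (ii), I would establish that for every $T<1$ and every compact $K\subset\Real$,
\[
\sup_{t\in[0,T],\, y\in K\cap \delta\Integer}\bigl|p^\delta(y\pm\delta,t)-p^0(y,t)\bigr|\longrightarrow 0\quad\text{as }\delta\downarrow 0.
\]
Because $p^\delta=\sum_{n=1}^N v_n h^\delta_n$ and $p^0=\sum_{n=1}^N v_n h^0_n$ share the same weights, it suffices to prove the analogous uniform convergence of $h^\delta_n(y\pm\delta,t)$ to $h^0_n(y,t)$. Writing $h^\delta_n(y\pm\delta,t)=\prob^\delta\bigl(Z^\delta_{1-t}\in[a^\delta_n-y\mp\delta,a^\delta_{n+1}-y\mp\delta)\bigr)$ and combining the CLT $Z^\delta_{1-t}\wc Z^0_{1-t}$ (a Gaussian with continuous density, whose variance stays bounded away from $0$ on $t\le T$) with the convergence $a^\delta_n\to a^0_n$ from Lemma \ref{lem: v delta}, gives pointwise convergence. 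The upgrade to uniform convergence on $K\times[0,T]$ follows from equicontinuity of $\{h^\delta_n\}$: either by a local CLT / Berry--Esseen bound on the mass function of $Z^\delta_{1-t}$ valid uniformly in $\delta$ and $t\le T$, or by a Dini-type argument exploiting the monotonicity of $y\mapsto h^\delta_n(y,t)$ on each side of $m^\delta_n$ recorded in Lemma \ref{lem: hn property}.

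For step (iii), the Skorokhod representation theorem furnishes a common probability space on which $Y^\delta\to Y^0$ almost surely in the Skorokhod topology on $\mathbb{D}[0,1]$. Continuity of the sample paths of $Y^0$ makes this convergence automatically locally uniform on $[0,1)$. Combining with the uniform convergence from step (ii) and the joint continuity of $p^0$ on $\Real\times[0,1)$ yields, for each $T<1$,
\[
\sup_{t\in[0,T]}\bigl|p^\delta(Y^\delta_t\pm\delta,t)-p^0(Y^0_t,t)\bigr|\longrightarrow 0\quad\text{a.s.},
\]
which delivers the asserted weak convergence on $\mathbb{D}[0,1)$. The step I expect to be the main obstacle is (ii): the pointwise CLT convergence is routine, but lifting it to a uniform statement on compacts requires either a quantitative local CLT bound on the lattice distribution of $Z^\delta_{1-t}$ or a non-trivial use of the monotonicity structure of $h^\delta_n$ to invoke Dini's theorem.
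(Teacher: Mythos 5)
Your argument is correct but follows a genuinely different route from the paper. The paper does not compose a uniform limit of $p^\delta$ with a Skorokhod coupling; instead it applies It\^o's formula together with \eqref{eq: p pde} to write $p^\delta(Y^\delta_\cdot,\cdot)$ as $p^\delta(0,0)$ plus stochastic integrals of the discrete derivative $\frac1\delta(p^\delta(\cdot\pm\delta,\cdot)-p^\delta(\cdot,\cdot))$ against the compensated jump martingales $\overline{Y}^{B,\delta}, \overline{Y}^{S,\delta}$, proves locally uniform convergence of that discrete derivative to $\partial_y p^0$ via the Skellam probability mass function and modified Bessel asymptotics, checks the P-UT property of the integrators, and invokes the Jacod--Shiryaev theorem on weak convergence of stochastic integrals, finally identifying the limit law through the heat equation for $p^0$. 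Your decomposition (invariance principle for $Y^\delta$, locally uniform convergence of $p^\delta(\cdot\pm\delta,\cdot)$ to $p^0$, composition after Skorokhod representation) is more elementary, avoids the stochastic-integral machinery entirely, and moreover delivers the convergence jointly with $Y^\delta\wc Y^0$ on a single coupling, which is convenient for the later joint use with the occupation times in Corollary \ref{cor: I conv}. The one place where your write-up is a sketch rather than a proof is step (ii), and you correctly flag it: the uniformity in $t\in[0,T]$ does not follow from Polya/Dini in $y$ alone. It does follow from a Berry--Esseen bound for the Skellam variable $Z^\delta_{1-t}$, whose normalized error is $O(\delta/\sqrt{1-T})$ uniformly over $t\le T$ and over the argument, combined with $a^\delta_n\to a^0_n$ and the Lipschitz continuity of $\Phi$; alternatively, your step (ii) is implied by the paper's stronger locally uniform convergence of the discrete derivative together with convergence at a single point. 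With that step filled in, the proof is complete.
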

\begin{proof}
 To simplify presentation, we will prove
 \begin{equation}\label{eq: pdY conv}
  p^\delta (Y^\delta_\cdot, \cdot) \wc p^0(Y^0_\cdot, \cdot) \quad \text{ as } \delta \downarrow 0.
 \end{equation}
 The assertions with $\pm \delta$ can be proved by replacing $Y^\delta$ by $Y^\delta \pm \delta$. First, applying It\^{o}'s formula and utilizing \eqref{eq: p pde} yield
 \begin{equation}\label{eq: pdY ito}
 \begin{split}
  p^\delta(Y^\delta_\cdot, \cdot) = p^\delta(0,0) &+ \int_0^\cdot \frac{1}{\delta} \pare{p^\delta(Y^\delta_{r-} + \delta, r) -p^\delta(Y^\delta_{r-}, r)} \, d\overline{Y}^{B, \delta}_r\\
  &+ \int_0^\cdot \frac{1}{\delta} \pare{p^\delta(Y^\delta_{r-} - \delta, r) -p^\delta(Y^\delta_{r-}, r)} \, d\overline{Y}^{S, \delta}_r,
 \end{split}
 \end{equation}
 where $\overline{Y}^{B,\delta}_\cdot = Y^{B,\delta}_\cdot - \delta \beta^\delta \cdot$ and $\overline{Y}^{S,\delta}_\cdot = Y^{S,\delta}_\cdot - \delta \beta^\delta \cdot$ are compensated jump processes.
 For $p^\delta(0,0)$ on the right hand side, the same argument in Lemma \ref{lem: v delta} yields $\lim_{\delta\downarrow 0}p^\delta(0,0) =  p^0(0,0)$. As for the other two stochastic integrals, we will show that they converge weakly to
 \[
  \frac{1}{\sqrt{2}} \int_0^\cdot \partial_y p^0(Y^0_r,r) dW^B_r \quad \text{ and } \quad -\frac{1}{\sqrt{2}} \int_0^\cdot \partial_y p^0(Y^0_r, r) dW^S_r, \quad \text{respectively},
 \]
 where $W^B$ and $W^S$ are two independent Brownian motion. These estimates then imply the right hand side of \eqref{eq: pdY ito} converges weakly to
 \[
  p^0(0,0) + \int_0^\cdot \partial_y p^0(Y^0_r, r) \, dW_r,
 \]
 where $W=W^B/\sqrt{2} - W^S/\sqrt{2}$ is another Brownian motion. Since $p^0$ satisfies $\partial_t p^0 + \frac12 \partial^2_{yy} p^0 =0$, the previous process has the same law as $p^0(Y^0_\cdot, \cdot)$. Therefore \eqref{eq: pdY conv} is confirmed.

 To prove the aforementioned convergence of stochastic integrals, let us first derive the convergence of $(p^\delta(\cdot+\delta, \cdot) - p^\delta(\cdot, \cdot))/\delta$ on $\Real \times [0,1)$. To this end, it follows from \eqref{eq: p func} that
 \begin{align*}
  &\frac{1}{\delta}  (p^\delta(y+\delta, t) - p^\delta(y,t))\\
  &= \frac{1}{\delta} \sum_{n=1}^N v_n \bra{\prob^{\delta, y+\delta}(Z_{1-t}^\delta \in [a^\delta_n, a^\delta_{n+1})) - \prob^{\delta, y}(Z^\delta_{1-t} \in [a^\delta_n, a^\delta_{n+1}))} \\
  &= \frac{1}{\delta} \sum_{n=1}^N v_n \bra{\prob^{\delta, y}(Z^{\delta}_{1-t}=a_n^\delta-\delta) - \prob^{\delta, y}(Z^{\delta}_{1-t} = a^\delta_{n+1} -\delta)}\\
  &= \frac{1}{\delta} \sum_{n=1}^N v_n \bra{\prob^{1, 0}\pare{Z^{1}_{1-t} =\frac{a^\delta_n-\delta-y}{\delta}} - \prob^{1,0}\pare{Z^1_{1-t}=\frac{a^\delta_{n+1}-\delta-y}{\delta}}}\\
  &= \sum_{n=1}^N v_n \bra{\frac1\delta e^{-\frac{1-t}{\delta^2}} I_{\left|\frac{a^\delta_n-\delta-y}{\delta}\right|}\pare{\frac{1-t}{\delta^2}} - \frac1\delta e^{-\frac{1-t}{\delta^2}} I_{\left|\frac{a^\delta_{n+1}-\delta-y}{\delta}\right|}\pare{\frac{1-t}{\delta^2}} }\\
  & \rightarrow \sum_{n=1}^N v_n \bra{\frac{1}{\sqrt{2\pi (1-t)}} \exp\pare{-\frac{(a_n^0-y)^2}{2(1-t)}} - \frac{1}{\sqrt{2\pi (1-t)}} \exp\pare{-\frac{(a_{n+1}^0-y)^2}{2(1-t)}}}\\
  &= \partial_y p^0(y,t), \quad \text{ as } \delta \downarrow 0.
 \end{align*}
Here $Z^1_{1-t}$ is the difference of two independent Poisson random variables with common parameter $(1-t)\beta^\delta= (1-t)(2\delta^2)^{-1}$ under $\prob^{1,0}$. Hence the fourth identity above follows from the probability distribution function of the \emph{Skellam distribution}: $\prob^{1,0}(Z_{1-t}^1=k)= e^{-2\mu} I_{|k|}(2\mu)$, where $I_{|k|}(\cdot)$ is the \emph{modified Bessel function of the second kind} and $\mu=(1-t)(2\delta^2)^{-1}$, cf. \cite{Skellam}. The convergence above is locally uniformly in $\Real\times[0,1)$ according to \cite[Theorem 2]{Athreya}. The last identity above follows from taking $y$ derivative to $p^0(y,t) = \sum_{n=1}^N \pare{\Phi\pare{\frac{a^0_{n+1}-y}{\sqrt{1-t}}} - \Phi\pare{\frac{a^0_n -y}{\sqrt{1-t}}}}$, cf. \eqref{eq: p0}. Combining the previous locally uniform convergence of $(p^\delta(\cdot+\delta, \cdot) - p^\delta(\cdot, \cdot))/\delta$ with the weak convergence $Y^\delta \wc Y^0$ in their natural filtration, we have from \cite[Chapter 1, Theorem 5.5]{Billingsley-conv}:
\[
 \frac{1}{\delta} \pare{p^\delta(Y^\delta_\cdot + \delta, \cdot) - p^\delta(Y^\delta_\cdot, \cdot)} \wc \partial_y p^0(Y^0_\cdot, \cdot) \quad \text{ on } \mathbb{D}[0,1)  \text{ as } \delta \downarrow 0.
\]

As for the integrators in \eqref{eq: pdY ito},  $\overline{Y}^{B, \delta} \wc W^B/\sqrt{2}$ and $\overline{Y}^{S, \delta} \wc W^S/\sqrt{2}$. Moreover, both $(\overline{Y}^{B, \delta})_{\delta>0}$ and $(\overline{Y}^{S, \delta})_{\delta>0}$ are \emph{predictable uniform tight} (P-UT), since $\langle \overline{Y}^{B, \delta}\rangle_t = \langle \overline{Y}^{S, \delta}\rangle_t =t/2$, for any $\delta>0$, cf. \cite[Chapter VI, Theorem 6.13 (iii)]{Jacod-Shiryaev}. Then
combining weak convergence of both integrands and integrators, we obtain from \cite[Chapter VI, Theorem 6.22]{Jacod-Shiryaev} that
\[
 \int_0^\cdot \frac{1}{\delta} (p^\delta(Y^\delta_{r-} +\delta, r) - p^\delta(Y^\delta_{r-}, r)) \,d \overline{Y}^{B,\delta}_r \wc \frac{1}{\sqrt{2}}\int_0^\cdot \partial_y p^0(Y^0_r, r) \, dW^B_r \quad \text{ on } \mathbb{D}[0,1) \text{ as } \delta \downarrow 0.
\]
A similar weak convergence holds for the other stochastic integral in \eqref{eq: pdY ito} as well. Therefore the claimed weak convergence of stochastic integrals on the right hand side of \eqref{eq: pdY ito} is confirmed.
\end{proof}

Having studied the weak convergence of integrands in $\uId$ and $\dId$, let us switch our attention to the integrators $\uLd$ and $\dLd$.

\begin{prop}\label{prop: local time}
 On the family of filtration $(\F^{Y, \delta}_t)_{t\in[0,1], \delta\geq 0}$, for any $n\in \{1, \cdots, N\}$,
 \[
  \uLd \wc \cL^{m_n} \quad \text{and} \quad \dLd \wc \cL^{m_n} \quad \text{ on } \mathbb{D}[0,1] \text{ as } \delta\downarrow 0.
 \]
\end{prop}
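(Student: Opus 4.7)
The strategy is to derive a discrete Tanaka-type identity that isolates $\uLd$ (respectively $\dLd$) as the compensator of a semimartingale functional of $Y^\delta$, then pass to the weak limit as $\delta \downarrow 0$ and match the limit with $\cL^{m_n}$ via the classical Tanaka formula for Brownian motion.

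First, I would apply It\^o's formula to $f(y) = |y - \umd|$ along $Y^\delta$ viewed in its natural filtration $\F^{Y,\delta}$, where by Proposition \ref{prop: Y property} ii) it is the difference of two independent Poisson processes with jump size $\delta$ and common intensity $\beta^\delta = (2\delta^2)^{-1}$. A direct computation shows that the discrete Laplacian $\Delta_\delta f(y) := f(y+\delta) - 2 f(y) + f(y-\delta)$ equals $2\delta$ at $y = \umd$ and vanishes at every other point of $\delta \Integer$. Consequently, the predictable compensator of $|Y^\delta_\cdot - \umd|$ is $\beta^\delta \int_0^\cdot \Delta_\delta f(Y^\delta_{s-})\, ds = 2\uLd_\cdot$, yielding the discrete Tanaka identity
\[
|Y^\delta_t - \umd| = |\umd| + M^{+,\delta}_t + 2\uLd_t,
\]
with $M^{+,\delta}$ a mean-zero $\F^{Y,\delta}$-martingale. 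An analogous computation with $f(y) = |y - \dmd|$ produces the companion identity $|Y^\delta_t - \dmd| = |\dmd| + M^{-,\delta}_t + 2\dLd_t$, isolating $\dLd$.

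Next I would pass to the limit. By the arguments in Lemma \ref{lem: v delta} (in particular \eqref{eq: a conv}), $\umd, \dmd \to m_n$, while $Y^\delta \wc Y^0$ on $\F^{Y,\delta}$ with $Y^0$ a standard Brownian motion. The continuous mapping theorem then gives $|Y^\delta_\cdot - \umd| \wc |Y^0_\cdot - m_n|$ on $\mathbb{D}[0,1]$ and $|\umd| \to |m_n|$. The key step is to identify the weak limit of $M^{+,\delta}$. Writing this martingale as a sum of two stochastic integrals against the compensated Poisson differences $\overline{Y}^{B,\delta}_\cdot = Y^{B,\delta}_\cdot - \delta\beta^\delta\cdot$ and $\overline{Y}^{S,\delta}_\cdot = Y^{S,\delta}_\cdot - \delta\beta^\delta\cdot$, with integrands $\tfrac{1}{\delta}\bigl(|Y^\delta_{s-}+\delta-\umd| - |Y^\delta_{s-}-\umd|\bigr)$ and $\tfrac{1}{\delta}\bigl(|Y^\delta_{s-}-\delta-\umd| - |Y^\delta_{s-}-\umd|\bigr)$ that are bounded by $1$ and equal $\pm 1$ outside the singleton $\{\umd\}$, one can invoke the predictable uniform tightness of $\overline{Y}^{B,\delta}, \overline{Y}^{S,\delta}$ (their predictable quadratic variations equal $t/2$, as already used in Proposition \ref{prop: integrand conv}) together with \cite[Chapter VI, Theorem 6.22]{Jacod-Shiryaev} to conclude
\[
M^{+,\delta}_\cdot \wc \int_0^\cdot \operatorname{sgn}(Y^0_s - m_n)\, dY^0_s.
\]
Combining the three weak convergences with Tanaka's classical formula $|Y^0_t - m_n| - |m_n| = \int_0^t \operatorname{sgn}(Y^0_s - m_n)\, dY^0_s + L^{m_n}_t$ yields $2\uLd \wc L^{m_n}$, which matches $\cL^{m_n}$ under the paper's normalization $\cL^{\delta, y}_\cdot = \tfrac{1}{2\delta} \int_0^\cdot \indic_{\{Y^\delta_{s-}=y\}}\, ds$. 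The argument for $\dLd$ is identical, using $f(y) = |y - \dmd|$.

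The hard part will be the identification of the martingale limit: the integrands $y \mapsto \operatorname{sgn}(y - \umd)$ are discontinuous precisely at the floating levels $\umd \to m_n$, so one cannot directly invoke the continuous-integrand limit theorems for stochastic integrals. This will be handled by observing that the set $\{s : Y^0_s = m_n\}$ has Lebesgue measure zero almost surely, so the discontinuity of $\operatorname{sgn}$ does not contribute to the limiting stochastic integral. Technically, almost-sure convergence of the integrands at Lebesgue-a.e.\ $s$, extracted on a Skorohod representation of $Y^\delta \wc Y^0$, combined with the predictable uniform tightness of $\overline{Y}^{B,\delta}, \overline{Y}^{S,\delta}$, supplies the hypotheses of \cite[Chapter VI, Theorem 6.22]{Jacod-Shiryaev}.
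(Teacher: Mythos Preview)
Your proposal is correct and follows essentially the same route as the paper: a discrete Tanaka identity for $|Y^\delta_\cdot - \umd|$ isolating $2\uLd$ as the finite-variation part, weak convergence of the left-hand side by continuous mapping, and weak convergence of the martingale part via the P-UT property of the compensated Poisson drivers together with \cite[Chapter VI, Theorem 6.22]{Jacod-Shiryaev}. The paper only differs cosmetically: it first shifts to level $0$ and treats the martingale convergence in a separate lemma, handling the discontinuity of $\operatorname{sgn}$ by invoking \cite[Chapter 1, Theorem 5.5]{Billingsley-conv} (locally uniform convergence of $f^\delta(y)=\delta^{-1}(|y+\delta|-|y|)$ to $\operatorname{sgn}$ off the null set $\{0\}$, combined with continuity of the law of $Y^0$) rather than via a Skorohod representation as you suggest.
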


\begin{proof}
 For simplicity of presentation, we will prove
 \begin{equation}\label{eq: lt conv}
  \cL^{\delta, 0} \wc \cL^0 \quad \text{ as } \delta \downarrow 0.
 \end{equation}
 Since $\lim_{\delta\downarrow 0} \umd=\lim_{\delta\downarrow 0} \dmd=m_n$ follows from \eqref{eq: a conv},  the statement of the proposition follows from replacing $Y^\delta$ by $Y^\delta -\umd$ (or by $Y^\delta - \dmd$) and $Y^0$ by $Y^0-m_n$ in the rest of the proof. To prove \eqref{eq: lt conv}, applying It\^{o}'s formula to $|Y^\delta_\cdot|$ yields
 \begin{equation}\label{eq: Y abs}
 \begin{split}
  |Y^\delta_\cdot| =& \sum_{r\leq \cdot} \pare{|Y^\delta_r| - |Y^\delta_{r-}|} \\
  =& \int_0^\cdot \pare{|Y^\delta_{r-} + \delta| - |Y^\delta_{r-}|} \,d(Y^{B, \delta}_r/\delta - \beta^\delta r)  + \int_0^\cdot \pare{|Y^\delta_{r-} - \delta| - |Y^\delta_{r-}|} \,d(Y^{S, \delta}_r/\delta - \beta^\delta r)\\
  &+ \int_0^\cdot \pare{|Y^\delta_{r-}+\delta| + |Y^\delta_{r-}-\delta| - 2|Y^\delta_{r-}|}\beta^\delta dr\\
  =& \int_0^\cdot \pare{|Y^\delta_{r-} + \delta| - |Y^\delta_{r-}|} \,d\overline{Y}^{B, \delta}_r/\delta  + \int_0^\cdot \pare{|Y^\delta_{r-} - \delta| - |Y^\delta_{r-}|} \,d\overline{Y}^{S, \delta}_r/\delta + \int_0^\cdot\frac{1}{\delta} \indic_{\{Y^\delta_{r-} =0\}} dr,
 \end{split}
 \end{equation}
 where the third identity follows from $|y+\delta|+|y-\delta| -2|y|=2\delta\, \indic_{\{y=0\}}$ for any $y\in \Real$.
 On the other hand, Tanaka formula for Brownian motion is
 \begin{equation}\label{eq: tanaka}
  |Y^0_\cdot| = \int_0^\cdot sgn(Y^0_r) \,dY^0_r + 2 \cL^0_\cdot,
 \end{equation}
 where $sgn(x) = 1$ when $x>0$ or $-1$ when $x\leq 0$.

 The convergence \eqref{eq: lt conv} is then confirmed by comparing both sides of \eqref{eq: Y abs} and \eqref{eq: tanaka}. To this end, since $Y^\delta \wc Y^0$ and the absolute value is a continuous function, then $|Y^\delta|\wc |Y^0|$ follows from \cite[Chapter 1, Theorem 5.1]{Billingsley-conv}. Then \eqref{eq: lt conv} is confirmed as soon as we prove the martingale term on the right hand side of \eqref{eq: Y abs} converges weakly to the martingale in \eqref{eq: tanaka}, which we prove in the next result.
\end{proof}

\begin{lem}\label{lem: mart conv}
 Let $M^\delta:= \int_0^\cdot \pare{|Y^\delta_{r-} + \delta| - |Y^\delta_{r-}|} \,d\overline{Y}^{B, \delta}_r/\delta  + \int_0^\cdot \pare{|Y^\delta_{r-} - \delta| - |Y^\delta_{r-}|} \,d\overline{Y}^{S, \delta}_r/\delta$ and $M^0:= \int_0^\cdot sgn(Y^0_r) \,dY^0_r$. Then $M^\delta \wc M^0$ on $\mathbb{D}[0,1]$ as $\delta \downarrow 0$.
\end{lem}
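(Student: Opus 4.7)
The plan is to show $M^\delta\wc W$ via the functional martingale central limit theorem, and then identify $M^0$ as the same Brownian motion $W$ in law via L\'evy's characterization.

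First observe that $M^\delta$ is an $\F^{Y,\delta}$-martingale: its integrands are bounded by $\delta$, and by Proposition~\ref{prop: Y property}(ii) both $\overline{Y}^{B,\delta}/\delta$ and $\overline{Y}^{S,\delta}/\delta$ are square-integrable compensated Poisson martingales in $\F^{Y,\delta}$. A direct case-check on $\delta\Integer$ yields the algebraic identity
\[
 \pare{|y+\delta|-|y|}^2 \;=\; \pare{|y-\delta|-|y|}^2 \;=\; \delta^2, \qquad y\in\delta\Integer.
\]
Combined with $\langle\overline{Y}^{B,\delta}/\delta\rangle_t=\langle\overline{Y}^{S,\delta}/\delta\rangle_t=\beta^\delta t=t/(2\delta^2)$ and the orthogonality of $\overline{Y}^{B,\delta}$ and $\overline{Y}^{S,\delta}$ (they never jump simultaneously, being independent Poisson processes), this identity gives $\langle M^\delta\rangle_t=t$ for every $\delta$. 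Moreover every jump of $M^\delta$ has absolute value at most $\delta$, so $\sup_{r\in[0,1]}|\Delta M^\delta_r|\to 0$, which trivially verifies the Lindeberg condition for the jumps.

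These two properties are precisely the hypotheses of the functional martingale central limit theorem (cf.\ \cite[Chapter VIII]{Jacod-Shiryaev}), yielding $M^\delta\wc W$ on $\mathbb{D}[0,1]$, where $W$ is a standard Brownian motion. On the other hand, $M^0=\int_0^\cdot sgn(Y^0_r)\,dY^0_r$ is a continuous local martingale with $\langle M^0\rangle_t=\int_0^t sgn^2(Y^0_r)\,dr=t$, so L\'evy's characterization identifies $M^0$ as a standard Brownian motion; both $W$ and $M^0$ therefore carry the Wiener law, and $M^\delta\wc M^0$ follows. The key step is the elementary identity above, which turns the predictable quadratic variation into the deterministic function $t$; this side-steps entirely the discontinuity of $sgn$ at $0$ that would otherwise obstruct a direct convergence-of-integrands argument, and is the reason the proof reduces to a one-line application of the CLT rather than requiring a mollification/three-epsilon scheme.
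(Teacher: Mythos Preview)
Your proof is correct and takes a genuinely different route from the paper. The paper proceeds by showing convergence of the integrands $\frac{1}{\delta}(|Y^\delta_{r-}\pm\delta|-|Y^\delta_{r-}|)\wc \pm sgn(Y^0_r)$ (handling the discontinuity of $sgn$ at $0$ via Billingsley's continuous-mapping theorem and the continuity of the law of $Y^0$), combines this with $\overline{Y}^{B,\delta}\wc W^B/\sqrt{2}$, $\overline{Y}^{S,\delta}\wc W^S/\sqrt{2}$ and the P-UT property, and then invokes \cite[VI.6.22]{Jacod-Shiryaev} on each stochastic integral separately. Your argument instead bypasses the integrand convergence entirely: the identity $(|y+\delta|-|y|)^2=(|y-\delta|-|y|)^2=\delta^2$ on $\delta\Integer$ forces $\langle M^\delta\rangle_t=t$ exactly, and together with $|\Delta M^\delta|\leq\delta$ this feeds directly into the martingale CLT. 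The gain is that you never need the weak convergence $Y^\delta\wc Y^0$, nor the P-UT machinery, nor any care about the zero set of $Y^0$; the cost is only that you must recognise the algebraic identity. Both approaches end by noting that the limit and $M^0$ share the Wiener law, so they ultimately prove the same statement, but yours is the more economical route for this particular lemma.
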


\begin{proof}
 Define $f^\delta(y):= \frac{1}{\delta}(|y+\delta|- |y|)$ for $y\in \Real$ and observe
 \[
   f^\delta(y) = \left\{\begin{array}{ll}1 & y\geq 0\\ 2y/\delta+1 & -\delta<y<0 \\ -1 & y\leq -\delta \end{array}\right..
 \]
 It is clear that $f^\delta$ converges to $sgn(\cdot)$ locally uniformly on $\Real\setminus\{0\}$. On the other hand, $Y^\delta \wc Y^0$ and the law of $Y^0$ is continuous. It then follows from \cite[Chapter 1, Theorem 5.5]{Billingsley-conv} that $f^\delta({Y^\delta}) \wc sgn(Y^0)$. As for the integrators $(\overline{Y}^{B,\delta})_{\delta > 0}$, as we have seen in the proof of Proposition \ref{prop: integrand conv}, they converge weakly to $W^B/\sqrt{2}$ and are P-UT. Then \cite[Chapter VI, Theorem 6.22]{Jacod-Shiryaev} implies
 \[
 \int_0^\cdot \pare{|Y^\delta_{r-} + \delta| - |Y^\delta_{r-}|} \,d\overline{Y}^{B, \delta}_r/\delta \wc \frac{1}{\sqrt{2}} \int_0^\cdot sgn(Y^0_r) \,dW^B_r.
 \]
 Similar argument yields
 \[
  \int_0^\cdot \pare{|Y^\delta_{r-} - \delta| - |Y^\delta_{r-}|} \,d\overline{Y}^{S, \delta}_r/\delta \wc -\frac{1}{\sqrt{2}} \int_0^\cdot sgn(Y^0_r) \,dW^S_r.
 \]
 Here $W^B$ and $W^S$ are independent Brownian motion. Defining $W=W^B/\sqrt{2} - W^S/\sqrt{2}$, we obtain from the previous two convergence that
 \[
  M^\delta \wc \int_0^\cdot sgn(Y^0_r) \,dW_r \quad \text{which has the same law as} \quad M^0.
 \]
\end{proof}

Propositions \ref{prop: integrand conv} and \ref{prop: local time} combined yields the weak convergence of $(\uId)_{\delta>0}$ and $(\dId)_{\delta>0}$. Moreover the sequence of local time in Proposition \ref{prop: local time} also converge in expectation.

\begin{cor}\label{cor: I conv}
 On the family of filtration $(\F^{Y,\delta}_t)_{t\in[0,1], \delta \geq 0}$, for any $n\in\{1, \cdots, N\}$,
 \[
  \uId \text{ and } \dId \wc I^{0,n} \quad \text{ on } \mathbb{D}[0,1) \text{ as } \delta \downarrow 0.
 \]
\end{cor}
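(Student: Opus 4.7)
The plan is to derive this corollary by combining the weak convergence of integrands from Proposition \ref{prop: integrand conv} with the weak convergence of integrators from Proposition \ref{prop: local time}, via the stability theorem for stochastic integrals (Jacod-Shiryaev VI.6.22). That theorem requires joint weak convergence of the (integrand, integrator) pair together with the predictably uniformly tight (P-UT) property of the integrators.

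First I would verify the P-UT property of $(\uLd)_{\delta>0}$ and $(\dLd)_{\delta>0}$. Since both are nondecreasing (hence of finite variation with zero local-martingale part in their canonical decomposition), the P-UT condition reduces to tightness of the terminal values $\uLd_1$ and $\dLd_1$, which is immediate from Proposition \ref{prop: local time} and the a.s. finiteness of $\cL^{m_n}_1$. Second I would upgrade the marginal weak convergences in Propositions \ref{prop: integrand conv} and \ref{prop: local time} to a joint weak convergence of the pair $(v_n - p^\delta(Y^\delta_{\cdot-}\pm\delta, \cdot), \uLd)$ (and similarly for $\dLd$) on $\mathbb{D}[0,1) \times \mathbb{D}[0,1]$. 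Both components are measurable functionals of the common driver $Y^\delta$, and $Y^\delta \wc Y^0$ on $\mathbb{D}[0,1]$. Tracing through the proofs of the two propositions, which express both limits via It\^{o}/Tanaka formulas driven by the same compensated jump processes $\overline{Y}^{B,\delta}, \overline{Y}^{S,\delta}$, the joint weak convergence follows, for example through a Skorokhod-representation argument on a common probability space coupled with the locally uniform convergence of $(p^\delta(\cdot+\delta,\cdot) - p^\delta(\cdot,\cdot))/\delta$ to $\partial_y p^0$ already established.

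Applying Theorem VI.6.22 of Jacod-Shiryaev then yields $\uId \wc I^{0,n}$ on $\mathbb{D}[0,1)$, and the symmetric argument handles $\dId$. The main obstacle is precisely the joint-convergence step: the two preceding propositions only furnish marginal statements, so one must exploit the fact that both integrand and integrator are functionals of the same process $Y^\delta$ in order to couple the two limits through the single weak convergence $Y^\delta \wc Y^0$. A secondary care is that the integrand convergence is only on $\mathbb{D}[0,1)$ while the integrator convergence is on $\mathbb{D}[0,1]$, but since the stability theorem is applied on compact subintervals $[0,1-\epsilon]$ and the conclusion is stated on $\mathbb{D}[0,1)$, this mismatch causes no issue.
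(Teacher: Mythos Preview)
Your proposal is correct and follows essentially the same approach as the paper: combine Propositions~\ref{prop: integrand conv} and~\ref{prop: local time} via \cite[Chapter VI, Theorem 6.22]{Jacod-Shiryaev}, after checking the P-UT property of the integrators $(\uLd)_{\delta>0}$ and $(\dLd)_{\delta>0}$ through tightness of their terminal values (the paper cites \cite[Chapter VI, 6.6]{Jacod-Shiryaev} for the reduction from P-UT to tightness of the variation at time~$1$). You are in fact more explicit than the paper about the joint-convergence hypothesis of Theorem VI.6.22, which the paper leaves implicit; your observation that both integrand and integrator are functionals of the same driver $Y^\delta$ is the right way to close that gap.
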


\begin{proof}
 The statement follows from combining Propositions \ref{prop: integrand conv} and \ref{prop: local time}, and appealing to \cite[Chapter VI, Theorem 6.22]{Jacod-Shiryaev}. In order to apply the previous result, we need to show that both $(\uLd)_{\delta >0}$ and $(\dLd)_{\delta >0}$ are P-UT. This property will be verified for $(\uLd)_{\delta >0}$. The same argument works for $(\dLd)_{\delta >0}$ as well. To this end, since $\uLd$ is a nondecreasing process, $(\uLd)_{\delta >0}$ is P-UT as soon as $(Var(\uLd)_1)_{\delta >0}$ is tight, where $Var(X)$ is the variation of the process $X$, cf. \cite[Chapter VI, 6.6]{Jacod-Shiryaev}. Note $Var(\uLd)_1=\uLd_1$, since $\uLd$ is nondecreasing. Then the tightness of $(Var(\uLd)_1)_{\delta >0}$ is implied by Proposition \ref{prop: local time}.
\end{proof}

\begin{cor}\label{cor: local time exp}
For any  $n\in\{1, \cdots, N\}$ and $t\in[0,1]$,
 \[
  \lim_{\delta \downarrow 0} \expec^{\delta, 0} \bra{\uLd_t} = \lim_{\delta \downarrow 0} \expec^{\delta, 0} \bra{\dLd_t} = \expec^{0,0}\bra{\cL^{m_n}_t}.
 \]
\end{cor}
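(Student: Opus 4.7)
The plan is to upgrade the weak convergences $\uLd \wc \cL^{m_n}$ and $\dLd \wc \cL^{m_n}$ from Proposition \ref{prop: local time} to convergence of expectations by establishing uniform integrability of the marginal distributions at the fixed time $t$. Since Brownian local time at level $m_n$ has continuous paths, $t$ is almost surely a continuity point of $\cL^{m_n}$; hence the continuous mapping theorem yields $\uLd_t \Longrightarrow \cL^{m_n}_t$ and $\dLd_t \Longrightarrow \cL^{m_n}_t$ in law as $\delta \downarrow 0$. It then suffices to prove the uniform second-moment bound $\sup_{\delta > 0} \expec^{\delta, 0}[(\uLd_t)^2] < \infty$ (and the analogous bound for $\dLd$), since together with convergence in distribution this will give $L^1$ convergence of non-negative random variables via Vitali's theorem.

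For the moment bound, Fubini together with the Markov property and spatial homogeneity of $Y^\delta$ (which under $\prob^{\delta,0}$ is the difference of two independent Poisson processes with intensity $\beta^\delta$, cf. Proposition \ref{prop: Y property} ii)) give
\[
 \expec^{\delta, 0}\bra{(\uLd_t)^2} = \frac{1}{2\delta^2} \int_0^t\!\!\int_0^{r_2} \prob^{\delta, 0}(Y^\delta_{r_1} = \umd)\,\prob^{\delta, 0}(Y^\delta_{r_2-r_1} = 0)\, dr_1\, dr_2.
\]
Using the Skellam representation $\prob^{\delta, 0}(Y^\delta_r = k\delta) = e^{-r/\delta^2} I_{|k|}(r/\delta^2)$ already exploited in the proof of Proposition \ref{prop: integrand conv}, combined with the classical uniform Bessel estimate $e^{-x}I_{|k|}(x) \leq e^{-x} I_0(x) \leq C/\sqrt{1+x}$ valid for all $x \geq 0$ and $k \in \Integer$, one obtains the local-CLT bound $\prob^{\delta, 0}(Y^\delta_r = k\delta) \leq C\delta/\sqrt{r+\delta^2}$, uniformly in $k$ and $\delta$. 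Substituting this into the double integral and using $(r_1+\delta^2)(r_2-r_1+\delta^2) \geq r_1(r_2-r_1)$, the integrand is dominated by the Beta kernel $1/\sqrt{r_1(r_2-r_1)}$, whose integral over $\{0 < r_1 < r_2 < t\}$ equals $\pi t/2$, yielding $\expec^{\delta, 0}[(\uLd_t)^2] \leq \pi C^2 t/2$ independently of $\delta$. The same estimate applies verbatim to $\dLd_t$.

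Granting uniform integrability, Vitali's theorem converts the convergence in distribution into convergence of expectations, and the common limit is identified as $\expec^{0,0}[\cL^{m_n}_t]$ (alternatively, one can verify the limit directly by applying the pointwise Skellam asymptotic $(2\delta)^{-1} e^{-r/\delta^2} I_{\lfloor m_n/\delta\rfloor}(r/\delta^2) \to (2\sqrt{2\pi r})^{-1} e^{-m_n^2/(2r)}$ under dominated convergence with the same bound as above, and recognizing the resulting integral as $\expec^{0,0}[\cL^{m_n}_t]$ via the density of Brownian local time). The main obstacle is the uniform local-CLT estimate for the Skellam distribution: it is essentially the uniform-in-lattice-point version of the pointwise limit used in Proposition \ref{prop: integrand conv}, and can be obtained from either sharp Bessel asymptotics or a direct local central limit theorem for the compensated Poisson difference.
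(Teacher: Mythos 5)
Your proof is correct, but it takes a genuinely different route from the paper's. The paper recycles the discrete Tanaka formula \eqref{eq: Y abs} (with $Y^\delta$ shifted by $\umd$ or $\dmd$): taking expectations kills the martingale part and gives the exact identity $2\expec^{\delta,0}[\uLd_t]=\expec^{\delta,0}[|Y^\delta_t-\umd|]-|\umd|$, so the whole problem reduces to uniform integrability of $(|Y^\delta_t|)_{\delta>0}$, which is immediate from $\expec^{\delta,0}[(Y^\delta_t)^2]=t$; the limit is then identified through the continuous Tanaka formula \eqref{eq: tanaka}. You instead work directly on the occupation time itself, using Fubini, the independent stationary increments of $Y^\delta$, and a uniform local-CLT bound $e^{-x}I_{|k|}(x)\leq e^{-x}I_0(x)\leq C/\sqrt{1+x}$ for the Skellam pmf to get $\sup_{\delta}\expec^{\delta,0}[(\uLd_t)^2]\leq \pi C^2 t/2$, and then pass to the limit via convergence in distribution of the marginal plus uniform integrability. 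Your computation checks out (the factor $1/(2\delta^2)$, the Markov factorization, and the Beta-kernel bound are all right), and the extraction of the marginal $\uLd_t\Rightarrow \cL^{m_n}_t$ from the functional convergence of Proposition \ref{prop: local time} is legitimate since $\cL^{m_n}$ is a.s.\ continuous. What your approach buys is a uniform $L^2$ bound on the scaled occupation times, which is stronger than needed here and independent of the Tanaka identity; what it costs is the extra uniform Bessel estimate, whereas the paper's argument needs only the trivial second-moment bound on $Y^\delta_t$. One presentational remark: Vitali's theorem as such applies to sequences converging in probability on a common space, so to be precise you should either pass through a Skorokhod representation or cite the standard statement that convergence in distribution plus uniform integrability implies convergence of means (the reference the paper itself uses, \cite[Appendix, Proposition 2.3]{Ethier-Kurtz}).
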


\begin{proof}
 For simplicity of presentation, we will prove $\lim_{\delta \downarrow 0} \expec^{\delta, 0}[\cL^{\delta, 0}_t]=\expec^{0,0}[\cL^0_t]$. Then the statement of the corollary follows from replacing $Y^\delta_t$ by $Y^{\delta}_t - \umd$ or $Y^\delta_t -\dmd$ in the rest of the proof. Since the stochastic integrals in \eqref{eq: Y abs} are $\prob^{\delta, 0}$-martingales,
 \[
  2\expec^{\delta, 0}[\cL^{\delta, 0}_t] = \expec^{\delta, 0}[|Y^\delta_t|].
 \]
 Since $\expec[(Y^\delta_t)^2] = t$ for any $\delta>0$, $(|Y^\delta_t|; \prob^{\delta, 0})_{\delta >0}$ is uniformly integrable. It then follows from \cite[Appendix, Proposition 2.3]{Ethier-Kurtz} and $Law(|Y^\delta_t|) \Longrightarrow Law(|Y^0_t|)$ that $\lim_{\delta\downarrow 0} \expec^{\delta, 0}[|Y^\delta_t|] = \expec^{0,0}[|Y^0_t|]$. Therefore the claim follows since $\expec^{0,0}[|Y^0_t|]=2\expec^{0,0}[\cL^0_t]$ cf. \eqref{eq: tanaka}.
\end{proof}

Collecting previous results, the following result confirms \eqref{eq: L conv}.

\begin{prop}\label{prop: L conv}
 For the strategies $(X^{B, \delta}, X^{S, \delta}; \F^{I, \delta})_{\delta>0}$ constructed in Section \ref{sec: point process bridge},
 \[
  \lim_{\delta\downarrow 0} L^\delta(v_n, 0, 0) =0, \quad n\in \{1, \cdots, N\}.
 \]
\end{prop}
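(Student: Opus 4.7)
The strategy is to show that both conditional expectations $\expec^{\delta,0}[\uId_1 \mid \tv^\delta = v_n]$ and $\expec^{\delta,0}[\dId_1 \mid \tv^\delta = v_n]$ converge to the same limit as $\delta \downarrow 0$, so that their difference $L^\delta$ vanishes. Since $\prob(\tv^\delta = v_n) \to p_n > 0$ by Lemma \ref{lem: v delta}, it suffices to prove $\expec^{\delta,0}\bra{(\uId_1 - \dId_1)\,\indic_{A_n^\delta}} \to 0$, where $A_n^\delta := \{Y^\delta_1 \in [a_n^\delta, a_{n+1}^\delta)\}$.

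For any fixed $\epsilon > 0$, I would split $\uId_1 = \uId_{1-\epsilon} + (\uId_1 - \uId_{1-\epsilon})$ (and analogously for $\dId$). The bounds $v_1 \leq p^\delta \leq v_N$ give the pointwise estimate $|\uId_1 - \uId_{1-\epsilon}| \leq (v_N - v_1)(\uLd_1 - \uLd_{1-\epsilon})$, and Corollary \ref{cor: local time exp} yields
\[
\limsup_{\delta \downarrow 0}\, \expec^{\delta,0}\bra{\uLd_1 - \uLd_{1-\epsilon}} = \expec^{0,0}\bra{\cL^{m_n}_1 - \cL^{m_n}_{1-\epsilon}},
\]
which tends to $0$ as $\epsilon \downarrow 0$ by continuity of Brownian local time; this controls the near-terminal contribution uniformly in $\delta$. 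For the bulk contribution at $1-\epsilon$, Corollary \ref{cor: I conv} gives $\uId_{1-\epsilon} \wc I^{0,n}_{1-\epsilon}$ and likewise for $\dId_{1-\epsilon}$. I would upgrade to joint weak convergence with the indicator by invoking $Y^\delta \wc Y^0$ on $\mathbb{D}[0,1]$, the continuity of the law of $Y^0_1$, and the convergence $a_n^\delta \to a_n^0$ from \eqref{eq: a conv}, obtaining $(\uId_{1-\epsilon}, \indic_{A_n^\delta}) \wc (I^{0,n}_{1-\epsilon}, \indic_{A_n^0})$ with the same joint limit in the $\uId$ and $\dId$ cases.

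Combining these two steps, for any $\eta > 0$ one first chooses $\epsilon > 0$ to make the boundary contribution at most $\eta/2$ uniformly in small $\delta$, and then sends $\delta \downarrow 0$ so that the bulk expectations for $\uId$ and $\dId$ both converge to $\expec^{0,0}\bra{I^{0,n}_{1-\epsilon}\,\indic_{A_n^0}}$. Dividing by $\prob(A_n^\delta) \to p_n > 0$ delivers $L^\delta \to 0$. The main technical obstacle is upgrading the weak convergence of the bulk term to convergence of the associated expectations, which requires uniform integrability of $\uId_{1-\epsilon}$ and $\dId_{1-\epsilon}$. Since $|\uId_{1-\epsilon}| \leq (v_N - v_1)\uLd_{1-\epsilon}$, it suffices to bound $\expec^{\delta,0}\bra{(\uLd_{1-\epsilon})^2}$ uniformly in $\delta$; this can be achieved by direct moment computation using the local central limit theorem for $Z^\delta$, or by adapting the Tanaka representation already used in the proof of Proposition \ref{prop: local time} to yield $L^2$ bounds on the scaled Poisson occupation time.
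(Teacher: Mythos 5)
Your proposal is correct and follows essentially the same route as the paper: the same splitting of $\uId_1$ and $\dId_1$ at time $1-\epsilon$, the same domination of the tail by $C(\uLd_1-\uLd_{1-\epsilon})$ combined with Corollary \ref{cor: local time exp}, and the same use of Corollary \ref{cor: I conv} together with $\{\tv^\delta=v_n\}=A_n^\delta$ for the bulk term. The one place you diverge is the step you flag as the main obstacle: passing from weak convergence of $\uId_{1-\epsilon}\,\indic_{\{\tv^\delta=v_n\}}$ to convergence of its expectation. You propose to establish uniform integrability via a uniform-in-$\delta$ bound on $\expec^{\delta,0}[(\uLd_{1-\epsilon})^2]$, which is feasible (a uniform local CLT bound $\prob^\delta(Y^\delta_t=y)\leq C\delta/\sqrt{t}$ and a double time-integral give it) but is extra work not needed in the paper: the paper instead invokes the generalized dominated convergence theorem for weakly convergent sequences (\cite[Appendix, Theorem 1.2]{Ethier-Kurtz}), whose hypotheses are exactly the domination $|\uId_{1-\epsilon}|\,\indic_{\{\tv^\delta=v_n\}}\leq C\,\uLd_{1-\epsilon}$ together with the convergence of the first moments $\expec^{\delta,0}[\uLd_{1-\epsilon}]$ already supplied by Corollary \ref{cor: local time exp}, so no second moments are required. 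Either justification closes the argument; the paper's is the more economical one given the lemmas already in place.
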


\begin{proof}
 Fix any $\epsilon\in(0,1)$. Corollary \ref{cor: I conv} implies that $Law(\uId_{1-\epsilon}; \F^{Y,\delta}) \Longrightarrow Law(I^{0,n}_{1-\epsilon}; \F^0)$. Recall $Law(\tv^\delta) \Longrightarrow Law(\tv)$ from Lemma \ref{lem: v delta}. It then follows
 \begin{equation*}
  Law\pare{\uId_{1-\epsilon} \,\indic_{\{\tv^\delta = v_n\}}; \F^{Y,\delta}} \Longrightarrow Law\pare{I^{0,n}_{1-\epsilon} \,\indic_{\{\tv=v_n\}}; \F^0}.
 \end{equation*}
 On the other hand, since $N$ is finite, $p^\delta$ is bounded uniformly in $\delta$. Then there exists constant $C$ such that $|\uId_{1-\epsilon}| \,\indic_{\{\tv^\delta = v_n\}} \leq C \uLd_{1-\epsilon}$, where the expectation of the upper bound converges, cf. Corollary \ref{cor: local time exp}. Therefore appealing to \cite[Appendix Theorem 1.2]{Ethier-Kurtz} and utilizing $\lim_{\delta\downarrow 0}\prob^{\delta}(\tv^\delta=v_n) = \prob^{0}(\tv=v_n)$ from Lemma \ref{lem: v delta}, we obtain
 \begin{equation}\label{eq: I 1-eps}
  \expec^{\delta, 0}\bra{\uId_{1-\epsilon} \,|\, \tv^\delta = v_n} = \frac{\expec^{\delta, 0} \bra{\uId_{1-\epsilon} \, \indic_{\{\tv^\delta = v_n\}}}}{\prob^\delta(\tv^\delta = v_n)} \rightarrow \frac{\expec^{0, 0} \bra{I^{0,n}_{1-\epsilon} \, \indic_{\{\tv = v_n\}}}}{\prob^0(\tv = v_n)} = \expec^{0,0}\bra{I^{0,n}_{1-\epsilon} \,|\, \tv=v_n}, \quad \text{ as } \delta \downarrow 0.
 \end{equation}

 On the other hand, since $\lim_{\delta \downarrow 0} \prob^\delta(\tv^\delta) = \prob^0(\tv=v_n)>0$, there exists a constant $C$ such that
 \[
  \expec^{\delta, 0}\bra{\left.|\uId_1- \uId_{1-\epsilon}|\right|\, \tv^\delta=v_n} \leq C \,\expec^{\delta, 0} \bra{\uLd_1 - \uLd_{1-\epsilon}} \rightarrow C\, \expec^{0,0} \bra{\cL^{m_n}_1 - \cL^{m_n}_{1-\epsilon}}, \quad \text{ as } \delta \downarrow 0,
 \]
 where the convergence follows from applying Corollary \ref{cor: local time exp} twice. For the difference of Brownian local time, L\'{e}vy's result (cf. \cite[Chapter 3, Theorem 6.17]{Karatzas-Shreve-BM}) yields
 \[
  \expec^{0,0} \bra{\cL^{m_n}_1 - \cL^{m_n}_{1-\epsilon}} = \expec^{0, -m_n}\bra{\cL^0_1 - \cL^0_{1-\epsilon}} = \frac12 \expec^{0, -m_n}\bra{\sup_{r\leq 1} Y^0_r - \sup_{r\leq 1-\epsilon} Y^0_r} = \sqrt{\frac{2}{\pi}}(1-\sqrt{1-\epsilon}),
 \]
 where $Y^0$ is a $\prob^0$-Brownian motion and $\expec^{0,y}[\sup_{r\leq t} Y^0_r] = \sqrt{2t/\pi} +y$ is utilized to obtain the third identity. Now the previous two estimates combined yield
 \begin{equation}\label{eq: I 1-eps 1}
  \limsup_{\delta \downarrow 0}\expec^{\delta, 0}\bra{\left.|\uId_1- \uId_{1-\epsilon}|\right|\, \tv^\delta=v_n} \leq C(1-\sqrt{1-\epsilon}), \quad \text{ for another constant } C.
 \end{equation}

 Estimates in \eqref{eq: I 1-eps} and \eqref{eq: I 1-eps 1} also hold when $\uId$ is replaced by $\dId$. These estimates then yield
 \[
 \begin{split}
  &\expec^{\delta, 0}\bra{\uId_1 -\dId_1 \,|\, \tv^\delta =v_n} \\
  &\leq \expec^{\delta, 0} \bra{\uId_{1-\epsilon} - \dId_{1-\epsilon} \,|\, \tv^\delta = v_n} + \expec^{\delta, 0} \bra{\left.|\uId_1 - \uId_{1-\epsilon}|\,\right|\, \tv^\delta=v_n} + \expec^{\delta, 0} \bra{\left.|\dId_1 - \dId_{1-\epsilon}|\,\right|\, \tv^\delta=v_n}.
 \end{split}
 \]
 Sending $\delta \downarrow 0$ in the previous inequality, the first term on the right side vanishes in the limit, because both conditional expectations converge to the same limit, the limit superior of both second and third terms are less than $C(1-\sqrt{1-\epsilon})$. Now since $\epsilon$ is arbitrarily choose, sending $\epsilon \rightarrow 1$ yields $\limsup_{\delta\downarrow 0} \expec^{\delta, 0}\bra{\uId_1 -\dId_1 \,|\, \tv^\delta =v_n} \leq 0$. Similar argument leads to $\liminf_{\delta\downarrow 0} \expec^{\delta, 0}\bra{\uId_1 -\dId_1 \,|\, \tv^\delta =v_n} \geq 0$, which concludes the proof.
\end{proof}

Finally the proof of Theorem \ref{thm: main} is concluded.

\begin{proof}[Proof of Theorem \ref{thm: main}]
 It remains to verify Definition \ref{def: Asy GM eq} iii).
 Fix $v_n$ and $(y,t)=(0,0)$ throughout the proof. We have seen from Proposition \ref{prop: U>V} that $V^\delta\leq U^{S,\delta}$. On the other hand, Proposition \ref{prop: ep<U-L} yields $\mathcal{J}(X^{B,\delta}, X^{S, \delta}) = U^\delta - L^\delta$. Therefore
 \[
  \sup_{(X^B, X^S) \text{ admissible}} \mathcal{J}^\delta(X^B, X^S) - \mathcal{J}^\delta(X^{B, \delta}, X^{S, \delta}) \leq U^{S,\delta} - U^\delta + L^\delta.
 \]
 Since $\lim_{\delta\downarrow 0} L^\delta =0$ is proved in Proposition \ref{prop: L conv}, it suffices to show $\lim_{\delta \downarrow 0} U^{S, \delta} - U^\delta =0$.
 To this end, from the definition of $U^{S, \delta}$,
 \begin{equation}\label{eq: US-S}
  U^{S, \delta}(0,0) - U^\delta(0,0)= (U^\delta(-\delta,0) - U^\delta(0,0))\,\indic_{\{0\leq \dmd\}}=\delta(v_n - p^\delta(0,0)) \indic_{\{0\leq \dmd\}}.
 \end{equation}
 The second identity above follows from \eqref{eq: U eq <dm} which reads $U^\delta(y,t) -U^\delta(y-1, t) + \delta(v_n -p^\delta(y,t)) =0$ for $y\leq \dmd$ when the order size is $\delta$. Therefore $\lim_{\delta\downarrow 0} U^{S, \delta} - U^\delta =0$ is confirmed after sending $\delta\downarrow 0$ in \eqref{eq: US-S}.
\end{proof}

\appendix
\section{Viscosity solutions}\label{app: vis soln}

Proposition \ref{prop: vis soln} will be proved in this section. To simplify notation, $\delta=1$ and $\tilde{v} =v_n$ are fixed throughout this section. First let us recall the definition of (discontinuous) viscosity solution to \eqref{eq: HJB}. Given a locally bounded function\footnote{Since the state space $\Integer$ is discrete, $v$ is locally bounded if $v(y,\cdot)$ is bounded in any bounded neighborhood of $t$ and any fixed $y\in \Integer$.} $v: \Integer \times [0,1]\rightarrow \Real$, its \emph{upper-semicontinuous envelope} $v^*$ and \emph{lower-semicontinuous envelope} $v_*$ are defined as
\begin{equation}\label{eq: envelope}
 v^*(y,t) := \limsup_{t'\rightarrow t} v(y,t'), \quad v_*(y,t) := \liminf_{t'\rightarrow t} v(y, t'), \quad (y,t) \in \Integer \times [0,1].
\end{equation}
\begin{defn}\label{def: vis soln}
 Let $v: \Integer \times [0,1] \rightarrow \Real$ be locally bounded.
 \begin{enumerate}[i)]
  \item $v$ is a \emph{(discontinuous) viscosity subsolution} of \eqref{eq: HJB} if
      \[
       -\varphi_t (y,t) - H(y, t, v^*) \leq 0,
      \]
      for all $y\in \Integer$, $t\in [0,1)$, and any function $\varphi: \Integer \times [0,1] \rightarrow \Real$ continuously differentiable in the second variable such that $(y,t)$ is a maximum point of $v^*-\varphi$.
  \item $v$ is a \emph{(discontinuous) viscosity supersolution} of \eqref{eq: HJB} if
      \[
       -\varphi_t (y,t) - H(y, t, v_*) \geq 0,
      \]
      for all $y\in \Integer$, $t\in [0,1)$, and any function $\varphi: \Integer \times [0,1] \rightarrow \Real$ continuously differentiable in the second variable such that $(y,t)$ is a minimum point of $v_*-\varphi$.
  \item We say that $v$ is a \emph{(discontinuous) viscosity solution} of \eqref{eq: HJB} if it is both subsolution and supersolution.
 \end{enumerate}
\end{defn}

For the insider's optimization problem, let us recall the \emph{dynamic programming principle}  (cf. e.g. \cite[Remark 3.3.3]{Pham}). Given an admissible strategy $(X^B, X^S)$, any $[t,1]-$valued stopping time $\tau$, and the fundamental value $v_n$, denote the associated profit by
\begin{align*}
 \mathcal{I}^{n}_{t, \tau} := &  \int_t^\tau (v_n-p(Y_{r-}+1,r))dX_r^{B,B} + \int_t^\tau (v_n - p(Y_{r-} + 2, r)) dX^{B,T}_r +\int_t^\tau(v_n-p(Y_{r-},r))dX_r^{B,S}\\
      &-\int_t^\tau (v_n-p(Y_{r-}-1,r))dX_r^{S,S} -\int_t^\tau (v_n - p(Y_{r-}-2, r)) dX^{S,T}_r - \int_t^\tau(v_n-p(Y_{r-},r))dX_r^{S,B},
\end{align*}
where $Y= Z+X^B-X^S$.
Then the dynamic programming principle reads:

\begin{enumerate}
 \item[DPP i)]\label{DPP i} For any admissible strategy $(X^B, X^S)$ and any $[t,1]$-valued stopping time $\tau$,
     \[
     \begin{split}
      V(y,t) \geq \expec^{y,t} \Huge[V(\tau, Y_\tau)+ \mathcal{I}^n_{t, \tau}\Huge].
     \end{split}
     \]
 \item[DPP ii)]\label{DPP ii} For any $\epsilon>0$, there exists an admissible strategy $(X^B, X^S)$ such that for all $[t, 1]$-valued stopping time $\tau$,
     \[
     \begin{split}
      V(y,t) -\epsilon \leq \expec^{y,t}\Huge[V(\tau, Y_\tau) +\mathcal{I}^n_{t, \tau} \Huge].
     \end{split}
     \]
\end{enumerate}

The viscosity solution property of the value function $V$ follows from the dynamic programming principle and standard arguments in viscosity solutions, (see e.g. \cite[Propositions 4.3.1 and 4.3.2]{Pham}\footnote{Therein the stopping time $\tau_m$ can be chosen as the first jump time of $Y$ where $Y_{t_m}=y$ for a sequence $(t_m)_m \rightarrow \overline{t}$}.) Therefore Proposition \ref{prop: vis soln} i) is verified.

\begin{rem}\label{rem: weak DPP}
 The proof of DPP ii) utilizes the measurable selection theorem. To avoid this technical result, one could employ the \emph{weak} dynamic programming principle in \cite{Bouchard-Touzi}. For the insider's optimization problem, the weak dynamic programming principle reads:
 \begin{itemize}
  \item[WDPP i)] For any $[t,1]$-valued stopping time $\tau$,
  \[V(y,t) \leq \sup_{(X^B, X^S)}\expec^{y,t}\bra{V^*(\tau, Y_\tau) + \mathcal{I}^n_{t, \tau}}.\]
  \item[WDPP ii)] For any $[t,1]$-valued stopping time $\tau$ and any upper-semicontinuous function $\varphi$ on $\Integer \times [0,1]$ such that $V\geq \varphi$, then
      \[
       V(y,t) \geq \sup_{(X^B, X^S)}\expec^{y,t}\bra{\varphi(\tau, Y_\tau) + \mathcal{I}^n_{t,\tau}}.
      \]
 \end{itemize}
 Conditions A1, A2, and A3 from Assumption A in \cite{Bouchard-Touzi} are clearly satisfied in the current context. Condition A4 from Assumption A can be verified following the same argument in \cite[Proposition 5.4]{Bouchard-Touzi}. Therefore aforementioned weak dynamic programming principle holds. Hence the value function is a viscosity solution to \eqref{eq: HJB} following from arguments similar to \cite[Section 5.2]{Bouchard-Touzi}.
\end{rem}

Now the proof of Proposition \ref{prop: vis soln} ii) is presented. To prove $(v_n, y, t, V) \in \textit{dom}(H)$, observe from the viscosity supersolution property of $V$ that $H(v_n, y, t, V_*) <\infty$, hence $(v_n, y, t, V_*)\in \text{dom}(H)$. On the other hand, for any integrable intensities $\theta^{i,j}$, $i\in \{B,S\}$ and $j\in \{B,T,S\}$, due to Definition \ref{def: insider ad} iv), one can show $\expec^{y,t}[\mathcal{I}^{n}_{t,1}]$
is a continuous function in $t$. As a supremum of a family of continuous function (cf. \eqref{eq: value func}), $V$ is then lower-semicontinuous in $t$. Therefore $V_*\equiv V$, which implies $(v_n, y, t, V) \in \text{dom}(H)$ for any $v_n$, $(y,t)\in \Integer \times [0,1)$. It then follows from \eqref{eq: buy ineq} and \eqref{eq: sell ineq} that
\begin{equation}\label{eq: ul V}
 V(y-1,t) + p(y-1, t) - v_n\leq V(y,t) \leq V(y-1,t) + p(y,t) - v_n, \quad \text{ for any } (y,t)\in \Integer \times [0,1).
\end{equation}
Taking limit supremum in $t$ in the previous inequalities and utilizing the continuity of $t\mapsto p(y,t)$, it follows that the previous inequalities still hold when $V$ is replaced by $V^*$, which means $(v_n, y, t, V^*)\in \text{dom}(H)$ for any $v_n$, $(y,t)\in \Integer \times [0,1)$.  As a result, $H(v_n, y, t, V_*)$ and $H(v_n, y, t, V^*)$ have the reduced form \eqref{eq: H red} where $V$ is replaced by $V_*$ and $V^*$, respectively. Hence Definition \ref{def: vis soln} implies that $V$ is a viscosity solution of \eqref{eq: red HJB}.

To prove Proposition \ref{prop: vis soln} iii) and iv), let us first derive a comparison result for \eqref{eq: red HJB}. The function $v: \Integer \times [0,1] \rightarrow \Real$ has at most polynomial growth in its first variable if there exist $C$ and $n$ such that $|v(y,t)| \leq C(1+|y|^n)$, for any $(y,t)\in \Integer \times [0,1]$.

\begin{lem}\label{lem: vis comparison}
Assume that $u$ (resp. $v$) has at most polynomial growth and that it is  upper-semicontinuous viscosity subsolution (resp. lower-semicontinuous supersolution) to \eqref{eq: red HJB}. If $u(\cdot, 1)\leq v(\cdot, 1)$, then $u\leq v$ in $\Integer \times [0,1)$.
\end{lem}

Assume this comparison result for a moment. Inequalities \eqref{eq: ul V} and Assumption \ref{ass: p poly} combined imply that $V$ is of at most polynomial growth. Then Lemma \ref{lem: vis comparison} and \eqref{eq: envelope} combined yield $V_*\leq V^* \leq V_*$, which implies the continuity of $t\mapsto V(y,t)$, hence Proposition \ref{prop: vis soln} iii) is verified. On the other hand, one can prove $\tilde{V}(y,t) := \expec^{y,t} \bra{V(Z_1, 1)}$ is of at most polynomial growth and is another viscosity solution to \eqref{eq: red HJB}\footnote{Write $\tilde{V}(y,t)= \expec^0\bra{V(Z_{1-t}+y, 1)}$. One can utilize the Markov property of $Z$ to show that $\tilde{V}$ is continuous differentiable and $\tilde{V}$ is a classical solution to \eqref{eq: red HJB}.}.  Then Lemma \ref{lem: vis comparison} yields
\[
 V(y,t) = \tilde{V}(y,t) = \expec^{y,t}\bra{V(Z_1,1)},
\]
which confirms Proposition \ref{prop: vis soln} iv) via the Markov property of $Z$.

\begin{proof}[Proof of Lemma \ref{lem: vis comparison}]
 For $\lambda>0$, define $\tilde{u}=e^{\lambda t} u$ and $\tilde{v} = e^{\lambda t} v$. One can check $\tilde{u}$ (resp. $\tilde{v}$) is a viscosity subsolution (resp. supersolution) to
 \begin{equation}\label{eq: HJB lambda}
  -w_t + \lambda w - \pare{w(y+1, t) - 2w(y, t) + w(y-1, t)} \beta =0.
 \end{equation}
 Since the comparison result for \eqref{eq: HJB lambda} implies the comparison result for \eqref{eq: HJB}, it suffices to consider $u$ (resp. $v$) as the viscosity subsolution (resp. supersolution) of \eqref{eq: HJB lambda}.

 Let $C$ and $n$ be constants such that $|u|, |v| \leq C(1+ |y|^n)$ on $\Integer \times [0,1]$. Consider $\psi(y,t) = e^{-\alpha t}(y^{2n} + \tilde{C})$ for some constants $\alpha$ and $\tilde{C}$. It follows
 \[
  -\psi_t + \lambda \psi + \pare{\psi(y+1, t) - 2\psi(y,t) + \psi(y-1, t)} \beta > e^{-\alpha t} \pare{(\alpha + \lambda)(y^{2n} + \tilde{C}) - 2\beta y^{2n}}>0,
 \]
 when $\alpha + \lambda > 2 \beta$. Choosing $\alpha$ satisfying the previous inequality, then $v+\xi\psi$, for any $\xi>0$, is a viscosity supersolution to \eqref{eq: HJB lambda}.
 Once we show $u\leq v+ \xi \psi$, the statement of the lemma then follows after sending $\xi\downarrow 0$.

 Since both $u$ and $v$ have at most linear growth
 \begin{equation}\label{eq: limit u-v}
  \lim_{|y|\rightarrow \infty} (u-v-\xi \psi)(y,t) = -\infty.
 \end{equation}
 Replacing $v$ by $v+\xi \psi$, we can assume that $u$ (resp. $v$) is a viscosity subsolution (resp. supersolution) to \eqref{eq: HJB lambda} and
 \[
  \sup_{\Integer \times [0,1]}(u-v) = \sup_{\mathcal{O} \times [0,1]}(u-v), \quad \text{ for some compact set } \mathcal{O}\subset \Integer.
 \]
 Then $u\leq v$ follows from the standard argument in viscosity solutions (cf. e.g. \cite[Theorem 4.4.4]{Pham}), which we briefly recall below.

 Assume $M:= \sup_{\Integer \times [0,1]}(u-v) = \sup_{\mathcal{O} \times [0,1]}(u-v)>0$ and the maximum is attained at $(\overline{x}, \overline{t})\in \mathcal{O}\times [0,1]$. For any $\epsilon>0$, define
 \[
  \Phi_\epsilon(x,y,t,s) := u(x,t) - v(y,s) -\phi_\epsilon(x,y,t,s), \quad \text{ where } \phi_\epsilon(x,y,t,s) := \frac{1}{\epsilon}[|x-y|^2 + |t-s|^2].
 \]
 The upper-semicontinuous function $\Phi_\epsilon$ attains its maximum, denoted by $M_\epsilon$, at $(x_\epsilon, y_\epsilon, t_\epsilon, s_\epsilon)$. One can show, using the same argument as in \cite[Theorem 4.4.4]{Pham},
 \[
  M_\epsilon \rightarrow M \quad \text{ and } \quad  (x_\epsilon, y_\epsilon, t_\epsilon, s_\epsilon) \rightarrow (\overline{x}, \overline{x}, \overline{t}, \overline{t})\in \mathcal{O}^2\times[0,1]^2 \quad \text{as } \epsilon \downarrow 0.
 \]
 Here $(x_\epsilon, y_\epsilon, t_\epsilon, s_\epsilon) \in \mathcal{O}^2\times [0,1]^2$ for sufficiently small $\epsilon$. Now observe that
 \begin{itemize}
  \item $(x_\epsilon, t_\epsilon)$ is a local maximum of $(x,t) \mapsto u(x,t) - \phi_\epsilon (x, y_\epsilon, t, s_\epsilon)$;
  \item $(y_\epsilon, s_\epsilon)$ is a local minimum of $(y,t) \mapsto v(y,s) + \phi_\epsilon (x_\epsilon, y, t_\epsilon, s)$.
 \end{itemize}
 Then the viscosity subsolution property of $u$ and the supersolution property of $v$ imply, respectively,
 \begin{align*}
  &-\frac{2}{\epsilon}(t_\epsilon - s_\epsilon) + \lambda u(x_\epsilon, t_\epsilon) - \pare{u(x_\epsilon + 1, t_\epsilon) - 2u(x_\epsilon, t_\epsilon) + u(x_\epsilon, t_\epsilon)} \beta \leq 0,\\
  & -\frac{2}{\epsilon}(t_\epsilon - s_\epsilon) + \lambda v(y_\epsilon, s_\epsilon) - \pare{u(y_\epsilon + 1, s_\epsilon) - 2v(y_\epsilon, s_\epsilon) + v(y_\epsilon, s_\epsilon)} \beta \geq 0.
 \end{align*}
 Taking difference of the previous inequalities yields
 \[
  (\lambda + 2\beta) (u(x_\epsilon, t_\epsilon) - v(y_\epsilon, s_\epsilon)) \leq \beta \pare{u(x_\epsilon + 1, t_\epsilon) + u(x_\epsilon -1, t_\epsilon)} - \beta \pare{v(y_\epsilon + 1, s_\epsilon) + v(y_\epsilon -1, s_\epsilon)}.
 \]
 Sending $\epsilon \downarrow 0$ on both sides, we obtain
 \[
  (\lambda + 2\beta) M = (\lambda + 2 \beta) u(\overline{x}, \overline{t}) \leq \beta \pare{u(\overline{x}+1, \overline{t})- v(\overline{x}+1, \overline{t})} + \beta \pare{u(\overline{x}-1, \overline{t})- v(\overline{x}-1, \overline{t})} \leq 2\beta M,
 \]
 which contradicts with $\lambda M >0$.

\end{proof}

\bibliographystyle{siam}
\bibliography{biblio}

\end{document}